\newtheorem*{problem*}{Problem}
\newtheorem*{theorem*}{Theorem}
\newtheorem*{lemma*}{Lemma}
\newtheorem*{definition*}{Definition}
\newtheorem*{corollary*}{Corollary}
\newtheorem*{remark*}{Remark}
\newcommand\andrei[1]{\textcolor{blue}{#1}}
\title{A connection between String Covers and Cover Deterministic Finite Tree Automata Minimization}
\titlerunning{String Covers and Cover Deterministic Finite Tree Automata}%optional, please use if title is longer than one line
\author{Alexandru Popa}{University of Bucharest, Bucharest, Romania \\ National Insitute of Research and Development in Informatics, Bucharest, Romania}{alexandru.popa@fmi.unibuc.ro}{}{}%mandatory, please use full name; only 1 author per \author macro; first two parameters are mandatory, other parameters can be empty.
\author{Andrei T\u{a}n\u{a}sescu}{Politehnica University of Bucharest, Bucharest, Romania}{andrei.tanasescu@mail.ru}{}{}
\authorrunning{A., Popa and A., T\u an\u asescu}%mandatory. First: Use abbreviated first/middle names. Second (only in severe cases): Use first author plus 'et al.'
\subjclass{Graph algorithms analysis}% mandatory: Please choose ACM 2012 classifications from https://www.acm.org/publications/class-2012 or https://dl.acm.org/ccs/ccs_flat.cfm . E.g., cite as "General and reference $\rightarrow$ General literature" or \ccsdesc[100]{General and reference~General literature}. 
\keywords{string cover, periodicity, polynomial time exact algorithms, automata}%mandatory
\begin{document}

\maketitle{}

\begin{abstract}
Data compression plays a crucial part in the  cloud based systems of today. One the fundaments of compression is quasi-periodicity, for which there are several models. We build upon the most popular quasi-periodicity model for strings, i.e., covers, generalizing it to trees. 

We introduce a new type of cover automata, which we call \textbf{D}eterministic \textbf{T}ree \textbf{A}utomata. Then, we formulate a cover problem on these DTA and study its complexity, in both sequential and parallel settings. We obtain bounds for the Cover Minimization Problem. Along the way, we uncover an interesting application, the Shortest Common Cover Problem, for which we give an optimal solution.
\end{abstract}

\section{Introduction}

Redundancy is an important phenomenon in engineering and computer science ~\cite{ming1990kolmogorov,muchnik2003almost}. One of the most important aspects of redundancy is periodicity. In turn, periodicity is a very important phenomenon when analyzing physical data such as an analogue signal. In general, natural data is very redundant or repetitive and exhibits some key patterns or regularities~\cite{HAVLIN1995171,timmermans2017cyclical,tychonoff1935theoremesL} which we may assert to be the actual $intention$~\cite{searle1980speech} of the data. Periodicity itself has been thoroughly studied in various fields such as Signal Processing~\cite{sethares1999periodicity}, Bioinformatics~\cite{brodzik2007quaternionic}, Dynamical Systems~\cite{katok1997introduction} and Control Theory~\cite{bacciotti2006liapunov}, each bringing its own insights. 

Due to the inherent imperfection of natural data, it is highly unlikely that the data is periodic. In fact, the data is \emph{almost}(\emph{quasi-}) periodic\cite{ApostolicoB97}. This has been firstly studied over strings, the most general representation of digital data~\cite{middlestead2017digital}. 

For example, assume that we  want to send the word $aba$ over a noisy channel as a digital signal where letters are modulated using \emph{amplitude shift keying}~\cite{middlestead2017digital}. Since, the simple transmission is unlikely to  yield the result due to the imperfect transmission channel, we add redundancy and thus send the word $aba$ multiple times. However, when errors occur the received signal only partially retains its periodicity. 

\begin{figure}
\noindent\begin{minipage}{\textwidth}
\begin{minipage}[c][4cm][c]{\dimexpr0.5\textwidth-5pt\relax}
\resizebox{\textwidth}{!}{
\begin{tikzpicture}
	\draw[thick, black] (-1,0) -- (0,0);
    \draw[thick, black] (12.3,0) -- (13.3,0);

    \draw[thick, black] (0,0) sin (0.25,2) cos (0.5,0) sin (0.75,-2) cos (1,0) -- (1.1,0);
    \draw[thick, black] (1.1,0) sin (1.35,1) cos (1.6,0) sin (1.85,-1) cos (2.1,0) -- (2.2,0);
    \draw[thick, black] (2.2,0) sin (2.45,2) cos (2.7,0) sin (2.95,-2) cos (3.2,0) -- (3.4,0);
    
    \draw[thick, black] (3.4,0) sin (3.65,2) cos (3.9,0) sin (4.15,-2) cos (4.4,0) -- (4.5,0);
    \draw[thick, black] (4.5,0) sin (4.75,1) cos (5.0,0) sin (5.25,-1) cos (5.5,0) -- (5.6,0);
    \draw[thick, black] (5.6,0) sin (5.85,2) cos (6.1,0) sin (6.35,-2) cos (6.6,0) -- (6.8,0);
    
    \draw[thick, black] (6.8,0) sin (7.05,2) cos (7.3,0) sin (7.55,-2) cos (7.8,0) -- (7.9,0);
    \draw[thick, black] (7.9,0) sin (8.15,1) cos (8.4,0) sin (8.65,-1) cos (8.9,0) -- (9.0,0);
    \draw[thick, black] (9.0,0) sin (9.25,2) cos (9.5,0) sin (9.75,-2) cos (10.0,0) -- (10.1,0);
    
    \draw[thick, dashed, black] (9.0,0) -- (9.1,0) sin (9.35,2) cos (9.6,0) sin (9.85,-2) cos (10.1,0) -- (10.2,0);
    \draw[thick, black] (10.2,0) sin (10.45,1) cos (10.7,0) sin (10.95,-1) cos (11.2,0) -- (11.3,0);
    \draw[thick, black] (11.3,0) sin (11.55,2) cos (11.8,0) sin (12.05,-2) cos (12.3,0);

	\draw (0.5,2) node [anchor=south] {$a$};
	\draw (1.6,2) node [anchor=south] {$b$};
	\draw (2.7,2) node [anchor=south] {$a$};
    
	\draw (3.9,2) node [anchor=south] {$a$};
	\draw (5.0,2) node [anchor=south] {$b$};
	\draw (6.1,2) node [anchor=south] {$a$};
    
	\draw (7.3,2) node [anchor=south] {$a$};
	\draw (8.4,2) node [anchor=south] {$b$};
	\draw (9.5,2) node [anchor=south] {$a$};
    
	\draw (10.7,2) node [anchor=south] {$b$};
	\draw (11.8,2) node [anchor=south] {$a$};

\end{tikzpicture}}
\end{minipage}\hfill
\begin{minipage}[c][4cm][c]{\dimexpr0.5\textwidth-5pt\relax}
\resizebox{\textwidth}{!}{
\begin{tikzpicture}
	\draw[thick, black] (-1,0) -- (0,0);
    \draw[thick, black] (12.3,0) -- (13.3,0);

    \draw[thick, black] (0,0) sin (0.25,2) cos (0.5,0) sin (0.75,-2) cos (1,0) -- (1.1,0);
    \draw[thick, black] (1.1,0) sin (1.35,1) cos (1.6,0) sin (1.85,-1) cos (2.1,0) -- (2.2,0);
    \draw[thick, black] (2.2,0) sin (2.45,2) cos (2.7,0) sin (2.95,-2) cos (3.2,0) -- (3.4,0);
    
    \draw[thick, black] (3.4,0) sin (3.65,2) cos (3.9,0) sin (4.15,-2) cos (4.4,0) -- (4.5,0);
    \draw[thick, black] (4.5,0) sin (4.75,1) cos (5.0,0) sin (5.25,-1) cos (5.5,0) -- (5.6,0);
    \draw[thick, black] (5.6,0) sin (5.85,2) cos (6.1,0) sin (6.35,-2) cos (6.6,0) -- (6.8,0);
    
    \draw[thick, black] (6.8,0) sin (7.05,2) cos (7.3,0) sin (7.55,-2) cos (7.8,0) -- (7.9,0);
    \draw[thick, black] (7.9,0) sin (8.15,1) cos (8.4,0) sin (8.65,-1) cos (8.9,0) -- (9.0,0);
    \draw[thick, black] (9.0,0) sin (9.3,3.9) cos (9.55,0) sin (9.8,-3.9) cos (10.1,0) -- (10.2,0);
    \draw[thick, dashed, blue] (9.0,0) sin (9.25,2) cos (9.5,0) sin (9.75,-2) cos (10.0,0) -- (10.2,0);
    
    \draw[thick, dashed, red] (9.0,0) -- (9.1,0) sin (9.35,2) cos (9.6,0) sin (9.85,-2) cos (10.1,0) -- (10.2,0);
    \draw[thick, black] (10.2,0) sin (10.45,1) cos (10.7,0) sin (10.95,-1) cos (11.2,0) -- (11.3,0);
    \draw[thick, black] (11.3,0) sin (11.55,2) cos (11.8,0) sin (12.05,-2) cos (12.3,0);

	\draw (0.5,2) node [anchor=south] {$a$};
	\draw (1.6,2) node [anchor=south] {$b$};
	\draw (2.7,2) node [anchor=south] {$a$};
    
	\draw (3.9,2) node [anchor=south] {$a$};
	\draw (5.0,2) node [anchor=south] {$b$};
	\draw (6.1,2) node [anchor=south] {$a$};
    
	\draw (7.3,2) node [anchor=south] {$a$};
	\draw (8.4,2) node [anchor=south] {$b$};
	\draw (9.5,4) node [anchor=south] {$c$};
    
	\draw (10.7,2) node [anchor=south] {$b$};
	\draw (11.8,2) node [anchor=south] {$a$};

\end{tikzpicture}}
\end{minipage}

\begin{minipage}[c][4em][t]{\dimexpr0.5\textwidth-5pt\relax}
\captionof{figure}{The string $aba$ sent repeatedly over a channel as an ASK signal, with a desynchronization moment}
\end{minipage}\hfill
\begin{minipage}[c][4em][t]{\dimexpr0.5\textwidth-5pt\relax}
\captionof{figure}{The string $aba$ sent repeatedly over a channel as an ASK signal, with an echo}
\end{minipage}\hfill
\end{minipage}
\end{figure}

\subsection{Stringology}

Quasi-periodicity was introduced by Ehrenfeucht in 1990 (according to ~\cite{ApostolicoB97}), in a Tech Report for Purdue University, even though it was not published in Elsevier until 1993~\cite{apostolico1993efficient}. The first paper that considered quasi-periodicity in Computer Science~\cite{ApostolicoFI91} defined the quasi-period of a string to be the length of its shortest cover and presented an algorithm for computing it, linear in both time and space. This attracted the attention of various researchers ~\cite{breslauer1992line,breslauer1994testing,li2002computing,moore1994optimal,moore1995correction} and the first decade of results can be found in several surveys ~\cite{apostolico1997periods,kociumaka2015fast,kolpakov2003finding}.

However, quasi-periodicity takes many forms, depending on the type of patterns we want to recover. Further work has been concerned with different descriptors such as seeds~\cite{guo2006computing}, the maximum quasi-periodic substring~\cite{pedersen2000finding}, k-covers~\cite{cole2005complexity}, $\lambda$-covers~\cite{guo2006computing}, enhanced covers~\cite{flouri2013enhanced}, partial covers~\cite{kociumaka2015fast}. Another variation point is the context, e.g. indeterminate strings~\cite{antoniou2008conservative} or weighted sequences~\cite{christodoulakis2006computation}. Some of the related problems are $\mathcal{NP}$-hard, e.g.~\cite{AmirLLP17}.

For some applications such as molecular biology and computer-assisted musical analysis the definition of periodicity is loosened. Thus, quasi-periodicity takes the form of \emph{approximate repetitions}. We may define an approximatively repeating pattern as a substring whose occurrences leave very few gaps, or that all repetitions are near an ``original'' source. Landau and Schmidt were the first to study this form of quasi-periodicity and focused on approximate tandem repeats~\cite{landau2001algorithm}.

\begin{definition}[String Covers]
Given a string $w$ over an alphabet, $\Sigma$, a string $s$ covers $w$ if any character of $w$ belongs to some occurrence of $s$ in $w$.
\end{definition}

Determining the shortest cover of a given string $w$ is called the Minimal \textbf{S}tring \textbf{C}over \textbf{P}roblem (\textbf{SCP} for short). Apostolico et al. prove that \textbf{SCP} is solvable in linear time \cite{ApostolicoFI91}. 

However, sometimes it might be that overlaps lead to strange behaviour such as corruptions or specific occurences getting corrupted. Nonetheless, if the number of deviations is small we may try to recover the original cover. This leads to the following definition.

\begin{definition}[Approximate String Covers]
Given a string $w$ over an alphabet, $\Sigma$, a string $s$ is the approximate string cover of $w$ if it covers $w^\prime$, the closest string to $w$, under some metric, that admits a cover.
\end{definition}

Determining the approximate cover of a given string $w$ is called the \textbf{A}pproximate String \textbf{C}over \textbf{P}roblem (\textbf{ACP} for short). Amir et al. prove that \textbf{ACP} is $\mathcal{NP}$-hard with respect to the Hamming distance~\cite{AmirLLP17}. 

Determining the string cover can be done in linear time and space while determining the approximate string cover is an $\mathcal{NP}$-hard problem. Thus, a simple variation in the definition greatly impacts its difficulty. On the other hand, the more we deviate from the original definition the richer the encoded meaning is. As we explain previously, while string covers encode desynchronization, they cannot encode corruption, for which we must use approximate string covers. Of particular interest is the information we can encode in such a cover problem while keeping it polynomial-time solvable. 

\subsection{A Formal Language Approach}
Formal languages present a natural playing field for the task at hand. The most particular class of formal languages is that of \emph{finite languages} i.e. the class $\mathcal{L}_{finite}\left(\Sigma\right)=\lbrace\mathcal{L}\subseteq\Sigma^*\vert\exists n\in\mathbb{N},\,\lvert\mathcal{L}\rvert\leq n\rbrace$. In particular, these languages are also \emph{regular}. For example, a language $\mathcal{L}_{finite}\left(\Sigma\right)\ni\mathcal{L}=\lbrace w_i\vert i\in\overline{1,\,\lvert\mathcal{L}\rvert}\rbrace$ can be described by the regular expression $E_\mathcal{L}=\left(w_1\right)\vert\left(w_2\right)\vert\dots\vert\left(w_{\lvert\mathcal{L}\rvert}\right)$. 

Yet a fundamental result is that this expression $E_\mathcal{L}$ has an associated \emph{trimmed deterministic real-time finite state automaton}, i.e. a tuple $\mathfrak{A}_\mathcal{L}=\left(Q,\,\Sigma,\,\delta,\,q_0,\,F\right)$ with $Q$ \emph{finite} a set of \emph{states},\, $\delta:Q\times \Sigma\rightarrow Q$ a partial transition function, $q_0\in Q$ such that $\forall w\in\Sigma^*,\, w\in\mathcal{L}\Leftrightarrow\delta\left(q_0,\,w\right)=\delta\left(\delta\left(\dots\delta\left(q_0,\,w^1\right)\dots,\,w^{\lvert w\rvert-1}\right),\,w^{\lvert w\rvert}\right)\in F$ discriminates (\emph{accepts} or \emph{rejects}) words, and, additionally, any state can potentially accept i.e. $\forall q\in Q\,\exists w\in\Sigma^*,\, \delta\left(q,\,w\right)\in F$. The automaton is \emph{real time} in that \emph{characters} ($w^i$) are processed one at a time, \emph{deterministic} in that a character is always processed the same way for a given state, and \emph{trimmed} because we have no useless states (effectively equivalent with the sink state $\bot$ in complete automata). The language $\mathcal{L}$ is encoded as the set of (\emph{possible infinite}) set of paths between $q_0$ and $F$.

\begin{figure}[!htb]
\begin{minipage}{0.4\textwidth}
\vspace*{2cm}

\resizebox{\textwidth}{!}{

\begin{tikzpicture}
\draw[->, thick] (-0.7,0) -- (-0.3,0);
\draw (0,0) circle (0.3);
\node at (0,0) {$q_0$};
\node[anchor=south] at (0.5,0) {$a$};
\draw[->, thick] (0.3,0) -- (0.7,0);
\draw (1,0) circle (0.3);
\node at (1,0) {$q_1$};
\node[anchor=south] at (1.5,0) {$b$};
\draw[->, thick] (1.3,0) -- (1.7,0);
\draw (2,0) circle (0.3);
\node at (2,0) {$q_2$};
\node[anchor=south] at (2.5,0) {$a$};
\draw[->, thick] (2.3,0) -- (2.7,0);
\draw (3,0) circle (0.3);
\draw (3,0) circle (0.25);
\node at (3,0) {$q_3$};
\node[anchor=south] at (2.5,0.6) {$b$};
\draw[->, thick] (3,0.3) to [out=90, in=90] (2,0.3);
\node[anchor=north] at (2,-0.6) {$a$};
\draw[->, thick] (3,-0.3) to [out=-150, in=-30] (1,-0.3);
\end{tikzpicture}}

\vspace*{3.9cm}
\caption{An automaton accepting the regular expression $E_\mathcal{L}=\left(aba\right)\left(\left(aba\right)^*\left(ba\right)^*\right)^*$ i.e. the language $\mathcal{L}=\lbrace w\in\Sigma^*\vert w \textmd{ is covered by } aba\rbrace$} 
\end{minipage}
\hspace{0.15cm}
\begin{minipage}{0.55\textwidth}
\resizebox{\textwidth}{!}{

\begin{tikzpicture}
\draw[->, thick] (-0.7,0) -- (-0.3,0);
\draw (0,0) circle (0.3);
\node at (0,0) {$q_0$};
\node[anchor=south,red] at (0.5,0) {$a$};
\draw[->, thick,red] (0.3,0) -- (0.7,0);
\draw (1,0) circle (0.3);
\node at (1,0) {$q_1$};
\node[anchor=south,red] at (1.5,0) {$b$};
\draw[->, thick,red] (1.3,0) -- (1.7,0);
\draw (2,0) circle (0.3);
\node at (2,0) {$q_2$};
\node[anchor=south,red] at (2.5,0) {$a$};
\draw[->, thick,red] (2.3,0) -- (2.7,0);
\draw (3,0) circle (0.3);
\draw (3,0) circle (0.25);
\node at (3,0) {$q_3$};
\node[anchor=south, rotate=36.83,red] at (3.5,0.3) {$a$};
\draw[->, thick,red] (3.3,0) -- (3.7,0.6);
\draw (4,0.6) circle (0.3);
\node at (4,0.6) {$q_4^1$};
\node[anchor=south, rotate=-36.83] at (3.5,-0.3) {$b$};
\draw[->, thick] (3.3,0) -- (3.7,-0.6);
\draw (4,-0.6) circle (0.3);
\node at (4,-0.6) {$q_4^2$};
\node[anchor=south,red] at (4.5,0.6) {$b$};
\draw[->, thick,red] (4.3,0.6) -- (4.7,0.6);
\draw (5,0.6) circle (0.3);
\node at (5,0.6) {$q_5^1$};
\node[anchor=south] at (4.5,-0.6) {$a$};
\draw[->, thick] (4.3,-0.6) -- (4.7,-0.6);
\draw (5,-0.6) circle (0.25);
\draw (5,-0.6) circle (0.3);
\node at (5,-0.6) {$q_5^2$};
\node[anchor=south,red] at (5.5,0.6) {$a$};
\draw[->, thick,red] (5.3,0.6) -- (5.7,0.6);
\draw (6,0.6) circle (0.3);
\draw (6,0.6) circle (0.25);
\node at (6,0.6) {$q_6^1$};
\node[anchor=south] at (5.5,-0.6) {$a$};
\draw[->, thick] (5.3,-0.6) -- (5.7,-0.6);
\draw (6,-0.6) circle (0.3);
\node at (6,-0.6) {$q_6^2$};
\node[anchor=south,rotate=-71.56] at (5.5,-1.2) {$b$};
\draw[->, thick] (5.3,-0.6) -- (5.7,-1.8);
\draw (6,-1.8) circle (0.3);
\node at (6,-1.8) {$q_6^3$};
\node[anchor=south,rotate=71.56,red] at (6.5,1.2) {$a$};
\draw[->, thick,red] (6.3,0.6) -- (6.7,1.8);
\draw (7,1.8) circle (0.3);
\node at (7,1.8) {$q_7^1$};
\node[anchor=south] at (6.5,0.6) {$b$};
\draw[->, thick] (6.3,0.6) -- (6.7,0.6);
\draw (7,0.6) circle (0.3);
\node at (7,0.6) {$q_7^2$};
\node[anchor=south] at (6.5,-0.6) {$b$};
\draw[->, thick] (6.3,-0.6) -- (6.7,-0.6);
\draw (7,-0.6) circle (0.3);
\node at (7,-0.6) {$q_7^3$};
\node[anchor=south] at (6.5,-1.8) {$a$};
\draw[->, thick] (6.3,-1.8) -- (6.7,-1.8);
\draw (7,-1.8) circle (0.3);
\draw (7,-1.8) circle (0.25);
\node at (7,-1.8) {$q_7^4$};
\node[anchor=south,red] at (7.5,1.8) {$b$};
\draw[->, thick,red] (7.3,1.8) -- (7.7,1.8);
\draw (8,1.8) circle (0.3);
\node at (8,1.8) {$q_8^1$};
\node[anchor=south] at (7.5,0.6) {$a$};
\draw[->, thick] (7.3,0.6) -- (7.7,0.6);
\draw (8,0.6) circle (0.3);
\draw (8,0.6) circle (0.25);
\node at (8,0.6) {$q_8^2$};
\node[anchor=south] at (7.5,-0.6) {$a$};
\draw[->, thick] (7.3,-0.6) -- (7.7,-0.6);
\draw (8,-0.6) circle (0.3);
\draw (8,-0.6) circle (0.25);
\node at (8,-0.6) {$q_8^3$};
\node[anchor=south] at (7.5,-1.8) {$a$};
\draw[->, thick] (7.3,-1.8) -- (7.7,-1.8);
\draw (8,-1.8) circle (0.3);
\node at (8,-1.8) {$q_8^4$};
\node[anchor=south,rotate=-71.56] at (7.5,-2.4) {$b$};
\draw[->, thick] (7.3,-1.8) -- (7.7,-3);
\draw (8,-3) circle (0.3);
\node at (8,-3) {$q_8^5$};
\node[anchor=south,rotate=71.56,red] at (8.5,2.4) {$a$};
\draw[->, thick,red] (8.3,1.8) -- (8.7,3);
\draw (9,3) circle (0.3);
\draw (9,3) circle (0.25);
\node at (9,3) {$q_9^1$};
\node[anchor=south,rotate=71.56] at (8.5,1.2) {$a$};
\draw[->, thick] (8.3,0.6) -- (8.7,1.8);
\draw (9,1.8) circle (0.3);
\node at (9,1.8) {$q_9^2$};
\node[anchor=south] at (8.5,0.6) {$b$};
\draw[->, thick] (8.3,0.6) -- (8.7,0.6);
\draw (9,0.6) circle (0.3);
\node at (9,0.6) {$q_9^3$};
\node[anchor=south] at (8.5,-0.6) {$a$};
\draw[->, thick] (8.3,-0.6) -- (8.7,-0.6);
\draw (9,-0.6) circle (0.3);
\node at (9,-0.6) {$q_9^4$};
\node[anchor=south,rotate=-71.56] at (8.5,-1.2) {$b$};
\draw[->, thick] (8.3,-0.6) -- (8.7,-1.8);
\draw (9,-1.8) circle (0.3);
\node at (9,-1.8) {$q_9^5$};
\node[anchor=south,rotate=-71.56] at (8.5,-2.4) {$b$};
\draw[->, thick] (8.3,-1.8) -- (8.7,-3);
\draw (9,-3) circle (0.3);
\node at (9,-3) {$q_9^6$};
\node[anchor=south,rotate=-71.56] at (8.5,-3.6) {$a$};
\draw[->, thick] (8.3,-3) -- (8.7,-4.2);
\draw (9,-4.2) circle (0.3);
\draw (9,-4.2) circle (0.25);
\node at (9,-4.2) {$q_9^7$};
\node[anchor=south,rotate=71.56] at (9.5,3.6) {$a$};
\draw[->, thick, gray, dashed] (9.3,3) -- (9.7,4.2);
\draw[thick, gray, dashed] (10,4.2) circle (0.3);
\node at (10,4.2) {$q_{10}^1$};
\node[anchor=south,red] at (9.5,3) {$b$};
\draw[->, thick,red] (9.3,3) -- (9.7,3);
\draw (10,3) circle (0.3);
\node at (10,3) {$q_{10}^2$};
\node[anchor=south] at (9.5,1.8) {$b$};
\draw[->, thick] (9.3,1.8) -- (9.7,1.8);
\draw (10,1.8) circle (0.3);
\node at (10,1.8) {$q_{10}^3$};
\node[anchor=south] at (9.5,0.6) {$a$};
\draw[->, thick] (9.3,0.6) -- (9.7,0.6);
\draw (10,0.6) circle (0.3);
\draw (10,0.6) circle (0.25);
\node at (10,0.6) {$q_{10}^4$};
\node[anchor=south] at (9.5,-0.6) {$b$};
\draw[->, thick] (9.3,-0.6) -- (9.7,-0.6);
\draw (10,-0.6) circle (0.3);
\node at (10,-0.6) {$q_{10}^5$};
\node[anchor=south] at (9.5,-1.8) {$a$};
\draw[->, thick] (9.3,-1.8) -- (9.7,-1.8);
\draw (10,-1.8) circle (0.3);
\draw (10,-1.8) circle (0.25);
\node at (10,-1.8) {$q_{10}^6$};
\node[anchor=south] at (9.5,-3) {$a$};
\draw[->, thick] (9.3,-3) -- (9.7,-3);
\draw (10,-3) circle (0.3);
\draw (10,-3) circle (0.25);
\node at (10,-3) {$q_{10}^7$};
\node[anchor=south] at (9.5,-4.2) {$a$};
\draw[->, thick, gray, dashed] (9.3,-4.2) -- (9.7,-4.2);
\draw[thick, gray, dashed] (10,-4.2) circle (0.3);
\node at (10,-4.2) {$q_{10}^8$};
\node[anchor=south,rotate=-71.56] at (9.5,-4.8) {$b$};
\draw[->, thick] (9.3,-4.2) -- (9.7,-5.4);
\draw (10,-5.4) circle (0.3);
\node at (10,-5.4) {$q_{10}^9$};
\node[anchor=south,rotate=71.56] at (10.5,4.8) {$b$};
\draw[->, thick, gray, dashed] (10.3,4.2) -- (10.7,5.4);
\draw[thick, gray, dashed] (11,5.4) circle (0.3);
\node at (11,5.4) {$q_{11}^1$};
\node[anchor=south,rotate=71.56,red] at (10.5,3.6) {$a$};
\draw[->, thick] (10.3,3) -- (10.7,4.2);
\draw (11,4.2) circle (0.3);
\draw (11,4.2) circle (0.25);
\node at (11,4.2) {$q_{11}^2$};
\node[anchor=south,rotate=71.56] at (10.5,2.4) {$a$};
\draw[->, thick] (10.3,1.8) -- (10.7,3);
\draw (11,3) circle (0.3);
\draw (11,3) circle (0.25);
\node at (11,3) {$q_{11}^3$};
\node[anchor=south,rotate=71.56] at (10.5,1.2) {$a$};
\draw[->, thick, gray, dashed] (10.3,0.6) -- (10.7,1.8);
\draw[thick, gray, dashed] (11,1.8) circle (0.3);
\node at (11,1.8) {$q_{11}^4$};
\node[anchor=south] at (10.5,0.6) {$b$};
\draw[->, thick, gray, dashed] (10.3,0.6) -- (10.7,0.6);
\draw[thick, gray, dashed] (11,0.6) circle (0.3);
\node at (11,0.6) {$q_{11}^5$};
\node[anchor=south] at (10.5,-0.6) {$a$};
\draw[->, thick] (10.3,-0.6) -- (10.7,-0.6);
\draw (11,-0.6) circle (0.3);
\draw (11,-0.6) circle (0.25);
\node at (11,-0.6) {$q_{11}^6$};
\node[anchor=south] at (10.5,-1.8) {$a$};
\draw[->, thick, gray, dashed] (10.3,-1.8) -- (10.7,-1.8);
\draw[thick, gray, dashed] (11,-1.8) circle (0.3);
\node at (11,-1.8) {$q_{11}^{7}$};
\node[anchor=south,rotate=-71.56] at (10.5,-2.4) {$b$};
\draw[->, thick,gray, dashed] (10.3,-1.8) -- (10.7,-3);
\draw[thick, gray, dashed] (11,-3) circle (0.3);
\node at (11,-3) {$q_{11}^{8}$};
\node[anchor=south,rotate=-71.56] at (10.5,-3.6) {$a$};
\draw[->, thick, gray, dashed] (10.3,-3) -- (10.7,-4.2);
\draw[thick, gray, dashed] (11,-4.2) circle (0.3);
\node at (11,-4.2) {$q_{11}^{9}$};
\node[anchor=south,rotate=-71.56] at (10.5,-4.8) {$b$};
\draw[->, thick, gray, dashed] (10.3,-3) -- (10.7,-5.4);
\draw[thick, gray, dashed] (11,-5.4) circle (0.3);
\node at (11,-5.4) {$q_{11}^{10}$};
\node[anchor=south,rotate=-71.56] at (10.5,-6) {$b$};
\draw[->, thick, gray, dashed] (10.3,-4.2) -- (10.7,-6.6);
\draw[thick, gray, dashed] (11,-6.6) circle (0.3);
\node at (11,-6.6) {$q_{11}^{11}$};
\node[anchor=south,rotate=-71.56] at (10.5,-7.2) {$a$};
\draw[->, thick] (10.3,-5.4) -- (10.7,-7.8);
\draw (11,-7.8) circle (0.3);
\draw (11,-7.8) circle (0.25);
\node at (11,-7.8) {$q_{11}^{12}$};
\end{tikzpicture}}

\caption{A trimmed DTA for $\mathcal{L}^{\leq 11}=\lbrace w\in\Sigma^{\leq 11}\vert w \textmd{ is covered by } aba\rbrace$. Dashed states and transitions are those prunned during trimming. A full-size version is available in the appendix}
\end{minipage}
\end{figure}

Since a regular language $\mathcal{L}$ may be infinite, it is natural that any finite automaton $\mathfrak{A}_\mathcal{L}$ associated with such a language must have cycles i.e. $\exists q\in Q,\, w\in \Sigma^*,\, \delta\left(q,\,w\right)=q$ since the number of paths in a \emph{directed acyclic graph} (henceforth \emph{DAG}) is finite. However, any \emph{finite} language admits a trimmed real-time deterministic finite automaton (henceforth \emph{automaton}) whose underlying topology is an (edge-labeled) DAG. For the construction of a minimal DAG associated with a finite language $\mathcal{L}\subseteq\Sigma^{\leq n}$ we refer the reader to the work of Mikov.

More than that, for the case of a finite language we can always use a tree to parse it. We call an automaton with an underlying tree topology a \emph{Tree DFA} (henceforth \emph{DTA}). Note that these are not the bottom-up/top-down DTFAs accepting \emph{tree languages} of Comon et al.~\cite{comon1997tree}.  We instead limit ourselves to the regular \emph{word languages}.

Let $\mathcal{L}\subseteq\Sigma^*$ be a language, $\mathfrak{A}_\mathcal{L}$ be its associated automaton and $n\in\mathbb{N}$ be a finite number. The language $\mathcal{L}^{\leq n}= \lbrace w\in\mathcal{L}\vert \lvert w\rvert \leq n\rbrace$ is finite and thus we can describe it through a DTA. If we have the original automaton we can obtain this DTA through loop unrolling. 

The relation between $\mathfrak{A}_\mathcal{L}$ and $\mathcal{L}^{\leq n}$ is a special one; while $\mathfrak{A}_\mathcal{L}$ can not be isomorphic to any $\mathfrak{A}_{\mathcal{L}^{\leq n}}$, it is nonetheless useful since all $\mathfrak{A}_{\mathcal{L}^{\leq n}}$ have at least as many states as $\mathfrak{A}_{\mathcal{L}}$ (if $n$ is larger than the depth of any final state $q\in F$). Yet, in combination with bounds checking we can recover the finite language. Let us denote by $\Sigma^{\leq n}=\lbrace w\in\Sigma^*\vert \lvert w\rvert \leq n\rbrace=\Sigma^n\Sigma^*$ the language of bounds-conforming words. Since $\mathcal{L}^{\leq n}=\mathcal{L}\cap\Sigma^{\leq n}$, through use of a deterministic finite transducer $\left(q^\prime_i,\,\sigma\right)\rightarrow \left(q^\prime_{i+1},\,\sigma\right)\,\forall i\in\overline{0,\,n-1},\,\sigma\in\Sigma$ with $n$ states ($Q^\prime = \lbrace q^\prime_i \vert i\in\overline{0,\,n}\rbrace$), all final, feeding input into our original $\mathfrak{A}$ we do not suffer from the exponential state boom incurred by any DTA. Thus $\mathfrak{A}_\mathcal{L}$ remains special, and we call any such $\mathfrak{A}$ that can be substituted for it a \emph{covering automaton} for $\mathcal{L}^{\leq n}$. A minimal covering DFA can be obtained in linear time from a given finite language, and it can even be constructed incrementally. 

\subsection{Our Results}
The minimal covering DFA is quite a clever engineering tool but it does not generalize string covers, since it is allowed to contain sub-cycles. For instance, let $\mathcal{L}_{1,\,2}$ be finite languages over disjoint alphabets. Then by serially connecting the two minimal covering DFAs we obtain the minimal covering DFA of $\mathcal{L}_1\mathcal{L}_2$. If we formulate the String Cover problem as a minimal covering DFA problem, then for two strings $w_{1,\,2}$ over disjoint alphabets covered by $s_{1,\,2}$, $s_1s_2$ would be a string cover of $w_1w_2$ which admits \emph{no} (non-trivial) string cover. 

In this paper we take on the quest of generalizing string covers to trees as a natural next level of difficulty while keeping the complexity polynomial. This is motivated by the deep connection between regular expressions and trees. We start by introducing a new class of Cover Automata. In the rest of the paper we use the following notation: for a function $f$, $f(.)$ is the function $f$ over an arbitrary argument.

\begin{definition}[Cover DTA]
\label{def:coverDTA}
Let $\mathfrak{A}=\left(Q,\,\Sigma,\delta,\,q_0,\,F\right)$ be a DTA/NTA. We say that $\mathfrak{A}$ covers a DTA/NFA $\mathfrak{A}^\prime=\left(Q^\prime,\,\Sigma,\delta^\prime,\,q_0^\prime,\,F^\prime\right)$ if and only if there exists a family of functions $\mathbf{\phi}_I=\lbrace\phi_{i\in I}:Q\rightarrow Q^\prime\vert \phi_i\left(\delta\left(.,\,.\right)\right)\subseteq\delta^\prime\left(\phi_i\left(.\right),\,.\right)\rbrace$ such that $\mathbf{\phi}_I\left(Q\right)=\underset{i\in I}\cup\phi_i\left(Q\right)=Q^\prime$ and $\mathbf{\phi}_I\left(F\right)\supseteq F^\prime$.
\end{definition}

 Informally, Definition~\ref{def:coverDTA} states that every state and transition in $\mathfrak{A}^\prime$ belongs to some occurence of $\mathfrak{A}$ in $\mathfrak{A}^\prime$.

\begin{remark*}
It is not required that the occurrence of $\mathfrak{A}$ in $\mathfrak{A}^\prime$ maintains the final states and thus it is not required that $\mathcal{L}\left(\mathfrak{A}\right)\subseteq\mathcal{L}\left(\mathfrak{A}^\prime\right)$. 
\end{remark*}

Notice that the concept that we introduce in this paper does not trivially fit in the expressive hierarchy of automata. There are (finite) languages which admit a non-trivial Cover DTA (i.e., smaller than the one constructed in Theorem~\ref{thm:single_DTA})  but none of whose non-trivial Cover DFAs have a tree topology. Moreover there are (finite) languages which admit no non-trivial Cover DTA. A neat example of languages that do admit non-trivial DFAs is $\mathcal{L}_s^{\leq n}=\lbrace{w\in\Sigma^{\leq n}\vert w \textmd{ is covered by } s}\rbrace$ where $n > \lvert s\rvert$. In fact, it is this particularity which allows us to model String Covers. Thus, we are now ready to introduce our  first result of this paper.

\begin{theorem*}Let $s$ be a word over $\Sigma$, $\lvert s\rvert \le n \in \mathbb{N}^* $ and $\mathfrak{A}$ be the automaton recognizing $s$ and only $s$. Then $\mathfrak{A}$ covers any automaton recognizing a non-empty subset of $\mathcal{L}_s^{\leq n}=\lbrace w\in\Sigma^{\leq n}\vert w \textmd{ is covered by } s\rbrace$
\end{theorem*}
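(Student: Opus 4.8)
The plan is to exhibit the witnessing family $\phi_I$ of Definition~\ref{def:coverDTA} directly. Write $s=s_1s_2\cdots s_m$ with $m=|s|$; if $m=0$ the statement is degenerate ($\mathcal{L}_s^{\le n}=\{\epsilon\}$ and both automata are single accepting states), so assume $m\ge 1$. Then $\mathfrak{A}$ is the chain $q_0\xrightarrow{s_1}q_1\xrightarrow{s_2}\cdots\xrightarrow{s_m}q_m$ with $Q=\{q_0,\dots,q_m\}$, $F=\{q_m\}$, and no other transitions. Let $\mathfrak{A}'=(Q',\Sigma,\delta',q_0',F')$ be any trimmed real-time deterministic automaton with $\emptyset\ne\mathcal{L}(\mathfrak{A}')\subseteq\mathcal{L}_s^{\le n}$. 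Call $r\in Q'$ an \emph{$s$-anchor} if $\delta'(r,s)$ is defined, let $I$ be the set of $s$-anchors, and for $r\in I$ set $\phi_r(q_k):=\delta'(r,s_1\cdots s_k)$ for $0\le k\le m$ (well defined because $\delta'(r,s)$ is). A routine check shows each $\phi_r$ satisfies the morphism condition $\phi_r(\delta(.,.))\subseteq\delta'(\phi_r(.),.)$: the only transitions of $\mathfrak{A}$ are $\delta(q_{k-1},s_k)=q_k$, for which $\delta'(\phi_r(q_{k-1}),s_k)=\delta'(\delta'(r,s_1\cdots s_{k-1}),s_k)=\delta'(r,s_1\cdots s_k)=\phi_r(q_k)$. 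So the statement reduces to the two remaining requirements: $\bigcup_{r\in I}\phi_r(Q)=Q'$ and $\bigcup_{r\in I}\phi_r(F)\supseteq F'$. Since $\phi_r(Q)$ is exactly the set of states lying on the $s$-labelled path issuing from $r$, and $\phi_r(F)=\{\delta'(r,s)\}$ is that path's endpoint, it suffices to prove: (a) every state of $\mathfrak{A}'$ lies on some $s$-labelled path, and (b) every final state of $\mathfrak{A}'$ is the endpoint of one.

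For (a) and (b), fix $p\in Q'$. Since $\mathfrak{A}'$ is trimmed, $p$ is reachable from $q_0'$ and co-reachable to $F'$, so pick words $u,v$ with $\delta'(q_0',u)=p$ and $\delta'(p,v)\in F'$ and set $w:=uv\in\mathcal{L}(\mathfrak{A}')\subseteq\mathcal{L}_s^{\le n}$. Then $w$ is covered by $s$, so $s$ occurs in $w$ and $|w|\ge m\ge 1$. Let $\pi_0=q_0',\pi_1,\dots,\pi_{|w|}$ be the run of $\mathfrak{A}'$ on $w$ (unique by determinism); then $\pi_{|u|}=p$ and $\pi_{|w|}\in F'$. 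The crucial observation is that an occurrence of $s$ in $w$ at positions $[a+1,a+m]$ exhibits $\pi_a$ as an $s$-anchor with $\phi_{\pi_a}(q_i)=\pi_{a+i}$ for $0\le i\le m$, because along the run $\delta'(\pi_a,w_{a+1}\cdots w_{a+m})=\delta'(\pi_a,s)=\pi_{a+m}$. Now for (a): if $|u|\ge 1$, pick an occurrence of $s$ covering the $|u|$-th character of $w$, say $[a+1,a+m]$ with $a+1\le|u|\le a+m$; then $p=\pi_{|u|}=\phi_{\pi_a}(q_{|u|-a})$ with $0\le|u|-a\le m$. If $|u|=0$, then $p=q_0'$, and the occurrence of $s$ covering the first character of $w$ is necessarily $[1,m]$, so $q_0'$ is an $s$-anchor and $p=\phi_{q_0'}(q_0)$. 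For (b): if $p\in F'$, take $v=\epsilon$, so $w=u$ and $p=\pi_{|w|}$; the occurrence of $s$ covering the last character of $w$ is necessarily the suffix $[|w|-m+1,|w|]$, so with $r:=\pi_{|w|-m}$ we get $\delta'(r,s)=\pi_{|w|}=p=\phi_r(q_m)$.

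I expect the difficulty to be organisational rather than conceptual. The two places that need care are the boundary positions --- position $0$, handled via $|w|\ge|s|$ (forcing a cover to begin at the first character), and the last position, where a cover is forced to end, which is precisely what makes part (b) work --- and the passage from ``an occurrence of $s$ inside an accepted word'' to a genuine $s$-labelled path of $\mathfrak{A}'$, which relies on determinism of $\mathfrak{A}'$ to make the run of $w$, and hence the anchor $\pi_a$, unambiguous. Finally, it is worth noting that the bound $|w|\le n$ plays no role in this argument itself; it is needed only to ensure $\mathcal{L}_s^{\le n}$ is finite, so that $\mathfrak{A}'$ may legitimately be taken to be a DTA. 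The argument uses only that $\mathfrak{A}'$ is trimmed and deterministic and that every word it accepts is covered by $s$.
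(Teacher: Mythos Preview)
Your proof is correct and follows essentially the same approach as the paper: both build the witnessing family $\phi_I$ from occurrences of $s$ inside accepted words, and both verify the coverage and final-state conditions by exploiting that every position (respectively, the last position) of an accepted word is touched (respectively, ended) by some occurrence of $s$. Your version is somewhat tidier in that you index the family directly by \emph{anchor states} of $\mathfrak{A}'$ rather than by (word, position) pairs and handle the multi-word case in one pass via trimness, whereas the paper first treats the single-word case and then takes a union over maximal paths; but this is an organisational difference, not a conceptual one.
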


\begin{remark*}
By $\mathbf{\phi}_I\left(F\right)\supseteq F^\prime$, the leaves of $\mathfrak{A}^\prime$'s tree must correspond to final states in $\mathfrak{A}$. If we drop this requirement, then the words accepted by a covered automata may not end in the given string $s$. If we impose a more stringent requirement, say $\mathbf{\phi}_I\left(F\right)=F^\prime$ we cannot recognize some subsets of $\mathcal{L}_s^{\leq n}$, since for any covered word $w$ we are required to cover any prefix $p$ of $w$ that is covered by $s$. We can alternatively replace the requirement with added stringency on transitions, say $\phi_i\left(\delta\left(.,\,.\right)\right)=\delta^\prime\left(\phi_i\left(.\right),\,.\right)$ but then we cannot cover trees branching at multiple points of an ocurrence such as $\lbrace ababa,\,abaaba\rbrace$ with $aba$. Thus, we are at the right degree of constraint.
\end{remark*}

Having constructed such an elaborate model, we check that the model space is itself regular. With a very coarse topology we cannot leverage the usual optimization techniques (e.g., greedy algorithms). Thus, we prove that our relation is at least a partial order.

\begin{theorem*}The Cover relation, $\trianglelefteq$, is a partial ordering over DTA/NTA.
\end{theorem*}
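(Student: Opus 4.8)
The plan is to show that the Covers relation is reflexive, transitive, and antisymmetric up to isomorphism of automata, and hence a partial order. Throughout I take the automata to be trimmed, as elsewhere in the paper (without trimming one gets only a preorder); writing $\mathfrak{A}_1\trianglelefteq\mathfrak{A}_2$ when $\mathfrak{A}_1$ covers $\mathfrak{A}_2$ (the orientation is immaterial for the argument). Reflexivity is immediate: the singleton family $\{\mathrm{id}_Q\}$ witnesses $\mathfrak{A}\trianglelefteq\mathfrak{A}$, since $\mathrm{id}_Q(\delta(.,.))\subseteq\delta(\mathrm{id}_Q(.),.)$, $\mathrm{id}_Q(Q)=Q$, and $\mathrm{id}_Q(F)=F\supseteq F$.

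For transitivity, suppose $\mathfrak{A}$ covers $\mathfrak{A}'$ via $\{\phi_i:Q\to Q'\}_{i\in I}$ and $\mathfrak{A}'$ covers $\mathfrak{A}''$ via $\{\psi_j:Q'\to Q''\}_{j\in J}$. I claim $\{\psi_j\circ\phi_i\}_{(i,j)\in I\times J}$ witnesses that $\mathfrak{A}$ covers $\mathfrak{A}''$. The transition condition follows by monotonicity of set-images: from $\phi_i(\delta(.,.))\subseteq\delta'(\phi_i(.),.)$ we get $\psi_j(\phi_i(\delta(.,.)))\subseteq\psi_j(\delta'(\phi_i(.),.))\subseteq\delta''(\psi_j(\phi_i(.)),.)$. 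Surjectivity follows from $\bigcup_{i,j}\psi_j(\phi_i(Q))=\bigcup_j\psi_j\bigl(\bigcup_i\phi_i(Q)\bigr)=\bigcup_j\psi_j(Q')=Q''$, and the final-state condition from $\bigcup_{i,j}\psi_j(\phi_i(F))=\bigcup_j\psi_j\bigl(\bigcup_i\phi_i(F)\bigr)\supseteq\bigcup_j\psi_j(F')\supseteq F''$, using that each $\psi_j$ is monotone on subsets.

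The real work is antisymmetry: if $\mathfrak{A}$ covers $\mathfrak{A}'$ and $\mathfrak{A}'$ covers $\mathfrak{A}$, then $\mathfrak{A}\cong\mathfrak{A}'$. I would exploit that the underlying graphs are finite and acyclic (a DTA is a tree; only acyclicity is needed) together with trimmedness. For a state $q$ let $h(q)$ be the length of the longest path out of $q$, which is finite and realized. The key observation is that no cover morphism can decrease $h$: a path of length $h(q)$ out of $q$ is sent to a walk of the same length out of $\phi_i(q)$, and in an acyclic graph every walk is a path, so $h(\phi_i(q))\ge h(q)$. Applying this to $q_0$ in both directions gives $h(q_0')\ge h(q_0)\ge h(q_0')$, hence equality. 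Now for any $\phi_i$, writing $r'=\phi_i(q_0)$ we have $h(r')\ge h(q_0)=h(q_0')$; but $r'$ lies on some root path of $\mathfrak{A}'$, so $\mathrm{depth}(r')+h(r')\le h(q_0')$, forcing $\mathrm{depth}(r')=0$, i.e. $\phi_i(q_0)=q_0'$; symmetrically $\psi_j(q_0')=q_0$ for every $j$. Then $\psi_j\circ\phi_i$ is a cover morphism of $\mathfrak{A}$ into itself fixing $q_0$, and by trimmedness an induction along a shortest root path to each state shows $\psi_j\circ\phi_i=\mathrm{id}_Q$ — at the inductive step $\delta(\text{the already-fixed state},\sigma)$ is a singleton, which pins down the image of the next state along the path. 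Likewise $\phi_i\circ\psi_j=\mathrm{id}_{Q'}$. Hence each $\phi_i$ is a bijection; it maps $q_0\mapsto q_0'$, it preserves transitions in both directions (the inclusion becomes equality since the inverse is also a morphism and states are deterministic), and combining $\bigcup_i\phi_i(F)\supseteq F'$ with $\bigcup_j\psi_j(F')\supseteq F$ gives $\phi_i(F)=F'$. So $\phi_i:\mathfrak{A}\xrightarrow{\sim}\mathfrak{A}'$ is an isomorphism.

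The main obstacle is this last part, and within it two points deserve care. First, the claim that every covering morphism sends the initial state to the initial state rests entirely on the height argument, which in turn needs acyclicity and trimmedness. Second, in the nondeterministic case the ``$\psi_j\circ\phi_i=\mathrm{id}$'' induction breaks, because the image of a successor state is no longer forced when $\delta(\text{fixed state},\sigma)$ is not a singleton; I would state and prove the deterministic case in full and, for NTA, either restrict attention to a canonical trimmed form or replace the pointwise identity argument by a coarser equivalence argument, noting that the poset is then over isomorphism classes of such canonical forms.
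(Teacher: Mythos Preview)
Your reflexivity and transitivity arguments are exactly what the paper does: the identity family witnesses $\mathfrak{A}\trianglelefteq\mathfrak{A}$, and the lemma immediately preceding the theorem composes the two covering families into $\{\psi_j\circ\phi_i\}_{(i,j)\in I\times J}$ with the same three verifications (transition compatibility, surjectivity onto $Q''$, and the final-state containment) that you give. The paper's own proof of the theorem then stops there---it only cites reflexivity and transitivity and concludes ``it is indeed an ordering,'' so despite the word \emph{partial} in the statement it really only establishes a preorder and never addresses antisymmetry.

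Your antisymmetry argument is therefore additional to the paper, and for trimmed deterministic tree automata it is correct: the height function $h$ is non-decreasing along any cover morphism because paths push forward to walks and in an acyclic graph every walk is a path; the inequality $\mathrm{depth}(r')+h(r')\le h(q_0')$ then forces $\phi_i(q_0)=q_0'$, and determinism makes the root-path induction pin down $\psi_j\circ\phi_i=\mathrm{id}_Q$ state by state. This genuinely upgrades the paper's result from a preorder to a partial order on isomorphism classes of trimmed DTA. Your caveat about the nondeterministic case is well placed---the paper does not resolve it either.
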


Up to this point, we prove that String Covers can be modeled in a space that is much broader, that of DTA and we show that this space has at least useful algebraic properties. Intuitively, the order which we define brings some benefits over alternative formalisms. 

For example, note that \emph{Approximate String Covers} are not ordered. If $s$ is an $\epsilon$-cover of $t$ and $t$ is a $\delta$-cover of $w$, $s$ is anything between a perfect cover of $w$ and a $\lvert w\rvert$-cover of $w$, i.e. not a cover at all. Consider the case of the periodic sting $a^n$ perfectly covered by $a$ which is $1-covered$ by $b$. $b$ is a $n-cover$ of $a^n$. This is natural for any approximative compression model: with each use of the erroneously compressed token the number of errors increases and thus if the minimal number of $t$ tokens required to cover $w$ is $n$, $s$ is a $n\epsilon-cover$ of $w$.

We now turn our attention towards the computational aspects of this embedding and consider two problems: \emph{decision} and \emph{optimization}. We consider both sequential and parallel models for our calculations.

\begin{theorem*}Deciding whether an automaton $\mathfrak{A}$ covers $\mathfrak{A}^\prime$ takes at most $\mathcal{O}\left(\lvert Q\rvert\lvert Q^\prime\rvert\right)$ serial time or $\mathcal{O}\left(\lvert Q\rvert d\left(\mathfrak{A}^\prime\right)\right)$ massively parallel time.
\end{theorem*}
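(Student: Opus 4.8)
The plan is to reduce covering‑testing to a bounded enumeration over the possible images of $\mathfrak{A}$'s start state. The key structural fact is that, since $\mathfrak{A}$ is a DTA, its underlying graph is an arborescence rooted at $q_0$: every state of $Q$ is reached from $q_0$ along a unique labelled path. Consequently, any map $\phi_i$ in a family $\mathbf{\phi}_I$ as in Definition~\ref{def:coverDTA} is completely determined by the single value $p:=\phi_i(q_0)$. Indeed, if $\delta(q,\sigma)=q''$ then the condition $\phi_i(\delta(.,.))\subseteq\delta'(\phi_i(.),.)$ forces $\phi_i(q'')\in\delta'(\phi_i(q),\sigma)$ — an equality when $\mathfrak{A}'$ is deterministic — so propagating from the root down the tree either pins $\phi_i$ down on all of $Q$ or exposes a missing transition of $\mathfrak{A}'$, in which case no total $\phi_i$ with $\phi_i(q_0)=p$ exists. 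Thus there are at most $\lvert Q'\rvert$ occurrences, one candidate $\phi_p$ per anchor $p\in Q'$, each either \emph{valid} (the propagation completes) or not; I would prove this lemma first.

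Next I would eliminate the existential quantifier. A family may include every valid $\phi_p$, and every occurrence is some $\phi_p$, so the union $\bigcup\phi_p(Q)$ over valid anchors is the largest image any family realises, and likewise for $\bigcup\phi_p(F)$. Hence $\mathfrak{A}\trianglelefteq\mathfrak{A}'$ exactly when $\bigcup_{p\text{ valid}}\phi_p(Q)=Q'$ and $\bigcup_{p\text{ valid}}\phi_p(F)\supseteq F'$. The algorithm is then immediate: for each of the $\lvert Q'\rvert$ anchors, traverse the tree $\mathfrak{A}$ once, following the matching transitions of $\mathfrak{A}'$ (each an $\mathcal{O}(1)$ lookup), aborting on a missing transition and otherwise recording, in two Boolean arrays over $Q'$, the states hit and the images of the states of $F$; finally compare these arrays against $Q'$ and $F'$. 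Each anchor costs $\mathcal{O}(\lvert Q\rvert)$ and the final scan $\mathcal{O}(\lvert Q'\rvert)$, for $\mathcal{O}(\lvert Q\rvert\lvert Q'\rvert)$ overall.

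For the parallel bound I would note that the $\lvert Q'\rvert$ anchor‑traversals are mutually independent and that each is a top‑down sweep of $\mathfrak{A}$, so it splits into one round per level of $\mathfrak{A}$. Moreover a valid occurrence embeds the whole tree $\mathfrak{A}$ into $\mathfrak{A}'$, whose tree has depth $d(\mathfrak{A}')$; if $\mathfrak{A}$ is deeper there is no valid anchor and the answer is ``no'', so in every relevant instance a sweep has $\mathcal{O}(d(\mathfrak{A}'))$ rounds, each resolving the $\mathcal{O}(\lvert Q\rvert)$ state–image assignments of that level. Running all anchors concurrently and closing with a reduction over $\mathfrak{A}'$'s tree to evaluate the two covering conditions gives the claimed $\mathcal{O}(\lvert Q\rvert\,d(\mathfrak{A}'))$ massively parallel time.

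The conceptual heart is the structural lemma of the first paragraph; everything after it is bookkeeping, and I expect two spots to need the most care. First, the generalisations hidden in the statement: for an NTA coverer $\mathfrak{A}$ the ``determined by the image of the root'' argument still applies branch by branch, but for a nondeterministic $\mathfrak{A}'$ the propagation from an anchor is no longer forced, and one must instead compute, bottom‑up on $\mathfrak{A}$, for each state the set of images under which its subtree admits a total morphism — still within $\mathcal{O}(\lvert Q\rvert\lvert Q'\rvert)$. Second, pinning the parallel running time to exactly $\mathcal{O}(\lvert Q\rvert\,d(\mathfrak{A}'))$ relies on the depth observation together with a careful accounting of the processor budget and of the closing reduction. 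Neither step is deep, but both are where a hand‑wave could hide an error.
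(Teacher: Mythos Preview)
Your argument is correct, but the route differs from the paper's. You enumerate \emph{anchors} $p\in Q'$ and, for each, traverse $\mathfrak{A}$ top-down to decide whether the unique candidate $\phi_p$ is total, then take the union of all valid occurrences. The paper instead runs a two-phase message-passing algorithm \emph{on $\mathfrak{A}'$}: a bottom-up pass computes, for every $q'\in Q'$, the set $A_{q'}\subseteq Q$ of roles that $q'$ can play consistently with its subtree, and a top-down pass then prunes each $A_{q'}$ to the roles actually realised along some chain of occurrences starting at the root $q'_0$ (where the root is forced to play $q_0$). The two schemes are dual---yours is indexed by anchor and computes forward images $\phi_p(Q)$, the paper's is indexed by target state and computes inverse images $A_{q'}=\mathbf{\phi}_I^{-1}(q')$---and both reach the same $\mathcal{O}(\lvert Q\rvert\lvert Q'\rvert)$ serial and $\mathcal{O}(\lvert Q\rvert\,d(\mathfrak{A}'))$ parallel bounds. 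Your approach is more elementary and isolates the structural lemma (an occurrence is determined by where it sends $q_0$) cleanly; the paper's has the advantage that the sets $A_{q'}$ fall out as a by-product, and its agents-at-nodes formulation is the parallel model the paper uses throughout, so the $d(\mathfrak{A}')$ factor is simply the number of communication rounds rather than an indirect bound via $d(\mathfrak{A})\le d(\mathfrak{A}')$.
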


Simply by using the theorem above we have an algorithm for optimizing arbitrary functions, for which the structure of our space cannot possibly be exploited. For example, we cannot assume even that the function is monotonous with respect to the Cover relation. This yield the bound in Lemma~\ref{lem:trivialminimization}.
\begin{comment}
\begin{lemma*}Let $\mathcal{L}^{\leq n}$ be a finite language, $\mu : DTA\rightarrow \mathbb{R}$ be a partial function and $\mathcal{A}$ be a collection of DTA. Computing $\underset{\mathfrak{A}\in\mathcal{A}}{\arg\max}\lbrace \mu\left(\mathfrak{A}\right)\vert\mathfrak{A}_{\mathcal{L}^{\leq n}} \textmd{ is covered by } \mathfrak{A}\rbrace$ takes at most $\mathcal{O}\left(\lvert \mathcal{A}\rvert\lvert\mathfrak{A}_{\mathcal{L}^{\leq n}}\rvert^2\right)$ i.e. $\mathcal{O}\left(\lvert \mathcal{A}\rvert \left(\underset{w\in\mathcal{L}^{\leq n}}\sum\lvert w\rvert\right)^2\right)$ serial time or $\mathcal{O}\left(\lvert\mathfrak{A}_{\mathcal{L}^{\leq n}}\rvert\ d\left(\mathfrak{A}_{\mathcal{L}^{\leq n}}\right)\right)$ i.e. $\mathcal{O}\left(\underset{w\in\mathcal{L}^{\leq n}}\sum\lvert w\rvert\underset{w\in\mathcal{L}^{\leq n}}\max\lvert w\rvert\right)$ massively parallel time. 
\end{lemma*}
\end{comment}

A probabilistic interpretation of Lemma~\ref{lem:trivialminimization} is that for the case of a hashing function (presumed to be arbitrary enough) we can obtain as difficult a problem as we want. This could potentially find applicability in \emph{proof-of-work} systems~\cite{biryukov2015proof}. This is formalized in the Theorem~\ref{thm:probabilistic}, restated below.

\begin{theorem*}Let $\mathcal{L}^{\leq n}$ be a finite language, $\mu : DTA\rightarrow \mathbb{R}$ be a partial function, $\mathcal{A}$ be a collection of DTA, $\mu_0$ be a target real number and $R$ be an oracle providing DTA in $\mathcal{A}$ with a consistently significant probability i.e. for which there exists a polynomial $p\left(\mathcal{L}\right)$ such that $P\left[\mu\left(R\left(\right)\right)\leq \mu_0,\,R() \textmd{ covers } \mathfrak{A}_{L^{\leq n}}\right]\geq 1/p\left(\mathcal{L}\right)$. Then computing a DTA $\mathfrak{A}\in\mathcal{A}$ covering $\mathfrak{A}_{\mathcal{L}^{\leq n}}$ takes on average at most $\mathcal{O}\left(p\left(\mathcal{L}\right)\lvert \mathcal{A}\rvert\lvert\mathfrak{A}_{\mathcal{L}^{\leq n}}\rvert^2\right)$ i.e. $\mathcal{O}\left(p\left(\mathcal{L}\right)\lvert \mathcal{A}\rvert \left(\underset{w\in\mathcal{L}^{\leq n}}\sum\lvert w\rvert\right)^2\right)$ serial time or $\mathcal{O}\left(p\left(\mathcal{L}\right)\lvert\mathfrak{A}_{\mathcal{L}^{\leq n}}\rvert\ d\left(\mathfrak{A}_{\mathcal{L}^{\leq n}}\right)\right)$ i.e. $\mathcal{O}\left(p\left(\mathcal{L}\right)\underset{w\in\mathcal{L}^{\leq n}}\sum\lvert w\rvert\underset{w\in\mathcal{L}^{\leq n}}\max\lvert w\rvert\right)$ massively parallel time. 
\end{theorem*}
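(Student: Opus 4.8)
The plan is to turn the oracle into a Las Vegas procedure and to charge its expected cost to the expected number of samples. The procedure repeats rounds: in a round we draw a candidate $\mathfrak{A}_j=R()$, run our covering-decision procedure on the pair $(\mathfrak{A}_j,\mathfrak{A}_{\mathcal{L}^{\leq n}})$ and evaluate $\mu(\mathfrak{A}_j)$; if $\mathfrak{A}_j$ covers $\mathfrak{A}_{\mathcal{L}^{\leq n}}$ and $\mu(\mathfrak{A}_j)\le\mu_0$ (so in particular $\mu$ is defined at $\mathfrak{A}_j$) we halt and return $\mathfrak{A}_j$, otherwise we discard it and start the next round. Correctness is immediate: the oracle only returns members of $\mathcal{A}$, and we verify the covering relation explicitly, so any returned automaton is a DTA in $\mathcal{A}$ that covers $\mathfrak{A}_{\mathcal{L}^{\leq n}}$.

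For the complexity I would separate the number of rounds from the cost of a round. By the hypothesis on $R$ every call independently produces a candidate meeting both stopping conditions with probability at least $1/p(\mathcal{L})$, so the number $T$ of rounds actually performed is stochastically dominated by a geometric random variable of parameter $1/p(\mathcal{L})$ and hence $\mathbb{E}[T]\le p(\mathcal{L})$. A single round performs one oracle query, one covering test --- costing $\mathcal{O}(\lvert Q_j\rvert\lvert\mathfrak{A}_{\mathcal{L}^{\leq n}}\rvert)$ serial and $\mathcal{O}(\lvert Q_j\rvert\, d(\mathfrak{A}_{\mathcal{L}^{\leq n}}))$ massively parallel time by the decision theorem above --- and one evaluation of $\mu$; assuming an oracle query is not costlier than the verification it triggers, this is no more than the work of a single run of the deterministic algorithm underlying Lemma~\ref{lem:trivialminimization}, which does strictly more (namely $\lvert\mathcal{A}\rvert$ covering tests rather than one), i.e. $\mathcal{O}(\lvert\mathcal{A}\rvert\lvert\mathfrak{A}_{\mathcal{L}^{\leq n}}\rvert^2)$ serial and $\mathcal{O}(\lvert\mathfrak{A}_{\mathcal{L}^{\leq n}}\rvert\, d(\mathfrak{A}_{\mathcal{L}^{\leq n}}))$ parallel time. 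Since the per-round cost is bounded by a fixed quantity $C$ and the rounds are i.i.d., the expected total cost is at most $C\cdot\mathbb{E}[T]\le C\,p(\mathcal{L})$, giving the stated $\mathcal{O}(p(\mathcal{L})\lvert\mathcal{A}\rvert\lvert\mathfrak{A}_{\mathcal{L}^{\leq n}}\rvert^2)$ and $\mathcal{O}(p(\mathcal{L})\lvert\mathfrak{A}_{\mathcal{L}^{\leq n}}\rvert\, d(\mathfrak{A}_{\mathcal{L}^{\leq n}}))$ bounds; substituting $\lvert\mathfrak{A}_{\mathcal{L}^{\leq n}}\rvert=\sum_{w\in\mathcal{L}^{\leq n}}\lvert w\rvert$ and $d(\mathfrak{A}_{\mathcal{L}^{\leq n}})=\max_{w\in\mathcal{L}^{\leq n}}\lvert w\rvert$ gives the alternative forms.

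I do not expect a real obstacle here; the care is in two bookkeeping points. First, the geometric-domination step uses ``consistently significant'' in the strong sense that the success probability is at least $1/p(\mathcal{L})$ on every call irrespective of the outcomes of earlier calls; if the promise held only on the first call one would need a different accounting. Second, since the oracle is guaranteed to be good only with non-negligible --- not overwhelming --- probability, one must genuinely re-run both the covering test and the test $\mu(\cdot)\le\mu_0$ on every sample: this verification, rather than the oracle, is what supplies the termination criterion and the ``at most'' guarantee. Everything else is the linearity-of-expectation computation above.
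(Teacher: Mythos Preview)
Your proposal is correct and matches the paper's own proof: repeatedly sample from $R$, verify covering and the $\mu$-threshold, and bound the expected number of rounds by $p(\mathcal{L})$ via the geometric distribution, then multiply by the per-round verification cost. The paper's argument is far terser (one sentence invoking the recognition bounds and the expected number of attempts), but your more explicit bookkeeping---including the remark that a single round does strictly less than Lemma~\ref{lem:trivialminimization}'s $\lvert\mathcal{A}\rvert$ tests, and the caveat about independence of successive calls---only sharpens what the paper leaves implicit.
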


Another interesting domain of applicability is that where the function itself only considers a polynomially-bounded class of DTA, such as paths. Here, by applying the previous Lemma~\ref{lem:trivialminimization} we obtain a raw, yet extremely  useful result.

\begin{theorem*}Let $\mathcal{L}^{\leq n}$ be a finite language, $\mu : DTA\rightarrow \mathbb{R}$ be a partial function that only takes values over paths and $\mathcal{A}$ be a collection of DTA. Computing $\underset{\mathfrak{A}\in\mathcal{A}}{\arg\max}\lbrace \mu\left(\mathfrak{A}\right)\vert\mathfrak{A}_{\mathcal{L}^{\leq n}} \textmd{ is covered by } \mathfrak{A}\rbrace$ takes at most $\mathcal{O}\left(\lvert\mathcal{L}\rvert \underset{w\in\mathcal{L}^{\leq n}}\min\lvert w\rvert\lvert\mathfrak{A}_{\mathcal{L}^{\leq n}}\rvert\right)$ i.e. $\mathcal{O}\left(\lvert \mathcal{L}\rvert \underset{w\in\mathcal{L}^{\leq n}}\min\lvert w\rvert\underset{w\in\mathcal{L}^{\leq n}}\sum\lvert w\rvert\right)$ serial time or $\mathcal{O}\left(\underset{w\in\mathcal{L}^{\leq n}}\min\lvert w\rvert\ d\left(\mathfrak{A}_{\mathcal{L}^{\leq n}}\right)\right)$ i.e. $\mathcal{O}\left(\underset{w\in\mathcal{L}^{\leq n}}\min\lvert w\rvert\underset{w\in\mathcal{L}^{\leq n}}\max\lvert w\rvert\right)$ massively parallel time. 
\end{theorem*}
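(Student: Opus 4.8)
The plan is to turn the hypothesis that $\mu$ is defined only on path-shaped DTA into a drastic collapse of the search space: the maximisation really ranges only over automata $\mathfrak{A}_s$ recognising exactly one string $s$, and the crux is that at most $\mathcal{O}\bigl(\min_{w\in\mathcal{L}^{\leq n}}|w|\bigr)$ of those can cover $\mathfrak{A}_{\mathcal{L}^{\leq n}}$, all of them explicitly readable off $\mathfrak{A}_{\mathcal{L}^{\leq n}}$ itself.

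First I would characterise the admissible paths. Since $\mathfrak{A}_{\mathcal{L}^{\leq n}}$ is a trie, if a path $\mathfrak{A}_s$ with state chain $p_0,p_1,\dots,p_{|s|}$ and accepting state $p_{|s|}$ satisfies $\mathfrak{A}_s\trianglelefteq\mathfrak{A}_{\mathcal{L}^{\leq n}}$ via a family $\mathbf{\phi}_I$, then the transition-preservation condition forces each $\phi_i$ to embed $p_0,\dots,p_{|s|}$ onto a directed path of the trie spelling $s$, and such a path is unique once its endpoints are fixed. Hence: (i) by $\mathbf{\phi}_I(F)\supseteq F'$, every final state — i.e. the endpoint of every word $w\in\mathcal{L}^{\leq n}$ — equals some $\phi_i(p_{|s|})$, so $s$ is a suffix of $w$; (ii) by $q_0'\in\mathbf{\phi}_I(Q)$ the root lies on an embedded path, which is possible only with $\phi_i(p_0)=q_0'$, so $s$ is a prefix of some word. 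Therefore $|s|\le m:=\min_{w\in\mathcal{L}^{\leq n}}|w|$ and $s$ is one of the at most $m$ suffixes of a fixed shortest word $w_{\min}$; conversely, by Theorem~\ref{thm:single_DTA}, any $s$ that string-covers all of $\mathcal{L}^{\leq n}$ does satisfy $\mathfrak{A}_s\trianglelefteq\mathfrak{A}_{\mathcal{L}^{\leq n}}$, so the admissible family is non-empty in the intended applications while being contained in an $\mathcal{O}(m)$-size explicit candidate list.

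The algorithm is then: extract a shortest word $w_{\min}$ from $\mathfrak{A}_{\mathcal{L}^{\leq n}}$ and list its suffixes $s_1,\dots,s_m$; drop those with $\mathfrak{A}_{s_\ell}\notin\mathcal{A}$ or $s_\ell\notin\operatorname{dom}\mu$; for each remaining $s_\ell$ decide $\mathfrak{A}_{s_\ell}\trianglelefteq\mathfrak{A}_{\mathcal{L}^{\leq n}}$ with the $\mathcal{O}(|Q|\,|Q'|)$ cover-decision procedure of the preceding theorem — equivalently, one string-matching sweep that locates the occurrences of $s_\ell$ in the trie and checks that they collectively pass through every state and end at every final state; among the survivors output the one of maximal $\mu$. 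Soundness of the output is exactly the decision test (or Theorem~\ref{thm:single_DTA} when it applies); completeness is (i)--(ii). Each test on a length-$\le m$ candidate costs $\mathcal{O}(m\,|\mathfrak{A}_{\mathcal{L}^{\leq n}}|)$ serial and $\mathcal{O}(m\,d(\mathfrak{A}_{\mathcal{L}^{\leq n}}))$ massively parallel time by that theorem; aggregating over the candidate list — or, more slickly, invoking Lemma~\ref{lem:trivialminimization} on the sub-collection of path automata of $\mathcal{A}$ and adding the $\mathcal{O}(|\mathfrak{A}_{\mathcal{L}^{\leq n}}|)$ cost of reading $w_{\min}$ and its suffixes — yields the claimed serial bound $\mathcal{O}\bigl(|\mathcal{L}|\,m\,|\mathfrak{A}_{\mathcal{L}^{\leq n}}|\bigr)$, while the tests running simultaneously give the $\mathcal{O}\bigl(m\,d(\mathfrak{A}_{\mathcal{L}^{\leq n}})\bigr)$ parallel bound.

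The main obstacle is the characterisation step. The cover relation restricted to a trie is strictly weaker than ``$s$ string-covers every $w\in\mathcal{L}^{\leq n}$'': an occurrence of $s$ inside the trie may branch off a given word's own path, so a state can be covered even when the corresponding character is not — for instance $\mathfrak{A}_a$ covers the trie of $\{aba,aca\}$, although $a$ is not a string cover of $aba$. Thus one cannot simply quote the classical shortest-cover machinery; what must be proved with care is the one-sided implication (i)--(ii), ``admissible $\Rightarrow$ common suffix and prefix of some word'', which is precisely what confines the search to $\mathcal{O}(m)$ candidates, with the earlier decision procedure supplying the exact per-candidate membership test and Theorem~\ref{thm:single_DTA} the easy sufficient direction. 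Everything else — making the candidate checks fit the stated time rather than an extra factor, and the parallel accounting — is routine once the candidate set is bounded.
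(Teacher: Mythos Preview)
Your characterisation of the admissible paths via conditions (i)--(ii) is correct and actually more carefully argued than the paper's own proof. The gap is in the final accounting for the serial bound, where your candidate set does not match the one the paper uses.

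The paper's argument is that an admissible path, read as a root-path in $\mathfrak{A}_{\mathcal{L}^{\leq n}}$, must terminate at a \emph{final} state of the trie; since there are exactly $|\mathcal{L}|$ such states, this produces at most $|\mathcal{L}|$ candidates, each of length $\le m=\min_{w}|w|$, and plugging into the recognition bound gives $\mathcal{O}(|\mathcal{L}|\cdot m\cdot|\mathfrak{A}_{\mathcal{L}^{\leq n}}|)$. You instead enumerate the $m$ suffixes of a fixed shortest word $w_{\min}$. With $m$ candidates at cost $\mathcal{O}(m\,|\mathfrak{A}_{\mathcal{L}^{\leq n}}|)$ each, your aggregation yields $\mathcal{O}(m^{2}\,|\mathfrak{A}_{\mathcal{L}^{\leq n}}|)$, not the stated $\mathcal{O}(|\mathcal{L}|\,m\,|\mathfrak{A}_{\mathcal{L}^{\leq n}}|)$, and the two are incomparable: take $\mathcal{L}=\{a^{N}\}$, where $|\mathcal{L}|=1$, $m=N$, and your bound becomes $\Theta(N^{3})$ against the theorem's $\Theta(N^{2})$. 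Falling back on Lemma~\ref{lem:trivialminimization} with an $m$-element sub-collection gives $\mathcal{O}(m\,|\mathfrak{A}_{\mathcal{L}^{\leq n}}|^{2})$, which again is not the stated bound. Your massively parallel bound is fine, since all $m$ tests may run simultaneously.

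So the missing step is precisely a $|\mathcal{L}|$-bound on the number of candidates; nothing in your (i)--(ii) provides it, because ``common suffix of all words and prefix of some word'' only limits the count to $m$. To recover the stated serial complexity you would need to argue, as the paper does, that the root-anchored occurrence of $s$ in the trie ends at one of the $|\mathcal{L}|$ final nodes, and enumerate candidates by that endpoint rather than by suffix length.
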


Recall that one of the interesting problems modeled by Cover DTA is the Minimal String Cover Problem. This can be seen as an optimization of the length over path automata. Thus, we have an algorithm for the Minimal String Cover Problem and moreover, at the same time, for the Minimal Common String Cover Problem.

\begin{corollary*} Let $\mathcal{L}^{\leq n}$ be a finite language. Computing the shortest cover of all words in $\mathcal{L}$ takes at most $\mathcal{O}\left(\lvert\mathcal{L}\rvert \underset{w\in\mathcal{L}^{\leq n}}\min\lvert w\rvert\lvert\mathfrak{A}_{\mathcal{L}^{\leq n}}\rvert\right)$ i.e. $\mathcal{O}\left(\lvert \mathcal{L}\rvert \underset{w\in\mathcal{L}^{\leq n}}\min\lvert w\rvert\underset{w\in\mathcal{L}^{\leq n}}\sum\lvert w\rvert\right)$ serial time or $\mathcal{O}\left(\underset{w\in\mathcal{L}^{\leq n}}\min\lvert w\rvert\ d\left(\mathfrak{A}_{\mathcal{L}^{\leq n}}\right)\right)$ i.e. $\mathcal{O}\left(\underset{w\in\mathcal{L}^{\leq n}}\min\lvert w\rvert\underset{w\in\mathcal{L}^{\leq n}}\max\lvert w\rvert\right)$ massively parallel time.  
\end{corollary*}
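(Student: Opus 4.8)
My plan is to obtain the corollary as a single instantiation of the preceding path-valued optimization theorem. I would take $\mu$ to be, on any DTA $\mathfrak{A}$ whose underlying topology is a single path, the value $-\lvert\mathfrak{A}\rvert$ (minus its number of transitions), and to leave $\mu$ undefined on every other DTA; this $\mu$ is a partial function taking values only over paths, as that theorem requires. I would let $\mathcal{A}$ be the family of all path automata over $\Sigma^{\leq n}$. Then $\arg\max\{\mu(\mathfrak{A})\mid\mathfrak{A}_{\mathcal{L}^{\leq n}}\text{ is covered by }\mathfrak{A}\}$ should be exactly (the automaton of) the shortest common cover of $\mathcal{L}$, and the two time bounds then come for free from the theorem. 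What remains is to justify that last identification.

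The heart of the argument is an equivalence I would prove first: for a word $s$ with $\lvert s\rvert\leq n$, $s$ covers every word of $\mathcal{L}$ iff the path automaton $\mathfrak{A}_s$ that recognises exactly $s$ covers $\mathfrak{A}_{\mathcal{L}^{\leq n}}$. (I would assume $n\geq\max_{w\in\mathcal{L}}\lvert w\rvert$ so $\mathcal{L}\subseteq\mathcal{L}^{\leq n}$; and any common cover has length $\leq\min_{w\in\mathcal{L}}\lvert w\rvert\leq n$, so the side condition on $\lvert s\rvert$ is free.) The forward direction is immediate from the first theorem of the paper: if $s$ covers all of $\mathcal{L}$ then $\mathcal{L}^{\leq n}\subseteq\mathcal{L}_s^{\leq n}$, so $\mathfrak{A}_{\mathcal{L}^{\leq n}}$ recognises a non-empty subset of $\mathcal{L}_s^{\leq n}$ and is therefore covered by $\mathfrak{A}_s$. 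For the converse I would unpack Definition~\ref{def:coverDTA}: from a covering family $\mathbf{\phi}_I$, for each $i$ the image under $\phi_i$ of the linear state chain $0\to1\to\cdots\to\lvert s\rvert$ of $\mathfrak{A}_s$ is---by the transition-inclusion condition $\phi_i(\delta(\cdot,\cdot))\subseteq\delta'(\phi_i(\cdot),\cdot)$ and the determinism of $\mathfrak{A}_{\mathcal{L}^{\leq n}}$---a contiguous sub-path of the tree spelling out $s$, i.e.\ an occurrence of $s$ in some word of $\mathcal{L}^{\leq n}$; the surjectivity $\mathbf{\phi}_I(Q)=Q'$, together with $\mathbf{\phi}_I(F)\supseteq F'$ anchoring the leaves to the final state of $\mathfrak{A}_s$, then forces every position of every word of $\mathcal{L}^{\leq n}$ to lie inside one of these occurrences, which is exactly the statement that $s$ covers that word.

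With the equivalence in hand the rest is routine: the constraint set $\{\mathfrak{A}\mid\mu(\mathfrak{A})\text{ defined},\ \mathfrak{A}_{\mathcal{L}^{\leq n}}\text{ covered by }\mathfrak{A}\}$ is precisely $\{\mathfrak{A}_s\mid s\text{ a common cover of }\mathcal{L}\}$, maximising $-\lvert\cdot\rvert$ over it returns the shortest such $s$ (an empty $\arg\max$ correctly recording that $\mathcal{L}$ may admit no common cover), and feeding this $\mu$ and $\mathcal{A}$ into the path-valued theorem reproduces verbatim the stated $\mathcal{O}(\lvert\mathcal{L}\rvert\min_w\lvert w\rvert\,\lvert\mathfrak{A}_{\mathcal{L}^{\leq n}}\rvert)$ serial and $\mathcal{O}(\min_w\lvert w\rvert\,d(\mathfrak{A}_{\mathcal{L}^{\leq n}}))$ parallel bounds. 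I expect the converse half of the equivalence to be the only real obstacle---turning an automaton-level cover by a path automaton back into genuine string occurrences of $s$---and it is exactly there that the tree shape and determinism of $\mathfrak{A}_{\mathcal{L}^{\leq n}}$ are indispensable; everything else is bookkeeping on top of the earlier theorems.
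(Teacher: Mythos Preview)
Your strategy is exactly the paper's: reduce the shortest common cover to the path-valued optimisation theorem via the equivalence ``$s$ is a common string cover of $\mathcal{L}$ iff $\mathfrak{A}_s$ covers $\mathfrak{A}_{\mathcal{L}^{\leq n}}$''. You are also right that the converse direction is the only non-trivial step. Unfortunately the argument you sketch for it does not go through, and in fact the converse is \emph{false}. Take $\mathcal{L}=\{ab,\,aab\}$. The trimmed DTA has states $q_0\xrightarrow{a}q_1$, $q_1\xrightarrow{b}q_4$ (final) and $q_1\xrightarrow{a}q_2\xrightarrow{b}q_3$ (final). With
\[
\phi_1:(p_0,p_1,p_2)\mapsto(q_0,q_1,q_4),\qquad \phi_2:(p_0,p_1,p_2)\mapsto(q_1,q_2,q_3),
\]
one checks directly from Definition~\ref{def:coverDTA} that $\mathfrak{A}_{ab}$ covers $\mathfrak{A}_{\mathcal{L}}$: every state of $Q'$ is hit and both final states $q_3,q_4$ are images of the unique final state $p_2$. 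Yet $ab$ does not cover $aab$ (the only occurrence of $ab$ in $aab$ is at position $2$, leaving position $1$ uncovered), so $\mathcal{L}$ has no common string cover at all, while your $\arg\max$ would return~$ab$.

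The slip is in the sentence ``forces every position of every word to lie inside one of these occurrences''. Each $\phi_i(\mathfrak{A}_s)$ is indeed a sub-path of the tree spelling $s$, hence an occurrence of $s$ in \emph{some} root-to-leaf word; but a state $q'$ on the path of a particular word $w$ can be hit by a $\phi_i$ whose image branches \emph{off} the path of $w$ immediately after $q'$. In the example, $q_1$ (depth $1$ on both words) is hit only by $\phi_1$, whose image continues to $q_4$ along the path of $ab$, not $aab$; so no occurrence of $ab$ \emph{inside} $aab$ covers its first position. Determinism of the tree guarantees the backward part of each occurrence lies on the path of $w$, but not the forward part. The paper's own proof merely asserts the same iff chain without justification and therefore shares this gap; as written, the reduction to the path-valued theorem does not compute the shortest common cover.
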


The Minimal Common String Cover Problem has some practical constraints to it, which yield our algorithm optimal for the massively parallel case.

\begin{theorem*} Let $\mathcal{L}^{\leq n}$ be a finite language where all words have roughly the same size i.e. $\exists c\in\mathbb{N}$ a fixed constant such that $c\geq \underset{w\in\mathcal{L}^{\leq n}}\max \lvert w\rvert /\underset{w\in\mathcal{L}^{\leq n}}\min \lvert w\rvert$. Computing the shortest cover of all words in $\mathcal{L}$ takes roughly $\Theta\left(\underset{w\in\mathcal{L}^{\leq n}}\min\lvert w\rvert^2\right)$ massively parallel time.  
\end{theorem*}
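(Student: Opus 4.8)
The plan is to sandwich the parallel running time between matching upper and lower bounds written in terms of $\ell := \min_{w \in \mathcal{L}^{\leq n}} \lvert w\rvert$ and $L := \max_{w \in \mathcal{L}^{\leq n}} \lvert w\rvert$, and then to collapse them using the hypothesis $L \le c\,\ell$ for a fixed constant $c$.

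For the upper bound I would simply invoke the preceding Corollary: the shortest common cover of $\mathcal{L}$ is produced in $\mathcal{O}\!\left(\ell\, d(\mathfrak{A}_{\mathcal{L}^{\leq n}})\right)$ massively parallel time, and since the underlying topology of $\mathfrak{A}_{\mathcal{L}^{\leq n}}$ is a tree whose height equals the length of the longest accepted word, $d(\mathfrak{A}_{\mathcal{L}^{\leq n}}) = L$. Hence the time is $\mathcal{O}(\ell L)$, and $L \le c\,\ell$ with $c$ constant gives $\ell L \le c\,\ell^2 = \mathcal{O}(\ell^2)$, which is the $\mathcal{O}$ half of the claim.

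For the lower bound I would argue that $\Omega(\ell^2)$ parallel time is unavoidable by isolating two independent sources of cost that genuinely \emph{multiply} in this model. First, $d(\mathfrak{A}_{\mathcal{L}^{\leq n}}) = L \ge \ell$, and any computation whose result depends on whether the deepest leaves of $\mathfrak{A}_{\mathcal{L}^{\leq n}}$ lie inside an occurrence of the returned cover must let information traverse a full root-to-leaf path, costing $\Omega(L)$ rounds. Second, by the analysis behind the cover-decision theorem, certifying that a path automaton on $k$ states covers $\mathfrak{A}_{\mathcal{L}^{\leq n}}$ needs $\Omega(k)$ successive propagation phases, one per position inside an occurrence; and one can pick a family of languages — for instance words of the form $p^{j}q$ with $p$ a primitive, cover-free block chosen so that the unique shortest common cover has length $\Theta(\ell)$ — on which no cover of length $o(\ell)$ exists, so the algorithm is forced to certify a cover of length $\Theta(\ell)$ and therefore to run $\Omega(\ell)$ such phases. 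Multiplying, the time is $\Omega(\ell L) = \Omega(\ell^2)$, matching the upper bound.

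The main obstacle is exactly this lower bound: exhibiting a concrete hard family and, above all, arguing rigorously in the massively parallel model that the two costs compound rather than overlap — that a round of the model can neither shortcut the height of $\mathfrak{A}_{\mathcal{L}^{\leq n}}$ nor collapse the length of the occurrence being verified, and that no algorithm can certify the minimality of a long cover without effectively simulating a full occurrence of it. The upper bound, in contrast, is an immediate reading of the Corollary once $d(\mathfrak{A}_{\mathcal{L}^{\leq n}})$ is identified with $L$ and the bounded-ratio hypothesis is applied.
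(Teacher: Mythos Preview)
Your upper-bound argument is exactly what the paper does: invoke the Corollary to get $\mathcal{O}(\ell L)$ parallel time, then use $L\le c\,\ell$ to collapse it to $\mathcal{O}(\ell^2)$.

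For the lower bound, however, you are working much harder than necessary and in a direction the paper does not take. Immediately before this theorem there is a Lemma stating that computing the shortest common cover already requires $\Omega\bigl(\min_{w}\lvert w\rvert^{2}\bigr)$ massively parallel time, with no hypothesis on the ratio $L/\ell$. The paper's proof of the present theorem is a one-liner: combine the $\mathcal{O}$ from the Corollary (after substituting $L\le c\,\ell$) with the $\Omega$ from that Lemma, and the $\Theta$ follows. You have overlooked that Lemma and are trying to reconstruct an $\Omega(\ell^2)$ bound from first principles.

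Your reconstruction is also where the real gap lies. The two ``independent sources of cost'' you identify --- depth of $\mathfrak{A}_{\mathcal{L}^{\le n}}$ and length of the cover being certified --- are not obviously multiplicative in the massively parallel model; in the recognition algorithm of the paper the $\lvert Q\rvert$ factor and the $d(\mathfrak{A}')$ factor come from the \emph{same} bottom-up/top-down sweep, so an adversary argument that they compound would need more than you have sketched. You are right to flag this as the main obstacle; the resolution is simply that the paper has already packaged the needed lower bound as a separate Lemma, and here one only has to cite it.
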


Thus in this article we have shown a new embedding of String Covers which generalize them while keeping the problem polynomial in time.

\section{String Covers as Cover DTA}
For the course of this section we  consider fixed an alphabet $\Sigma$, a string $s\in\Sigma^*$ over it, the number $\mathbb{N}\ni n \geq \lvert s\rvert$, the language $\mathcal{L}_s=\lbrace w\in\Sigma^*\vert w \textmd{ is covered by } s\rbrace$ and the family of languages $\mathcal{L}_s^{\leq k}=\lbrace w\in\Sigma^{\leq k}\vert w \textmd{ is covered by } s\rbrace$ with $k\geq \lvert s\rvert$. 

We prove in Theorem~\ref{thm:single_DTA} that for each $\emptyset\subsetneq\mathcal{L}^\prime\subseteq\mathcal{L}_s^{\leq k}$ there exists, up to rooted tree isomorphism, a single DTA accepting it, name it $\mathfrak{A}_{L^\prime}$. For any such language we define its depth as the depth of the underlying tree topology, effectively equal to the length of the longest contained word i.e. $d\left(\mathcal{L}^\prime\right)= \underset{w\in \mathcal{L}^\prime}\max \lvert w\rvert=d\left(\mathfrak{A}_{\mathcal{L}^\prime}\right)$. 

We  show a constructive proof of the existence and uniqueness of the recognizing DTA, $\mathfrak{A}_{\mathcal{L}^\prime}$ and prove the preceding equation in the process. This proof is probably not new and certainly not treacherous, but it is extremely relevant for the proofs of our results. We prove our result by induction and thus we split this proof into the base case (see Lemma~\ref{lem:base_unique_DTA}) and the induction step (see Lemma~\ref{lem:ind_unique_DTA}).

\begin{lemma}
Up to graph isomorphism there exists a single trimmed DFA recognizing a word $w$ and only $w$. This automaton is a trimmed DTA.
\label{lem:base_unique_DTA}
\end{lemma}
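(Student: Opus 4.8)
The plan is to exhibit the obvious candidate and then show every competitor is isomorphic to it. Set $m=\lvert w\rvert$ and let $\mathfrak{A}_w$ be the automaton with states $q_0,q_1,\dots,q_m$, transition function given by $\delta(q_i,w^{i+1})=q_{i+1}$ for $0\le i<m$ (and undefined otherwise), initial state $q_0$ and $F=\{q_m\}$. Its transition graph is a simple directed path, hence a tree, so $\mathfrak{A}_w$ is a trimmed DTA, and a one-line check shows $\mathcal{L}(\mathfrak{A}_w)=\{w\}$. The real content is therefore uniqueness: I would take an arbitrary trimmed DFA $\mathfrak{A}=(Q,\Sigma,\delta,q_0,F)$ with $\mathcal{L}(\mathfrak{A})=\{w\}$ and build an isomorphism onto $\mathfrak{A}_w$.

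First, because $w$ is accepted its run is defined, so $q_i:=\delta(q_0,w^1\cdots w^i)$ makes sense for $0\le i\le m$ and $q_m\in F$. These $m+1$ states are pairwise distinct: if $q_i=q_j$ with $i<j$, then reading $w^1\cdots w^i\,w^{j+1}\cdots w^m$ from $q_0$ ends in $q_m\in F$, so that word lies in $\mathcal{L}(\mathfrak{A})$, contradicting $\mathcal{L}(\mathfrak{A})=\{w\}$ since it is strictly shorter than $w$. Next I would use trimmedness in both directions to show there is nothing else. Every $q\in Q$ is reachable, $q=\delta(q_0,v)$, and by the ``every state can accept'' clause co-reachable, $\delta(q,u)\in F$; then $vu\in\mathcal{L}(\mathfrak{A})=\{w\}$, so $v$ is a prefix of $w$ and $q=q_{\lvert v\rvert}$, whence $Q=\{q_0,\dots,q_m\}$. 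The same move applied to a defined transition $\delta(q_i,\sigma)=q$ gives $w^1\cdots w^i\,\sigma\,u=w$ for some $u$, which forces $i<m$, $\sigma=w^{i+1}$ and $q=q_{i+1}$; so the transitions of $\mathfrak{A}$ are exactly those of $\mathfrak{A}_w$ and $q_m$ is a sink. Finally $q_i\in F$ would put the length-$i$ prefix of $w$ in $\mathcal{L}(\mathfrak{A})$, forcing $i=m$, so $F=\{q_m\}$. The bijection $q_i\mapsto q_i$ is then the desired isomorphism, and since $\mathfrak{A}_w$ is a path it is in particular a DTA.

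I do not expect a genuine obstacle — the argument is pure bookkeeping — but the one place to be careful is spelling out precisely what ``trimmed'' contributes, namely reachability of every state from $q_0$ together with the definitional property $\forall q\,\exists u,\,\delta(q,u)\in F$; essentially the whole proof is the translation of these two facts into ``$v$ is a prefix of $w$'' and ``$u$ is a suffix of $w$''. The degenerate case $w=\varepsilon$ (a lone final state with no transitions) is handled uniformly by the same reasoning.
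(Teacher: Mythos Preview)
Your proof is correct, and in fact cleaner than the paper's, but the two take noticeably different routes. The paper invokes the Myhill--Nerode theorem to get existence and uniqueness of the \emph{minimal} DFA, checks that the path automaton is minimal by exhibiting pairwise inequivalent states, and then disposes of uniqueness among \emph{trimmed} DFAs by an informal structural dichotomy (``another trimmed DFA would either have an extra branch, hence two final states, or be a longer path''). You bypass Myhill--Nerode entirely and instead build an explicit isomorphism from an arbitrary trimmed DFA onto $\mathfrak{A}_w$, squeezing everything out of the reachability/co-reachability content of ``trimmed'': every state sits on an accepting run, every accepting run spells $w$, hence every state is one of the $q_i$ and every transition is forced. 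Your approach is more elementary and more self-contained; the paper's is shorter but leans on outside machinery and a somewhat hand-wavy case split. The one mild caveat is that the paper's formal definition of ``trimmed'' only spells out co-reachability, while you (reasonably, and in line with the paper's informal ``no useless states'') also use reachability from $q_0$; it would not hurt to flag that you are reading ``trimmed'' in the standard two-sided sense.
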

\begin{proof}
The existence and uniqueness of the minimal DFA is due to the Myhill-Nerode theorem~\cite{hopcroft2001introduction}. 

We  construct an automaton as follows:
$$Q=\lbrace q_i\vert i\in \overline{0,\,\lvert w\rvert}\rbrace\;\delta\left(q_i,\,w^{i+1}\right)= q_{i+1},\,\forall i\in\overline{0,\,\lvert w\rvert-1}\;F= \lbrace q_{\lvert w\rvert}\rbrace$$

Note that since $F= \lbrace q_{\lvert w\rvert}\rbrace$, $q_{\lvert w\rvert}\not\equiv q_{\lvert w\rvert -1}$. Moreover, 
$$q_{\lvert w\rvert}\not\equiv q_{\lvert w\rvert -1}\Leftrightarrow \delta\left(q_{\lvert w\rvert}-1,\,w^{\lvert w\rvert}\right)\not\equiv \delta\left(q_{\lvert w\rvert -2},\,w^{\lvert w\rvert-1}\right)\Leftrightarrow q_{\lvert w\rvert}-1 \not\equiv q_{\lvert w\rvert}-2\Leftrightarrow \dots$$ 

Thus, our automaton is minimal. Since it has a path topology, if there were another trimmed DFA recognizing this language it would either have one more branch, and thus two final states, and thus accept at least two words, or it would be a longer path, thus accepting a word that is larger than $w$. Thus, there is no other trimmed DFA recognizing this language. Moreover this automaton is a trimmed DTA.
\end{proof}

\begin{lemma}
Let $\mathcal{L}\subseteq\Sigma^{\leq n}$ be a finite language for which there exists a single trimmed DTA, $\mathfrak{A}_\mathcal{L}=\left(Q,\,\Sigma,\,\delta,\,q_0,\,F\right)$ to recognize it, up to rooted tree isomorphism, and $w\in\Sigma^*$. Then the language $\mathcal{L}\cup\lbrace w\rbrace$ also admits a single trimmed DTA to recognize it, up to rooted tree isomorphism.
\label{lem:ind_unique_DTA}
\end{lemma}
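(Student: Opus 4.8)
The plan is to prove two things by following the inductive scheme announced just before Lemma~\ref{lem:base_unique_DTA}: that $\mathcal{L}\cup\lbrace w\rbrace$ admits \emph{some} trimmed DTA, and that any two such are isomorphic as rooted edge-labelled trees. The intuition throughout is that the unique trimmed DTA of a finite language is its trie, and adding one word only grafts a pendant path onto it.

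\textbf{Existence.} If $w\in\mathcal{L}$ there is nothing to do, so assume $w\notin\mathcal{L}$. Let $p$ be the longest prefix of $w$ that is also a prefix of some word of $\mathcal{L}$; since $\mathfrak{A}_\mathcal{L}$ is trimmed and deterministic, $p$ is exactly the longest prefix of $w$ that labels a root-to-state path of $\mathfrak{A}_\mathcal{L}$, and I call $q_p$ the state it reaches. Write $w=pr$. I build the candidate automaton from $\mathfrak{A}_\mathcal{L}$ as follows: if $r$ is empty, simply add $q_p$ to $F$; otherwise add $\lvert r\rvert$ fresh states forming a path out of $q_p$ spelling $r$ and put the last of them in $F$. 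The result is still a tree (only a pendant path was grafted), still deterministic because by maximality of $p$ the first letter of $r$ labels no transition out of $q_p$ in $\mathfrak{A}_\mathcal{L}$, still trimmed because every new state leads to the new final state, and a short inspection of accepting paths shows it recognizes precisely $\mathcal{L}\cup\lbrace w\rbrace$.

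\textbf{Uniqueness.} Let $\mathfrak{A}^\prime=\left(Q^\prime,\,\Sigma,\,\delta^\prime,\,q_0^\prime,\,F^\prime\right)$ be an arbitrary trimmed DTA recognizing $\mathcal{L}\cup\lbrace w\rbrace$, again with $w\notin\mathcal{L}$. Determinism and the tree topology force the accepting run of $w$ to be a unique path $q_0^\prime=p_0^\prime,p_1^\prime,\dots,p_{\lvert w\rvert}^\prime$ with $p_{\lvert w\rvert}^\prime\in F^\prime$. Let $j$ be the largest index such that the length-$j$ prefix of $w$ is a prefix of some word of $\mathcal{L}$ (so this prefix is the $p$ of the existence part and $p_j^\prime$ plays the role of $q_p$). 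I claim $p_{j+1}^\prime,\dots,p_{\lvert w\rvert}^\prime$ form a pendant path: none of $p_{j+1}^\prime,\dots,p_{\lvert w\rvert-1}^\prime$ is final and none of $p_{j+1}^\prime,\dots,p_{\lvert w\rvert}^\prime$ has a child other than its successor on the path, for otherwise — using trimmedness to produce an accepted word and the tree property to forbid re-convergence onto $w$'s run — some word of $\mathcal{L}$ would have the length-$i$ prefix of $w$ as a prefix for some $i>j$, contradicting maximality of $j$. Deleting these states and the edges among them gives $\mathfrak{A}^{\prime\prime}$, which is a trimmed DTA recognizing $\left(\mathcal{L}\cup\lbrace w\rbrace\right)\setminus\lbrace w\rbrace=\mathcal{L}$; the induction hypothesis then makes $\mathfrak{A}^{\prime\prime}$ isomorphic as a rooted tree to $\mathfrak{A}_\mathcal{L}$, sending $p_j^\prime$ to $q_p$, and re-attaching the pendant path (or re-adding $p_j^\prime$ to the final set, when $p=w$) turns $\mathfrak{A}^\prime$ into exactly the automaton of the existence part. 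Hence $\mathcal{L}\cup\lbrace w\rbrace$ has a unique trimmed DTA up to rooted tree isomorphism.

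\textbf{Where the difficulty sits.} The trie picture makes the statement intuitively obvious, and the determinism, trimmedness, and tree property of the enlarged automaton are routine. The one genuinely delicate point is the surgery in the uniqueness argument: verifying that excising $w$'s private pendant path from an arbitrary $\mathfrak{A}^\prime$ leaves a \emph{trimmed} automaton whose language is \emph{exactly} $\mathcal{L}$ — neither deleting a state that some word of $\mathcal{L}$ needs, nor leaving $p_j^\prime$ unable to reach the final set. This is handled by splitting on whether $p_j^\prime\in F^\prime$ and, if not, using that $p$ is a proper prefix of some $v\in\mathcal{L}$ whose accepting run must, by determinism, coincide with $p_0^\prime,\dots,p_j^\prime$ and then branch away from $w$'s run; this is the step I would write out in full.
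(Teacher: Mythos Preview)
Your proof is correct. The existence construction is identical to the paper's: locate the longest prefix $p$ of $w$ that labels a root path in $\mathfrak{A}_\mathcal{L}$, and either mark $q_p$ final (if $w=p$) or graft a fresh pendant path spelling the remaining suffix.

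For uniqueness, however, you and the paper take different routes. The paper argues \emph{directly}: in any trimmed DTA the root-to-state paths are in bijection with prefixes of accepted words, so the number of final states equals the number of accepted words and the maximal paths are forced; hence any two trimmed DTAs for the same finite language must coincide. This argument does not actually use the induction hypothesis --- it is essentially a one-shot proof that the trie is the unique trimmed DTA, rephrased for the enlarged language. Your argument is genuinely inductive: you excise from an arbitrary $\mathfrak{A}^\prime$ the private pendant path of $w$, show the remainder is a trimmed DTA for $\mathcal{L}$, invoke the hypothesis to identify it with $\mathfrak{A}_\mathcal{L}$, and then observe that re-grafting is forced. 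What your approach buys is that the delicate points (why the excised tail really is pendant, why $p_j^\prime$ remains co-accessible after surgery) are isolated and handled explicitly, whereas the paper's counting argument glosses over the internal structure. What the paper's approach buys is brevity: once one accepts that maximal paths determine the tree, there is nothing left to check. Both are valid; yours is the more careful of the two.
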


\begin{proof}
Firstly, let $p$ be the maximum-length common-prefix of $w$ with any word $w^\prime$ already in $\mathcal{L}$ i.e. $p=\underset{w^\prime\in\mathcal{L}}{\arg\max} \underset{w^i={w^\prime}^i\forall i\in\overline{1,j}}{\underset{j\leq \lvert  w\rvert,\,j\leq \lvert w^\prime\rvert}\max} j$ and $q_i=\delta\left(q_{i-1},\,p^i\right)\forall i\in\overline{1,\,\lvert p\rvert}$. 

If $w=p$, we simply need to make $\delta\left(q_0,\,p\right)=q_{\lvert p\rvert}$ final. If the automaton obtained in this way i.e. $\left(Q,\,\Sigma,\,\delta,\,q_0,\,F\cup\lbrace q_{\lvert p\rvert}\rbrace\right)$ were not unique there would be another automaton, $\mathfrak{A}^\prime$, also recognizing $\mathcal{L}\cup\lbrace w\rbrace$ with either less states, transitions or final states. Since the underlying topology must be a tree, the number of final states is the number of accepted words, and thus that cannot vary and thus it has the same number of maximal directed paths, corresponding to the accepted words. If these differed the language would not be the same and thus these also cannot vary. Hence there exist a single trimmed DTA accepting $\mathcal{L}\cup\lbrace w\rbrace$, up to rooted tree isomorphism.

Otherwise, it $\delta\left(q_{\lvert p\rvert},\,w^{\lvert p\rvert+1}\right)$ does not exist and thus we need to add the remaining states i.e. let $\delta^\prime\vert_{Q\backslash\lbrace q_{\lvert p\rvert}\rbrace\times \Sigma}=\delta,\, \delta^\prime\left(q_{\lvert p\rvert},\,.\right)=\delta\left(q_{\lvert p\rvert},\,.\right), \delta^\prime\left(q_i,\,w^{i+1}\right)=q_{i+1}\forall i\in\overline{0,\,\lvert w\rvert-1}$. Additionally we have to make $\delta\left(q_0,\,w\right)=q_{\lvert w\rvert}$ final. Uniqueness follows \emph{exactly} the same argument as above.
\end{proof}

\begin{theorem}
Let $\mathcal{L}\subseteq\Sigma^{\leq n}$ be a finite language. There exists a single trimmed DTA recognizing it, up to rooted tree isomorphism.
\label{thm:single_DTA}
\end{theorem}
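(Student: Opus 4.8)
The plan is to induct on the cardinality of $\mathcal{L}$. Lemma~\ref{lem:base_unique_DTA} already settles the case $\lvert\mathcal{L}\rvert=1$, and the degenerate case $\mathcal{L}=\emptyset$ is handled by the one-state automaton $\left(\lbrace q_0\rbrace,\,\Sigma,\,\emptyset,\,q_0,\,\emptyset\right)$ (or, under the stricter trimmed convention that every state must be able to accept, there is nothing to recognize and the claim is vacuous). So assume the statement holds for every finite language of size $k\geq 1$, and let $\mathcal{L}\subseteq\Sigma^{\leq n}$ have size $k+1$.

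First I would pick an arbitrary word $w\in\mathcal{L}$ and set $\mathcal{L}^\prime=\mathcal{L}\setminus\lbrace w\rbrace$, which has size $k$. By the induction hypothesis there is, up to rooted tree isomorphism, a single trimmed DTA $\mathfrak{A}_{\mathcal{L}^\prime}$ recognizing $\mathcal{L}^\prime$. Then I would invoke Lemma~\ref{lem:ind_unique_DTA} with this automaton and the word $w$: since $\mathcal{L}=\mathcal{L}^\prime\cup\lbrace w\rbrace$, the lemma produces a single trimmed DTA recognizing $\mathcal{L}$, again up to rooted tree isomorphism. This closes the induction and proves the theorem.

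The one point that deserves care is that the induction hypothesis is exactly the precondition of Lemma~\ref{lem:ind_unique_DTA} (``there exists a single trimmed DTA $\dots$ up to rooted tree isomorphism''), so the two lemmas compose cleanly; in particular the uniqueness at each stage is what makes the choice of $w$, and hence the order in which the words of $\mathcal{L}$ are inserted, irrelevant. Along the way I would also record the depth identity $d\left(\mathcal{L}\right)=\underset{w\in\mathcal{L}}{\max}\lvert w\rvert=d\left(\mathfrak{A}_{\mathcal{L}}\right)$: in the base case the constructed path has length $\lvert w\rvert$, and in the step the construction of Lemma~\ref{lem:ind_unique_DTA} only appends a (possibly empty) branch of total length $\lvert w\rvert$ off the existing tree, so the maximum root-to-leaf distance becomes $\max\left(d\left(\mathcal{L}^\prime\right),\,\lvert w\rvert\right)=\underset{w\in\mathcal{L}}{\max}\lvert w\rvert$. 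I do not expect any genuine obstacle here — the substance of the theorem resides in the two lemmas, and what remains is the bookkeeping of the induction together with, if one wants completeness, an explicit statement of the empty-language convention.
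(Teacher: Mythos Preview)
Your proof is correct and follows essentially the same approach as the paper: both induct on the size of $\mathcal{L}$, invoking Lemma~\ref{lem:base_unique_DTA} for the singleton base and Lemma~\ref{lem:ind_unique_DTA} for the step, with the paper phrasing the induction as walking up a maximal chain $\mathcal{L}_1\subsetneq\dots\subsetneq\mathcal{L}_{\lvert\mathcal{L}\rvert}=\mathcal{L}$ rather than removing a word and applying the hypothesis. Your additional remarks on the empty-language convention and the depth identity are sound; the paper defers the latter to a separate corollary.
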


\begin{proof}
Let $\mathcal{L}_{i\in\overline{1,\,\lvert \mathcal{L}\rvert}}$ be a chain of length $\lvert\mathcal{L}\rvert$ over $\mathcal{L}$ ordered by inclusion. We have shown that $\mathcal{L}_1=\lbrace{w_1}\rbrace$ admits a single trimmed DTA recognizing it. Thus, by adding $\mathcal{L}_2\backslash\mathcal{L}_1=\lbrace{w_2}\rbrace$ to it we obtain the language $\mathcal{L}_2$, which must also admit a single trimmed DTA recognizing it. Iteratively doing this $\lvert\mathcal{L}\rvert-2$ more times we obtain that $\mathcal{L}_{\lvert\mathcal{L}\rvert}=\mathcal{L}$ also admits a single trimmed DTA recognizing it, up to rooted tree isomorphism.
\end{proof}

We now prove that the depth of a language $d\left(\mathcal{L}^\prime\right)$ is the \emph{proper} (i.e. smallest) $k$ such that can be truncated to it $ \mathcal{L}^\prime \subseteq \mathcal{L}^{\prime^{\leq k}} $ and show how this is relevant in the context of the language $ \mathcal{L}_s$ of words covered by $s$.

\begin{lemma}
$\emptyset\subsetneq\mathcal{L}^\prime\in\mathcal{L}_s^{\leq k}\Leftrightarrow \emptyset\subsetneq\mathcal{L}^\prime\in\mathcal{L}_s^{\leq d\left(\mathcal{L}^\prime\right)},\,k\geq d\left(\mathcal{L}^\prime\right)$
\end{lemma}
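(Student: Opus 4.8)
The plan is to reduce the two set inclusions to statements about the individual words of $\mathcal{L}^\prime$ and then to use the obvious monotonicity of the family $\mathcal{L}_s^{\leq\,\cdot}$ in its bound. (I read the $\in$ in the statement as $\subseteq$, consistent with the phrasing ``for each $\emptyset\subsetneq\mathcal{L}^\prime\subseteq\mathcal{L}_s^{\leq k}$'' above.) First I would record the monotonicity: since $\mathcal{L}_s^{\leq j}=\mathcal{L}_s\cap\Sigma^{\leq j}$ and $\Sigma^{\leq j}\subseteq\Sigma^{\leq k}$ whenever $j\leq k$, we get $\mathcal{L}_s^{\leq j}\subseteq\mathcal{L}_s^{\leq k}$ for $j\le k$. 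Second, I would note that $d(\mathcal{L}^\prime)=\max_{w\in\mathcal{L}^\prime}\lvert w\rvert$ is well defined precisely because $\mathcal{L}^\prime$ is nonempty (and, being contained in some $\mathcal{L}_s^{\leq k}$, finite), so every $w\in\mathcal{L}^\prime$ satisfies $\lvert w\rvert\le d(\mathcal{L}^\prime)$.

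For the forward implication, assume $\emptyset\subsetneq\mathcal{L}^\prime\subseteq\mathcal{L}_s^{\leq k}$ with $k\ge d(\mathcal{L}^\prime)$, and pick an arbitrary $w\in\mathcal{L}^\prime$. Then $w$ is covered by $s$ (because $w\in\mathcal{L}_s^{\leq k}\subseteq\mathcal{L}_s$) and $\lvert w\rvert\le d(\mathcal{L}^\prime)$, hence $w\in\Sigma^{\leq d(\mathcal{L}^\prime)}$; the two facts together give $w\in\mathcal{L}_s^{\leq d(\mathcal{L}^\prime)}$. As $w$ was arbitrary, $\mathcal{L}^\prime\subseteq\mathcal{L}_s^{\leq d(\mathcal{L}^\prime)}$, and $\mathcal{L}^\prime$ remains nonempty; observe that this half does not even use $k\ge d(\mathcal{L}^\prime)$.

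For the backward implication, assume $\emptyset\subsetneq\mathcal{L}^\prime\subseteq\mathcal{L}_s^{\leq d(\mathcal{L}^\prime)}$ and $k\ge d(\mathcal{L}^\prime)$. By the monotonicity noted above, $\mathcal{L}_s^{\leq d(\mathcal{L}^\prime)}\subseteq\mathcal{L}_s^{\leq k}$, so $\mathcal{L}^\prime\subseteq\mathcal{L}_s^{\leq k}$, which (together with nonemptiness) closes the equivalence.

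There is no genuine obstacle here: the statement is essentially definitional, and the argument is a routine unwinding of $\mathcal{L}_s^{\leq k}=\mathcal{L}_s\cap\Sigma^{\leq k}$ together with the definition of depth. The only point requiring a little care is that the quantity $d(\mathcal{L}^\prime)$ appearing on the right-hand side must be meaningful, which is exactly why the hypothesis $\emptyset\subsetneq\mathcal{L}^\prime$ is imposed; the relevance of the equivalence, as flagged in the surrounding text, is that it identifies $d(\mathcal{L}^\prime)$ as the smallest bound $k$ to which $\mathcal{L}^\prime$ can be ``truncated'' while still sitting inside the covered-word language.
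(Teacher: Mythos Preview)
Your argument is correct and takes essentially the same route as the paper: both proofs unwind $\mathcal{L}_s^{\leq k}=\mathcal{L}_s\cap\Sigma^{\leq k}$ and invoke the definition $d(\mathcal{L}')=\max_{w\in\mathcal{L}'}\lvert w\rvert$. The paper presents it as a chain of equivalent reformulations collapsing to the tautology $\mathcal{L}'\subseteq\Sigma^{\leq k}\Leftrightarrow k\ge d(\mathcal{L}')$, whereas you argue the two implications directly via monotonicity, but the mathematical content is identical.

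One small point of parsing: the paper reads the clause $k\ge d(\mathcal{L}')$ as \emph{part of the right-hand side} of the equivalence (that is, $\mathcal{L}'\subseteq\mathcal{L}_s^{\leq k}$ iff both $\mathcal{L}'\subseteq\mathcal{L}_s^{\leq d(\mathcal{L}')}$ and $k\ge d(\mathcal{L}')$), not as a standing hypothesis. Under that reading your forward implication must also \emph{conclude} $k\ge d(\mathcal{L}')$ from $\mathcal{L}'\subseteq\mathcal{L}_s^{\leq k}$, which you assume rather than derive. The missing step is immediate: $\mathcal{L}_s^{\leq k}\subseteq\Sigma^{\leq k}$ forces $\lvert w\rvert\le k$ for every $w\in\mathcal{L}'$, hence $d(\mathcal{L}')\le k$. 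With that one-line addition your proof matches the paper's intended statement exactly, and indeed this is precisely why the lemma identifies $d(\mathcal{L}')$ as the \emph{smallest} admissible bound, as the preceding paragraph announces.
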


\begin{proof}
Equivalently one may state that:
$$\emptyset\subsetneq\mathcal{L}^\prime\in\mathcal{L}_s\cap\Sigma^{\leq k}\Leftrightarrow \emptyset\subsetneq\mathcal{L}^\prime\in\mathcal{L}_s\cap\Sigma^{\leq d\left(\mathcal{L}^\prime\right)},\,k\geq d\left(\mathcal{L}^\prime\right)$$
Note that $\mathcal{L}^\prime \in \Sigma^{\leq d\left(\mathcal{L}^\prime\right)}$ is trivially true by definition of $d\left(\mathcal{L}^\prime\right)$. Thus the above reduces to:
$$\emptyset\subsetneq\mathcal{L}^\prime\in\mathcal{L}_s\cap\Sigma^{\leq k}\Leftrightarrow \emptyset\subsetneq\mathcal{L}^\prime\in\mathcal{L}_s,\,k\geq d\left(\mathcal{L}^\prime\right)$$
Yet another form of the above is:
$$\left(\emptyset\subsetneq\mathcal{L}^\prime\in\mathcal{L}_s\right)\Rightarrow\left(\mathcal{L}^\prime \in \Sigma^{\leq k}\Leftrightarrow k\geq d\left(\mathcal{L}^\prime\right)\right)$$
However the affirmation $\mathcal{L}^\prime \in\Sigma^{\leq k}\Leftrightarrow k\geq d\left(\mathcal{L}^\prime\right)$ is true by itself. 
$$\mathcal{L}^\prime\in\Sigma^{\leq k}\Leftrightarrow \forall w\in \mathcal{L}^\prime\, w\in\Sigma^{\leq k}\Leftrightarrow \forall w\in \mathcal{L}^\prime\,\lvert w\rvert \leq k \Leftrightarrow \underset{w\in\mathcal{L}^\prime}\max \lvert w\rvert \leq k$$

\end{proof}

\begin{corollary}
Let $\mathcal{L}\subseteq\Sigma^{\leq n}$ be a finite language and $\mathfrak{A}_\mathcal{L}$ be the only trimmed DTA recognizing it. We have $d\left(\mathcal{L}\right)=d\left(\mathfrak{A}_\mathcal{L}\right)$.
\end{corollary}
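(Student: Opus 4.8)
The plan is to unwind the two definitions and lean on the one structural feature of a trimmed DTA: its underlying topology is a rooted tree with root $q_0$, and \emph{trimmed} means every state lies on some directed path from $q_0$ to a final state. Under this tree topology, reading a word $v$ from $q_0$ traces a directed path with $\lvert v\rvert$ edges, ending at a node at depth $\lvert v\rvert$, and $v$ is accepted exactly when that node is final; meanwhile $d\left(\mathfrak{A}_\mathcal{L}\right)$ is, by definition, the largest depth of any node of the tree.

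First I would establish $d\left(\mathcal{L}\right)\le d\left(\mathfrak{A}_\mathcal{L}\right)$: each $w\in\mathcal{L}$ corresponds to a path of $\lvert w\rvert$ edges from $q_0$ to a final state, so that final state sits at depth $\lvert w\rvert\le d\left(\mathfrak{A}_\mathcal{L}\right)$; taking the maximum over $w\in\mathcal{L}$ yields $\underset{w\in\mathcal{L}}\max\lvert w\rvert\le d\left(\mathfrak{A}_\mathcal{L}\right)$, i.e. $d\left(\mathcal{L}\right)\le d\left(\mathfrak{A}_\mathcal{L}\right)$. For the reverse inequality I would pick a node $v$ of maximal depth $d\left(\mathfrak{A}_\mathcal{L}\right)$. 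In a finite tree a node of maximal depth has no children (a child would lie strictly deeper), hence $v$ is a leaf with no outgoing transitions; since $\mathfrak{A}_\mathcal{L}$ is trimmed, $v$ must still be able to reach a final state, which forces $v$ itself to be final. The word read along the unique path from $q_0$ to $v$ is therefore accepted and has length $d\left(\mathfrak{A}_\mathcal{L}\right)$, so $d\left(\mathcal{L}\right)\ge d\left(\mathfrak{A}_\mathcal{L}\right)$; combining the two inequalities gives $d\left(\mathcal{L}\right)=d\left(\mathfrak{A}_\mathcal{L}\right)$.

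An equivalent route, closer to the earlier remark that this equation ``is proved in the process'' of Theorem~\ref{thm:single_DTA}, is to induct along a chain $\mathcal{L}_1\subsetneq\dots\subsetneq\mathcal{L}_{\lvert\mathcal{L}\rvert}=\mathcal{L}$: the base case is Lemma~\ref{lem:base_unique_DTA}, where the path automaton for a single word $w$ has depth $\lvert w\rvert=d\left(\lbrace w\rbrace\right)$; in the inductive step of Lemma~\ref{lem:ind_unique_DTA}, adding a word $w$ with maximal common prefix $p$ appends a branch of $\lvert w\rvert-\lvert p\rvert$ new edges hanging at depth $\lvert p\rvert$ (or nothing, when $w=p$), so the depth of the enlarged DTA is $\max\lbrace d\left(\mathfrak{A}_\mathcal{L}\right),\,\lvert w\rvert\rbrace=\max\lbrace d\left(\mathcal{L}\right),\,\lvert w\rvert\rbrace=d\left(\mathcal{L}\cup\lbrace w\rbrace\right)$, closing the induction.

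The content here is genuinely light; the only point needing care — and the closest thing to an obstacle — is the structural observation that a deepest node of a finite rooted tree is a leaf and that trimmedness upgrades ``leaf'' to ``final state'', which is precisely what prevents a DTA from being deeper than its longest accepted word. One should also note that although a word may be accepted at an internal final state, such a word is shorter than the depth of any leaf below that state, so the maximum defining $d\left(\mathcal{L}\right)$ is in any case attained at a leaf of $\mathfrak{A}_\mathcal{L}$.
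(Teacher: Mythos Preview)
Your proposal is correct. The paper's own proof is a two-line appeal to the constructive argument of Theorem~\ref{thm:single_DTA}: maximal root-to-leaf paths in $\mathfrak{A}_\mathcal{L}$ correspond to accepted words, hence maximum-length paths correspond to maximum-length words. Your second route (the induction along the chain $\mathcal{L}_1\subsetneq\dots\subsetneq\mathcal{L}$ via Lemmas~\ref{lem:base_unique_DTA} and~\ref{lem:ind_unique_DTA}) is exactly this, only spelled out explicitly. Your first route is a small but genuine variation: instead of invoking the construction, you argue directly from the definition of \emph{trimmed} that a deepest node must be a leaf and a leaf must be final, so a longest root-to-leaf path is always an accepted word. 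That buys you a self-contained proof that does not depend on how $\mathfrak{A}_\mathcal{L}$ was built, at the modest cost of isolating the ``deepest node $\Rightarrow$ leaf $\Rightarrow$ final'' step by hand; the paper's version is terser but leans on the earlier lemmas.
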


\begin{proof}
This results from our constructive proof of the existence and uniqueness of the automaton $\mathfrak{A}_\mathcal{L}$. Since maximal paths correspond to accepted words, maximum length paths correspond to maximum length accepted words and thus the depth of the underlying tree is indeed the maximum length of any accepted word in the language.
\end{proof}

\begin{theorem}
Let $s$ be a word over $\Sigma$, $\mathbb{N}^*\ni n \geq \lvert s\rvert$ and $\mathfrak{A}$ be the automaton recognizing $s$ and only $s$. Then $\mathfrak{A}$ covers any automaton recognizing a non-empty subset, $\mathcal{L}^\prime$ of $\mathcal{L}_s^{\leq n}=\lbrace w\in\Sigma^{\leq n}\vert w \textmd{ is covered by } s\rbrace$.
\end{theorem}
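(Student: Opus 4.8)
The plan is to unwind Definition~\ref{def:coverDTA} and exhibit the required family $\mathbf{\phi}_I$ explicitly: each $\phi_i$ will be a single occurrence of $s$ inside some word of $\mathcal{L}'$, and the covering hypothesis on $\mathcal{L}'$ will guarantee that these occurrences together hit every state and every leaf of the tree. First I would fix notation for the two automata. By Lemma~\ref{lem:base_unique_DTA}, $\mathfrak{A}=(Q,\Sigma,\delta,q_0,F)$ is the path $q_0\xrightarrow{s^1}q_1\xrightarrow{s^2}\cdots\xrightarrow{s^m}q_m$ with $m=\lvert s\rvert$, $F=\{q_m\}$, and no other transitions. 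Since $\mathcal{L}'\subseteq\Sigma^{\leq n}$ is finite, Theorem~\ref{thm:single_DTA} applies: the trimmed DTA $\mathfrak{A}'=(Q',\Sigma,\delta',q_0',F')$ recognizing $\mathcal{L}'$ is (isomorphic to) the prefix tree of $\mathcal{L}'$, so its states are in bijection with the prefixes of words of $\mathcal{L}'$, with $\delta'$ appending a letter and $F'$ the states labelled by full words of $\mathcal{L}'$. For a prefix $u$ I write $[u]$ for the corresponding state, so $q_0'=[\varepsilon]$ and $\delta'([u],\sigma)=[u\sigma]$ whenever $u\sigma$ is again such a prefix.

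The core construction is as follows. To every pair $(w,j)$ with $w\in\mathcal{L}'$ and $j$ a starting position of an occurrence of $s$ in $w$ (so $w^{j}w^{j+1}\cdots w^{j+m-1}=s$, hence $j+m-1\le\lvert w\rvert$) I associate $\phi_{w,j}\colon Q\to Q'$ defined by $\phi_{w,j}(q_k)=[\,w^1\cdots w^{j-1+k}\,]$ for $k\in\overline{0,m}$. Two checks are needed. First, each $w^1\cdots w^{j-1+k}$ is genuinely a prefix of $w$, since $j-1+k\le j-1+m\le\lvert w\rvert$, so $\phi_{w,j}$ lands in $Q'$. Second, the morphism condition of Definition~\ref{def:coverDTA}: the only transitions in $\mathfrak{A}$ are $\delta(q_{k-1},s^k)=q_k$, and because $s^k=w^{j+k-1}$ we get $\delta'\!\left(\phi_{w,j}(q_{k-1}),s^k\right)=\delta'\!\left([w^1\cdots w^{j+k-2}],w^{j+k-1}\right)=[w^1\cdots w^{j+k-1}]=\phi_{w,j}(q_k)$, which gives the required containment (in fact equality). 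So each $\phi_{w,j}$ is a legitimate member of a cover family; let $I$ be the set of all such pairs and $\mathbf{\phi}_I=\{\phi_{w,j}\}_{(w,j)\in I}$.

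It remains to verify $\mathbf{\phi}_I(Q)=Q'$ and $\mathbf{\phi}_I(F)\supseteq F'$, which is exactly where the hypothesis that $s$ covers every word of $\mathcal{L}'$ enters. Since $\mathfrak{A}'$ is trimmed, an arbitrary state is $[u]$ for some prefix $u$ of some $w\in\mathcal{L}'$. If $u=\varepsilon$, pick any $w\in\mathcal{L}'$ (here I use $\mathcal{L}'\neq\emptyset$): position $1$ of $w$ lies in some occurrence of $s$, which can only begin at position $1$, so $(w,1)\in I$ and $\phi_{w,1}(q_0)=[\varepsilon]$. If $\lvert u\rvert\ge 1$, position $\lvert u\rvert$ of $w$ lies in an occurrence of $s$ at some position $j$ with $j\le\lvert u\rvert\le j+m-1$; then $k:=\lvert u\rvert-j+1\in\overline{1,m}$ and $\phi_{w,j}(q_k)=[w^1\cdots w^{\lvert u\rvert}]=[u]$. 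Hence every state is hit. For $F'$: a final state is $[w]$ with $w\in\mathcal{L}'$; position $\lvert w\rvert$ of $w$ lies in an occurrence of $s$ at some position $j$, and since that occurrence must fit inside $w$ yet contain its last position we are forced to have $j+m-1=\lvert w\rvert$; then $\phi_{w,j}(q_m)=[w^1\cdots w^{\lvert w\rvert}]=[w]$, so $[w]\in\mathbf{\phi}_I(F)$. Together with the previous paragraph this shows $\mathfrak{A}\trianglelefteq\mathfrak{A}'$.

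I do not anticipate a real obstacle; the argument is bookkeeping of indices, and the only delicate points are (i) lining up $s^k=w^{j+k-1}$ so the morphism condition holds, (ii) the ``suffix occurrence'' step forcing $j+m-1=\lvert w\rvert$ at final states, and (iii) the degenerate case $s=\varepsilon$ (equivalently $\varepsilon\in\mathcal{L}'$), which I would dispose of by the standard convention that a cover of $w$ must actually occur in $w$, so every $w\in\mathcal{L}'$ satisfies $\lvert w\rvert\ge\lvert s\rvert\ge 1$. Note also that $n$ plays no role beyond guaranteeing $\mathcal{L}'$ is finite so that Theorem~\ref{thm:single_DTA} may be invoked.
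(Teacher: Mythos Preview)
Your proposal is correct and follows essentially the same approach as the paper: index the family $\mathbf{\phi}_I$ by occurrences of $s$ inside the words of $\mathcal{L}'$, then use the string-cover hypothesis to hit every state and every leaf. The only cosmetic difference is that the paper first treats the singleton case $\mathcal{L}'=\{w\}$ and then takes the union of the resulting families over the maximal paths of $\mathfrak{A}_{\mathcal{L}'}$, whereas you index by pairs $(w,j)$ from the outset; your version is in fact more explicit about the morphism check and the final-state condition than the paper's own proof.
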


\begin{proof}
The proof relies on Defintion~\ref{def:coverDTA}. 

Let $\mathfrak{A}_{\mathcal{L}^\prime}=\left(Q^\prime,\,\Sigma,\,\delta^\prime,\,q^\prime_0,\,F^\prime\right)$ be the trimmed DTA recognizing $\mathcal{L}^\prime$.

Firstly, consider the case $\mathcal{L}^\prime=\lbrace w\rbrace$. We have that $s$ covers $w$ and thus there must exist a set $I\subseteq \overline{1,\,\lvert w\rvert-\lvert s\rvert+1}$ such that $\forall i\in I,\,j\in\overline{1,\,\lvert s\rvert},\,w^{i+j-1}=s^j$ and $\underset{i,j\in\mathcal{I}}\max \lvert i-j\rvert\leq\lvert s\rvert$. Letting $Q=\lbrace q_i\vert i\in\overline{0,\,\lvert s\rvert}$ $Q^\prime=\lbrace q^\prime_i\vert i\in\overline{0,\,\lvert w\rvert}$ we define the family of functions $\phi_{i\in\mathcal{I}}\left(q_j\right)=q^\prime_{i+j-1}$ which satisfies the properties required in the definition of the Covering DTA.

Next, note that since $\mathfrak{A}$ covers all maximal paths $\mathfrak{A}^{f\in F^\prime}_{\mathcal{L}^\prime}$ of $\mathfrak{A}_{\mathcal{L}^\prime}$ by concatenating the associated families of functions $\mathbf{\phi}^f$ we obtain a larger family $\mathbf{\phi}=\underset{f\in F^\prime}\cup\mathbf{\phi}^f$ which satisfies the properties required in the definition of the Covering DTA. 

Thus, we have that $\mathfrak{A}$ indeed covers any such $\mathfrak{A}_{\mathcal{L}^\prime}$.

\end{proof}

\section{A Topology over Cover DTA}
We have thus far proposed a candidate for the generalization of string covers. Yet, one of the fundamental properties of string covering is that it is a partial order. Any string covers itself and the cover of a cover of a string is a cover of that string. In this section we check if our candidate cover relation exhibits such a topology. We check whether a DTA covers itself and whether the cover of a cover of a DFA is a cover of that DFA.

\begin{remark*}
Formally, covering can be seen as a relation i.e. $\trianglelefteq\,= \lbrace \left(\mathfrak{A}^\prime,\,\mathfrak{A}\right)\in {DTA}\times{DFA} \vert \mathfrak{A}^\prime \textmd{ covers } \mathfrak{A}\rbrace$
\end{remark*}

\begin{lemma}
Let $\mathfrak{A}^j=\left(Q^j,\,\Sigma,\,\delta^j,\,q_0^j,\,F^j\right),\, j\in\overline{1,\,2,\,3}$ be such that $A^1\trianglelefteq \mathfrak{A}^2\trianglelefteq \mathfrak{A}^3$. Then $\mathfrak{A}^1 \trianglelefteq \mathfrak{A}^3$. 
\end{lemma}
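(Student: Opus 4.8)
The plan is to prove transitivity by composing the families of functions that witness the two given cover relations. Since $\mathfrak{A}^1\trianglelefteq\mathfrak{A}^2$, Definition~\ref{def:coverDTA} provides a family $\phi_I=\lbrace\phi_i:Q^1\to Q^2\rbrace$ with $\phi_i\left(\delta^1\left(.,\,.\right)\right)\subseteq\delta^2\left(\phi_i\left(.\right),\,.\right)$ for every $i\in I$, with $\bigcup_{i\in I}\phi_i\left(Q^1\right)=Q^2$, and with $\bigcup_{i\in I}\phi_i\left(F^1\right)\supseteq F^2$; likewise $\mathfrak{A}^2\trianglelefteq\mathfrak{A}^3$ provides a family $\psi_J=\lbrace\psi_j:Q^2\to Q^3\rbrace$ satisfying the three analogous conditions. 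I would set $K=I\times J$ and define $\chi_{\left(i,\,j\right)}=\psi_j\circ\phi_i:Q^1\to Q^3$, then verify that $\chi_K$ witnesses $\mathfrak{A}^1\trianglelefteq\mathfrak{A}^3$.

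For the transition clause, fix $i\in I$, $j\in J$, a state $q\in Q^1$ and a letter $a\in\Sigma$ for which $\delta^1\left(q,\,a\right)$ is defined. Reading $\phi_i$ and $\psi_j$ as maps on subsets of states (so that a single state is identified with its singleton), we have $\chi_{\left(i,\,j\right)}\left(\delta^1\left(q,\,a\right)\right)=\psi_j\left(\phi_i\left(\delta^1\left(q,\,a\right)\right)\right)$. From $\phi_i\left(\delta^1\left(q,\,a\right)\right)\subseteq\delta^2\left(\phi_i\left(q\right),\,a\right)$ and the monotonicity of images under inclusion, $\psi_j\left(\phi_i\left(\delta^1\left(q,\,a\right)\right)\right)\subseteq\psi_j\left(\delta^2\left(\phi_i\left(q\right),\,a\right)\right)$, and the transition clause for $\psi_j$ at the state $\phi_i\left(q\right)$ gives $\psi_j\left(\delta^2\left(\phi_i\left(q\right),\,a\right)\right)\subseteq\delta^3\left(\psi_j\left(\phi_i\left(q\right)\right),\,a\right)=\delta^3\left(\chi_{\left(i,\,j\right)}\left(q\right),\,a\right)$. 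Chaining the inclusions yields $\chi_{\left(i,\,j\right)}\left(\delta^1\left(.,\,.\right)\right)\subseteq\delta^3\left(\chi_{\left(i,\,j\right)}\left(.\right),\,.\right)$; the computation uses only that images are monotone and commute with unions, so it applies equally when $\delta^2$ or $\delta^3$ is nondeterministic, i.e.\ in the full DTA/NTA setting.

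For the two covering clauses I would use that images commute with unions. Then
$$\bigcup_{\left(i,\,j\right)\in K}\chi_{\left(i,\,j\right)}\left(Q^1\right)=\bigcup_{j\in J}\psi_j\Big(\bigcup_{i\in I}\phi_i\left(Q^1\right)\Big)=\bigcup_{j\in J}\psi_j\left(Q^2\right)=Q^3,$$
and, using monotonicity of images together with $\bigcup_{i\in I}\phi_i\left(F^1\right)\supseteq F^2$,
$$\bigcup_{\left(i,\,j\right)\in K}\chi_{\left(i,\,j\right)}\left(F^1\right)=\bigcup_{j\in J}\psi_j\Big(\bigcup_{i\in I}\phi_i\left(F^1\right)\Big)\supseteq\bigcup_{j\in J}\psi_j\left(F^2\right)\supseteq F^3.$$
Hence $\chi_K$ satisfies all three conditions of Definition~\ref{def:coverDTA}, and therefore $\mathfrak{A}^1\trianglelefteq\mathfrak{A}^3$.

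I do not expect a genuine obstacle here; the only care needed is the bookkeeping around partial transition functions — observe that $\delta^1\left(q,\,a\right)$ being defined forces $\delta^2\left(\phi_i\left(q\right),\,a\right)$, and hence $\delta^3\left(\chi_{\left(i,\,j\right)}\left(q\right),\,a\right)$, to be nonempty, which is exactly what makes the chain of inclusions meaningful — and being consistent about treating $\phi_i\left(\delta^1\left(q,\,a\right)\right)$ as the image of a set so that $\psi_j$ can be pushed through it cleanly.
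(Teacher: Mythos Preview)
Your proposal is correct and follows essentially the same approach as the paper: both define the witnessing family for $\mathfrak{A}^1\trianglelefteq\mathfrak{A}^3$ as the pairwise compositions indexed by $I\times J$ (the paper calls this the ``Minkowski composition'' $\mathbf{\phi}_{I^2}\circ\mathbf{\phi}_{I^1}$), then verify the transition clause by chaining the two inclusions and the coverage clauses by pushing unions through images. Your write-up is in fact a bit more explicit about monotonicity of images and the partial-function bookkeeping, but the argument is the same.
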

\begin{proof}
Since $\mathfrak{A}^1\trianglelefteq \mathfrak{A}^2\trianglelefteq \mathfrak{A}^3$ there exist the families $$\mathbf{\phi}_{I^j}=\lbrace \phi_{i\in I^j}:\mathfrak{A}^j\rightarrow\mathfrak{A}^{j+1}\vert \phi^j\left(\delta^{j}\left(.,\,.\right)\right)\subseteq\delta^{j+1}\left(\phi^j\left(.\right),\,.\right)\rbrace;\;\mathbf{\phi}_{I^j}\left(Q^j\right)=Q^{j+1}$$

Consider the Minkowski composition: 
$$\phi_{I^1\times I^2}=\mathbf{\phi}_{I^2}\circ\mathbf{\phi}_{I^1}=\lbrace\phi_{i^1\in I^1,\,i^2\in I^2}= \phi_{i^2}\circ \phi_{i^1}\rbrace$$

Thus it must be that
$$\phi_{i^2}\left(\phi_{i^1}\left(\delta^1\left(.,\,.\right)\right)\right)\subseteq\phi_{i^2}\left(\delta^2\left(\phi_{i^1}\left(.\right),\,.\right)\right)\subseteq \delta^3\left(\phi_{i^2}\left(\phi_{i^1}\left(.\right)\right),\,.\right)$$
which in accord with our naming convention translates to:
$$\mathbf{\phi}_{i^1,\,i^2}\left(\delta^1\left(.,\,.\right)\right)\subseteq\delta^3\left(\phi_{i^1,\,i^2}\left(.,\,.\right),\,.\right)$$

Moreover, we have that
$$\mathbf{\phi}_{I^1\times I^2}\left(Q^1\right)=\underset{i^2\in I^2}\cup\underset{i^1\in I^1}\cup\phi_{i^2}\left(\phi_{i^1}\left(Q^1\right)\right)=\underset{i^2\in I^2}\cup\phi_{i^2}\left(Q^2\right)=Q^3$$
$$\mathbf{\phi}_{I^1\times I^2}\left(F^1\right)=\underset{i^2\in I^2}\cup\underset{i^1\in I^1}\cup\phi_{i^2}\left(\phi_{i^1}\left(F^1\right)\right)\supseteq\underset{i^2\in I^2}\cup\phi_{i^2}\left(F^2\right)\supseteq F^3$$

\end{proof}

\begin{theorem}
The Cover relation, $\trianglelefteq$, is a partial ordering over DTA/NTA.
\end{theorem}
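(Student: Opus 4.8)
The plan is to verify the three defining properties of a partial order for the relation $\trianglelefteq$ on DTA/NTA: reflexivity, antisymmetry, and transitivity. Transitivity has just been established in the preceding lemma, so the proof here only needs to assemble the remaining two ingredients and cite that lemma. I would organize the argument as a short sequence of three verifications.

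First I would prove reflexivity: for any DTA/NTA $\mathfrak{A}=(Q,\,\Sigma,\,\delta,\,q_0,\,F)$, the singleton family consisting of the identity function $\phi_{\mathrm{id}}:Q\to Q$ witnesses $\mathfrak{A}\trianglelefteq\mathfrak{A}$. Indeed $\phi_{\mathrm{id}}(\delta(q,\,\sigma))=\delta(q,\,\sigma)=\delta(\phi_{\mathrm{id}}(q),\,\sigma)$ (so the transition-compatibility containment holds, with equality), and clearly $\phi_{\mathrm{id}}(Q)=Q$ and $\phi_{\mathrm{id}}(F)=F\supseteq F$. So every automaton covers itself.

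Next I would handle antisymmetry. Here I would use the fact, established in the previous section, that over the class of trimmed DTA recognizing subsets of $\mathcal{L}_s^{\leq n}$ (and more generally finite-language trimmed DTA) the recognizing automaton is unique up to rooted-tree isomorphism, so that $\mathfrak{A}$ and $\mathfrak{A}'$ being abstractly "the same" means isomorphic. The argument is a size/structure comparison: if $\mathfrak{A}\trianglelefteq\mathfrak{A}'$ via a family $\phi_I$, then because $\phi_I(Q)=Q'$ is a union of images of maps out of $Q$, we get $|Q'|\le |I|\cdot|Q|$; but more usefully, each $\phi_i$ maps the root to some state and respects transitions, so any maximal directed path in $\mathfrak{A}'$ is covered by a path in $\mathfrak{A}$, bounding $d(\mathfrak{A}')\le d(\mathfrak{A})$. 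Symmetrically $\mathfrak{A}'\trianglelefteq\mathfrak{A}$ gives $d(\mathfrak{A})\le d(\mathfrak{A}')$, hence equality of depths; combined with the fact that mutual covering forces each occurrence to be a full (length-preserving, by the depth equality pushing the transition-containments to equalities on maximal branches) embedding, one deduces that the two underlying trees are isomorphic and the final-state sets match. I would then invoke the uniqueness theorem to conclude $\mathfrak{A}\cong\mathfrak{A}'$.

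Finally I would state that transitivity is exactly the content of the preceding lemma, and conclude that $\trianglelefteq$ is reflexive, antisymmetric, and transitive, hence a partial order. The main obstacle I anticipate is antisymmetry: reflexivity is immediate and transitivity is already done, but making antisymmetry genuinely rigorous requires being careful about what "equal" means for automata (isomorphism rather than literal equality) and about the fact that the covering functions need not be injective or surjective individually — one really has to exploit the tree topology and the depth bound to force the covering maps in both directions to be structure-preserving bijections on the underlying trees, and then appeal to the uniqueness results from Section~2. If one is willing to read $\trianglelefteq$ as a relation on isomorphism classes of finite-language trimmed DTA, this goes through cleanly; I would make that reading explicit in the proof.
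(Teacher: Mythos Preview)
Your reflexivity argument and your citation of the preceding transitivity lemma match the paper exactly. The paper's own proof, however, stops there: it records reflexivity via the identity family, invokes the previous lemma for transitivity, and concludes that $\trianglelefteq$ is ``an ordering''. It never addresses antisymmetry. In effect the authors are content to establish that $\trianglelefteq$ is a preorder and are using ``partial ordering'' loosely.

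Your additional antisymmetry argument is therefore extra, and it has real problems. First, the depth inequality you claim goes the wrong way. If $\mathfrak{A}\trianglelefteq\mathfrak{A}'$ (i.e.\ $\mathfrak{A}$ covers $\mathfrak{A}'$), each $\phi_i$ sends a root-to-leaf path of $\mathfrak{A}$ to a walk of the same length in the tree $\mathfrak{A}'$; since trees are acyclic this forces $d(\mathfrak{A})\le d(\mathfrak{A}')$, not $d(\mathfrak{A}')\le d(\mathfrak{A})$. The symmetric hypothesis still gives equality of depths, so your conclusion $d(\mathfrak{A})=d(\mathfrak{A}')$ survives, but the reasoning you wrote is inverted. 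Second, and more seriously, the step from ``equal depths'' to ``isomorphic'' is not justified: there are many non-isomorphic labelled rooted trees of the same depth, and nothing you wrote rules them out. Your appeal to the uniqueness theorem of Section~2 does not close this gap either, because that theorem says a given finite language has a unique trimmed DTA; to invoke it you would first have to show that mutually covering automata recognise the same language, which you have not done (and which is not obvious, since the definition of covering does not require $\mathcal{L}(\mathfrak{A})\subseteq\mathcal{L}(\mathfrak{A}')$).

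In short: the parts of your proof that the paper actually needs are correct and essentially identical to the paper's; the antisymmetry portion is not in the paper, and as written it does not go through.
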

\begin{proof}
Via the family composed of the identity function, any DTA/NTA covers itself and thus $\trianglelefteq$ is reflexive. Since it is also transitive, it is indeed an ordering over DTA/NTA.
\end{proof}

\section{Cover DTA Recognition}
In this section we present a Cover DTA Recognition algorithm based on Message Passing. Like in the previous sections we  consider a (presumably small) automaton $\mathfrak{A}=\left(Q,\,\Sigma,\,\delta,\,q_0,\,F\right)$ and a (presumably bigger) automaton $\mathfrak{A}^\prime=\left(Q^\prime,\,\Sigma,\,\delta^\prime,\,q^\prime_0,\,F^\prime\right)$ and we would like to answer whether $\mathfrak{A}$ covers $\mathfrak{A}^\prime$. An instance of this problem is defined by the pair $\left(\mathfrak{A},\,\mathfrak{A}^\prime\right)$.

\begin{algorithm}
\caption{Cover DTA Recognition}
\label{alg:recognition}
\begin{algorithmic}[1]
\Procedure {Node}{$q^\prime$, $\mathfrak{A}$, $\mathfrak{A}^\prime$}\\
 \hspace*{\algorithmicindent} \textbf{Output:} $A_{q^\prime}\subseteq Q$ the states in $\mathfrak{A}$ that $q^\prime$ can be covered by 
\ForAll {$q^\prime_\sigma\in \delta^\prime\left(q^\prime,\,.\right)$}
\State wait for a message $A_\sigma$ from $q^\prime_\sigma$
\EndFor
\State $A_{q^\prime}\leftarrow F \cup \lbrace A^i_{q^\prime}\in Q\backslash F\vert \forall q^\prime_\sigma\,\exists A_\sigma^i\in A_\sigma\,\delta\left(A_{q^\prime}^i,\,\sigma\right)=A_{\sigma}^i \rbrace$
\State $Q^\prime_{prec}\leftarrow \delta^{\prime^{-1}}\left(q^\prime\right)$
\If {$Q^\prime_{prec} = \emptyset$}
\If {$q_0\notin A_{q^\prime}$}
\State \Return $\emptyset$ (broadcasting it to all $q^\prime_\sigma$)
\Else
\State $A_{q^\prime}\leftarrow \lbrace q_0\rbrace$
\EndIf
\Else $\ Q^\prime_{prec}=\lbrace \left(q^\prime_{prec},\,\sigma\right)\rbrace$
\State pass $A_{q^\prime}$ to $q_{prec}^\prime$ and wait for an answer
\State $A_{q^\prime}\leftarrow \delta\left({\sc Node}\left(q_{prec},\mathfrak{A},\,\mathfrak{A}^\prime\right),\,\sigma\right)\cup\left(A_{q^\prime}\cap \lbrace q_0\rbrace\right)$
\EndIf
\If {$q^\prime \in F^\prime$ and $A_{q^\prime}\cap F=\emptyset$}
\State \Return $\emptyset$ (broadcasting it to all $q^\prime_\sigma$)
\EndIf
\State \Return $A_{q^\prime}$ (broadcasting it to all $q^\prime_\sigma$)
\EndProcedure
\end{algorithmic}
\end{algorithm}

\begin{theorem}Deciding whether an automaton $\mathfrak{A}$ covers $\mathfrak{A}^\prime$ takes at most $\mathcal{O}\left(\lvert Q\rvert\lvert Q^\prime\rvert\right)$ serial time or $\mathcal{O}\left(\lvert Q\rvert d\left(\mathfrak{A}^\prime\right)\right)$ massively parallel time.
\end{theorem}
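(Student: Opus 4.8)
\end{theorem}

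\begin{proof}[Proof plan]
The plan is to prove both bounds by analysing Algorithm~\ref{alg:recognition}: first showing that it correctly decides the relation of Definition~\ref{def:coverDTA}, and then bounding the work it performs, once sequentially and once in a synchronous message-passing model with one processor per state of $\mathfrak{A}^\prime$.

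For correctness I would argue in two stages, matching the bottom-up and top-down passes of the procedure. The bottom-up pass computes, at each node $q^\prime$ via line~6, the set of states $q\in Q$ that are \emph{locally compatible} with $q^\prime$, i.e.\ states from which the part of $\mathfrak{A}$ reachable from $q$ can be laid along the subtree of $\mathfrak{A}^\prime$ rooted at $q^\prime$; the invariant is proved by induction on the height of $q^\prime$, with the leaves of $\mathfrak{A}^\prime$ as the base case (here using that a trimmed DTA has every leaf final) and line~6 encoding the inductive step, which is well defined because $\delta$ is deterministic, so $\delta(q,\sigma)$ is a single candidate to test for membership in the child set $A_\sigma$. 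The top-down pass (lines~7--18) then threads a globally consistent anchoring through the tree: line~9 forces $q_0$ into the root's set, the re-imaging $A_{q^\prime}\leftarrow\delta(\textsc{Node}(q_{prec},\dots),\sigma)\cup(A_{q^\prime}\cap\lbrace q_0\rbrace)$ of line~16 propagates the images of $q_0$ downward while allowing a fresh occurrence of $\mathfrak{A}$ to begin at $q^\prime$, and line~17 enforces that every final state of $\mathfrak{A}^\prime$ is the image of a final state of $\mathfrak{A}$. Combining the two passes, no call returns $\emptyset$ if and only if there is a family $\mathbf{\phi}_I$ as in Definition~\ref{def:coverDTA} with $\mathbf{\phi}_I(Q)=Q^\prime$ and $\mathbf{\phi}_I(F)\supseteq F^\prime$, that is, if and only if $\mathfrak{A}\trianglelefteq\mathfrak{A}^\prime$.

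Granted correctness, the counting is routine. Sequentially, the dominant term is line~6: for a fixed $q^\prime$ each of the at most $\lvert Q\rvert$ candidates is tested against each child of $q^\prime$, and since $\delta$ is deterministic each test is an $\mathcal{O}(1)$ membership query provided the sets $A_\sigma$ are stored as bit-vectors indexed by $Q$; this is $\mathcal{O}(\lvert Q\rvert\deg(q^\prime))$, and $\sum_{q^\prime}\deg(q^\prime)=\lvert Q^\prime\rvert-1$ since $\mathfrak{A}^\prime$ is a tree, so the bottom-up pass costs $\mathcal{O}(\lvert Q\rvert\lvert Q^\prime\rvert)$. Lines~7--18 add only one $\mathcal{O}(\lvert Q\rvert)$ re-imaging per node plus constant-size tests, and the parent maps $\delta^{\prime^{-1}}$ can be precomputed in $\mathcal{O}(\lvert Q^\prime\rvert)$, so the total stays $\mathcal{O}(\lvert Q\rvert\lvert Q^\prime\rvert)$. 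In the parallel model, the bottom-up pass is a leaves-to-root wave that terminates after $d(\mathfrak{A}^\prime)$ synchronous rounds (a node fires once all of its children have answered), and the top-down pass is a root-to-leaves wave finishing in another $d(\mathfrak{A}^\prime)$ rounds; in each round a node performs $\mathcal{O}(\lvert Q\rvert)$ local work, for a total of $\mathcal{O}\left(\lvert Q\rvert\, d(\mathfrak{A}^\prime)\right)$.

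The main obstacle is the correctness argument, not the accounting. One must be careful that the condition $\phi_i\left(\delta(.,.)\right)\subseteq\delta^\prime\left(\phi_i(.),.\right)$ points from the small automaton into the large one, so the inductive invariant concerns embeddings of $\mathfrak{A}$ into $\mathfrak{A}^\prime$ and not the reverse; and one must check that the per-node sets $A_{q^\prime}$, each only locally consistent, are simultaneously realisable by a single family $\mathbf{\phi}_I$. It is exactly the tree topology of $\mathfrak{A}^\prime$ that makes this gluing possible: every node has a unique parent, so no cyclic constraints arise among the $\phi_i$, and a consistent anchoring at $q_0$ always extends to a consistent choice at every node.
\end{proof}
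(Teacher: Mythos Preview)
Your proposal is correct and follows essentially the same approach as the paper: both argue correctness of Algorithm~\ref{alg:recognition} via its bottom-up pass (computing the locally admissible role-sets $A_{q'}$) followed by its top-down pass (anchoring at $q_0$ and pruning), and both obtain the bounds by charging $\mathcal{O}(\lvert Q\rvert)$ work per node of $\mathfrak{A}'$ sequentially and per synchronous round in parallel. Your accounting is a bit more explicit (the $\sum_{q'}\deg(q')=\lvert Q'\rvert-1$ step and the bit-vector representation of $A_\sigma$) and you flag the gluing issue that the tree topology of $\mathfrak{A}'$ resolves, but the underlying argument is the same two-pass message-passing analysis the paper gives.
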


\begin{proof}
In order to do this we  place agents in the nodes corresponding to the states $Q^\prime$, denoting both them and their corresponding states by $\lbrace q^\prime_i\vert i\in\overline{1,\,\lvert Q^\prime\rvert}\rbrace$. For the rest of the section we  describe their communication protocol and how they can achieve the identification of the functions $\phi_i$. 

In order to find the functions $\phi_i$ it is sufficient to ask which agents can fulfill the role of $q_0$ since that is the most demanding node (i.e. whether $\phi_i\left(q_0\right)$ can possibly be $q^\prime$). By this logic, it seems reasonable that the least demanding roles be the leaves, and consequently that their role be played by the leaves of $\mathfrak{A}^\prime$. Indeed, the leaves of $\mathfrak{A}^\prime$ must respect $\phi\left(\delta\left(q,\,.\right)\right)\subseteq\delta^\prime\left(\phi\left(q\right),\,.\right)=\emptyset$ i.e. they can only ever be leaves of $\mathfrak{A}$. Thus these leaves begin by sending a message to their predecessors that they can fulfill any role in $F$. Note that they may be in any superposition of final states i.e. in $\mathbf{Q}=\lbrace Q^i\vert Q^i \subseteq F\rbrace$.

A generic node $q^\prime$ can ever play the role of $q$ iff $\phi\left(\delta\left(q,\,.\right)\right)\subseteq\delta^\prime\left(\phi\left(q\right),\,.\right)=\delta^\prime\left(q^\prime,\,.\right)$ i.e. their successors may only play the roles of the successors of $q$, no others. Thus, a node must firstly wait for its successors to communicate their availabilities to it. Notably, nothing special is required for a node to play the role of a final state and this node can be played simultaneously with another. 

Let us assume that the children $\lbrace q^\prime_\sigma =\delta\left(q^\prime, \sigma\right)\rbrace$ of a generic node $q^\prime$ can be in the superpositions $\mathbf{Q}_\sigma=\lbrace Q^i_{\sigma}= \lbrace q^j_{\sigma,\,i} \rbrace \subseteq Q \rbrace$. We can now define the eligible superpositions of this state to be: 

\vspace*{-1.5em}
$$\mathbf{Q}= \lbrace Q^i\subseteq Q\vert \forall q\in Q^i\backslash F\,\forall q^\prime_\sigma\,\exists Q_\sigma^i\in\mathbf{Q}_\sigma\,\forall q_{\sigma,\,i}^j\,\delta\left(q,\,\sigma\right)=q_{\sigma,\,i}^j\rbrace$$

\vspace*{-0.5em}
Notably, this formula works for leaves too. Moreover, $\mathbf{Q}$ is closed under intersection and union and hence we could simply send its atoms:

\vspace*{-1.5em}
$$A_{q^\prime}= F \cup \lbrace A^i_{q^\prime}\in Q\backslash F\vert \forall q^\prime_\sigma\,\exists Q_\sigma^i\in\mathbf{Q}_\sigma\,\forall q_{\sigma,\,i}^j\,\delta\left(A_{q^\prime}^i,\,\sigma\right)=q_{\sigma,\,i}^j \rbrace$$

\vspace*{-0.5em}
Yet if instead of $\mathbf{Q}_\sigma$ we received its atoms $A_\sigma$, those would suffice in deliberating the existence of an adequate $Q_\sigma^i$, since if any non-empty such $Q_\sigma^i$ exists one of its atoms can work in its stead. 

\vspace*{-1.5em}
$$A_{q^\prime}= F \cup \lbrace A^i_{q^\prime}\in Q\backslash F\vert \forall q^\prime_\sigma\,\exists A_\sigma^i\in A_\sigma\,\delta\left(A_{q^\prime}^i,\,\sigma\right)=A_{\sigma}^i \rbrace$$

\vspace*{-0.5em}
Note that this formula is linear in size. Thus transmission takes $\lvert \delta\left(q^\prime,\,.\right)\rvert \mathcal{O}\left(\lvert Q\backslash F\rvert\right)$ if we omit the redundant information and only ever actually transmit $A_{q^\prime}\backslash F$. Getting all this information across the network takes $\left(d\left(\mathfrak{A}^\prime\right)+w\left(\mathfrak{A}^\prime\right)\right)\mathcal{O}\left(\lvert Q\backslash F\rvert \right)$ time if agents are actually independent, up to $d\left(\mathfrak{A}^\prime\right)\mathcal{O}\left(\lvert Q\backslash F\rvert \right)$ if they are sufficiently parallel, or $\lvert Q^\prime\rvert\mathcal{O}\left(\lvert Q\backslash F\rvert \right)$ if it is serialized according to a precomputed topological sorting over $\mathfrak{A}^\prime$.

These are the roles that can be played, but the roles are \emph{actually} played only when requested by a predecessor. Thus, we must check whether while trailing any path in $\mathfrak{A}^\prime$ we are only traversing actual occurrence of $\mathfrak{A}$.

Note that the source state $q^\prime_0$ must play the role of $q_0$. If it cannot ($q_0\notin A_{q^\prime_0}$) then $\mathfrak{A}$ does not cover $\mathfrak{A}^\prime$. Assuming henceforth that it can, it collapses $A_{q^\prime_0}$ to $\lbrace q_0\rbrace$. Moreover, for each of its children it prunes $A_\sigma$ to $\delta\left(q^\prime_0,\,\sigma\right)\cup\left(A_\sigma\cap\lbrace q_0\rbrace\right)$ and sends it back to them and shuts down. If ever this quantity is $\emptyset$ the algorithm fails (delivers a negative answer).

Next, for a generic node $q^\prime$ who receives back its now pruned $A_{q^\prime}$ it performs \emph{all} roles it can. Thus it prunes the $A_\sigma$ of its children to: 

\vspace*{-1.5em}
$$A_\sigma \leftarrow \lbrace \delta\left(A_{q^\prime}^i,\,\sigma\right)\cup\left(A_\sigma\cap\lbrace q_0\rbrace\right)\vert A_{q^\prime}^i\in A_{q^\prime}\rbrace$$

\vspace*{-0.5em}
and sends it back to them and shuts down. If ever this quantity is $\emptyset$ or if $q^\prime\in F^\prime$ but $A_{q^\prime}\cap F=\emptyset$ the algorithm fails (delivers a negative answer). Notably, this procedure also applies to $q^\prime_0$ after it has pruned its own $A_{q^\prime_0}$. The second phase of the algorithm has the same running time in all situations.

If all the agents shut down successfully then the flow traveled through $\mathfrak{A}$ only through occurrences of $\mathfrak{A}^\prime$ and ended in final states. The second part ensures that the functions $\phi_i$ are indeed completely defined. In this case, since each node has waited the answer from all successors at the first stage it is guaranteed that $\mathbf{\phi}_I\left(Q\right)=Q^\prime$. Note that $A_{q^\prime} = \mathbf{\phi}_I^{-1}\left(q^\prime\right)$.

In conclusion checking whether $\mathfrak{A}^\prime$ covers $\mathfrak{A}$ takes time at most $\mathcal{O}\left(\lvert Q\rvert\lvert Q^\prime\rvert\right)$ for a serial implementation and $\mathcal{O}\left(\lvert Q\rvert d \left(\mathfrak{A}^\prime\right)\right)$ for a massively parallel one. 
\end{proof}

\begin{remark*}
The above works even if $A_\sigma$ are not pruned by their ancestor, but by the agents themselves, such that any node only broadcasts its $A_{q^\prime}$. This observation leads to the alternative formulation in Algorithm~\ref{alg:recognition}.
\end{remark*}

\section{Cover DTA Minimization}

Let $\mathcal{L}^{\leq n}$ be a finite language, $\mathfrak{A}_{\mathcal{L}^{\leq n}}$ its associated DTA, $\mu : DTA\rightarrow \mathbb{R}$ be a partial function and $\mathcal{A}$ be a collection of DTA. For the remainder of this section we attempt to minimize $\mu$ over $\mathcal{A}$ i.e. compute the DTA:
$$\mathfrak{A}^*=\underset{\mathfrak{A}\in\mathcal{A}}{\arg\max}\lbrace \mu\left(\mathfrak{A}\right)\vert\mathfrak{A}_{\mathcal{L}^{\leq n}} \textmd{ is covered by } \mathfrak{A}\rbrace$$
\vspace*{-3em}
\begin{lemma}
\label{lem:trivialminimization}
Computing $\mathfrak{A}^*$ takes at most $\mathcal{O}\left(\lvert \mathcal{A}\rvert\lvert\mathfrak{A}_{\mathcal{L}^{\leq n}}\rvert^2\right)$ i.e. $\mathcal{O}\left(\lvert \mathcal{A}\rvert \left(\underset{w\in\mathcal{L}^{\leq n}}\sum\lvert w\rvert\right)^2\right)$ serial time or $\mathcal{O}\left(\lvert\mathfrak{A}_{\mathcal{L}^{\leq n}}\rvert\ d\left(\mathfrak{A}_{\mathcal{L}^{\leq n}}\right)\right)$ i.e. $\mathcal{O}\left(\underset{w\in\mathcal{L}^{\leq n}}\sum\lvert w\rvert\underset{w\in\mathcal{L}^{\leq n}}\max\lvert w\rvert\right)$ massively parallel time. 
\end{lemma}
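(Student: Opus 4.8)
The plan is to solve the problem by naive enumeration, using the recognition procedure of the preceding theorem as a black box. Concretely: for each candidate $\mathfrak{A}\in\mathcal{A}$, test whether $\mathfrak{A}$ covers $\mathfrak{A}_{\mathcal{L}^{\leq n}}$; whenever the test succeeds, evaluate $\mu\left(\mathfrak{A}\right)$ and keep a running maximizer. Returning the final maximizer gives $\mathfrak{A}^*$. By the previous theorem, one recognition test on the instance $\left(\mathfrak{A},\,\mathfrak{A}_{\mathcal{L}^{\leq n}}\right)$ costs $\mathcal{O}\left(\lvert Q\rvert\lvert\mathfrak{A}_{\mathcal{L}^{\leq n}}\rvert\right)$ serially and $\mathcal{O}\left(\lvert Q\rvert d\left(\mathfrak{A}_{\mathcal{L}^{\leq n}}\right)\right)$ in the massively parallel model, where $Q$ is the state set of $\mathfrak{A}$ (the cost of evaluating $\mu$ we regard as an oracle call, or as subsumed by the recognition cost). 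So, modulo the size of $\lvert Q\rvert$, the bounds are already in reach — the whole point is to argue that candidates which are ``too big'' need not be considered.

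The step I expect to be the main obstacle is exactly this size control: I want to show that any DTA covering $\mathfrak{A}_{\mathcal{L}^{\leq n}}$ has at most $\lvert\mathfrak{A}_{\mathcal{L}^{\leq n}}\rvert$ states, so that a candidate with more states can be rejected outright. I would argue as follows. Suppose $\mathfrak{A}\trianglelefteq\mathfrak{A}_{\mathcal{L}^{\leq n}}$ with witnessing family $\mathbf{\phi}_I$. Since $\mathbf{\phi}_I\left(Q\right)=Q^\prime$ contains the root $q_0^\prime$ of the tree $\mathfrak{A}_{\mathcal{L}^{\leq n}}$, some $\phi_i$ maps a state $q$ to $q_0^\prime$; were $q$ to have a predecessor along a letter $\sigma$, Definition~\ref{def:coverDTA} would place $q_0^\prime$ in $\delta^\prime\left(\phi_i\left(\cdot\right),\,\sigma\right)$, i.e. give an edge into the root of a tree — impossible — so $q=q_0$ and $\phi_i\left(q_0\right)=q_0^\prime$. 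I then claim \emph{this} $\phi_i$ is injective. Because $\mathfrak{A}_{\mathcal{L}^{\leq n}}$ is deterministic, the containment of Definition~\ref{def:coverDTA} forces $\phi_i$ to send every edge out of $q$ on a letter $\sigma$ onto the $\sigma$-edge out of $\phi_i\left(q\right)$; hence the image of the unique root-path to any state of $\mathfrak{A}$ is a root-path of the same length in $\mathfrak{A}_{\mathcal{L}^{\leq n}}$, so $\phi_i$ preserves depth, and it maps the branch point of two states to a node whose two relevant outgoing edges carry the two distinct letters on which those states diverge. Since $\mathfrak{A}$ is itself a deterministic tree, two distinct states either sit at different depths or diverge on distinct letters below their least common ancestor; in the first case their images differ by depth, in the second they fall into disjoint subtrees of $\mathfrak{A}_{\mathcal{L}^{\leq n}}$. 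Thus $\phi_i$ is injective and $\lvert Q\rvert\leq\lvert\mathfrak{A}_{\mathcal{L}^{\leq n}}\rvert$.

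With the size bound established, the rest is bookkeeping. For each candidate I first compare $\lvert Q\rvert$ with $\lvert\mathfrak{A}_{\mathcal{L}^{\leq n}}\rvert$ and discard it if it is larger (this comparison is dominated by everything else). For each of the at most $\lvert\mathcal{A}\rvert$ survivors the recognition test costs $\mathcal{O}\left(\lvert Q\rvert\lvert\mathfrak{A}_{\mathcal{L}^{\leq n}}\rvert\right)\leq\mathcal{O}\left(\lvert\mathfrak{A}_{\mathcal{L}^{\leq n}}\rvert^2\right)$, so the total serial time is $\mathcal{O}\left(\lvert\mathcal{A}\rvert\lvert\mathfrak{A}_{\mathcal{L}^{\leq n}}\rvert^2\right)$; the stated reformulation follows because $\mathfrak{A}_{\mathcal{L}^{\leq n}}$ is (by the constructive proof of Theorem~\ref{thm:single_DTA}) essentially the trie of $\mathcal{L}^{\leq n}$ and therefore has $\mathcal{O}\left(\underset{w\in\mathcal{L}^{\leq n}}\sum\lvert w\rvert\right)$ states. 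In the massively parallel model I would run all $\lvert\mathcal{A}\rvert$ recognition instances simultaneously — each on its own copy of the agents — so the time is just the per-instance bound $\mathcal{O}\left(\lvert Q\rvert d\left(\mathfrak{A}_{\mathcal{L}^{\leq n}}\right)\right)\leq\mathcal{O}\left(\lvert\mathfrak{A}_{\mathcal{L}^{\leq n}}\rvert d\left(\mathfrak{A}_{\mathcal{L}^{\leq n}}\right)\right)$, followed by a parallel reduction to extract the $\arg\max$, which is of lower order. Using $d\left(\mathfrak{A}_{\mathcal{L}^{\leq n}}\right)=\underset{w\in\mathcal{L}^{\leq n}}\max\lvert w\rvert$ gives the claimed parallel bound, and the only genuinely non-routine ingredient is the injectivity argument above.
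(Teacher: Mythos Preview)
Your proposal is correct and follows essentially the same approach as the paper: enumerate $\mathcal{A}$, discard candidates with more states than $\mathfrak{A}_{\mathcal{L}^{\leq n}}$, run the recognition test of the preceding theorem on the survivors, and aggregate. The only difference is that the paper disposes of the size bound in a single sentence (``otherwise not even one $\phi_i$ can be defined''), whereas you supply the actual injectivity argument showing that the $\phi_i$ rooted at $q_0^\prime$ embeds $\mathfrak{A}$ into $\mathfrak{A}_{\mathcal{L}^{\leq n}}$; your version is more carefully justified but not a different route.
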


\begin{proof}
Sequentially or in parallel check for every $\mathfrak{A}\in\mathcal{A}$ if it does indeed cover $\mathfrak{A}_{\mathcal{L}^{\leq n}}$. Note that only those satisfying $\lvert \mathfrak{A}\rvert \leq \lvert \mathfrak{A}_{\mathcal{L}^{\leq n}}\rvert$ i.e. with a smaller number of states need to be checked because otherwise not even one $\phi_i$ can be defined. In the massively parallel case, after the network has been set up computing the minimum of $\mu$ for admissible $\mathfrak{A}$ takes linear time by broadcasting.
\end{proof}

\begin{theorem}
\label{thm:probabilistic}
Let $\mu_0$ be a target real number and $R$ be an oracle providing DTA in $\mathcal{A}$ with a consistently significant probability i.e. for which there exists a polynomial $p\left(\mathcal{L}\right)$ such that $P\left[\mu\left(R\left(\right)\right)\leq \mu_0,\,\mathfrak{A} \textmd{ covers } \mathfrak{A}_{L^{\leq n}}\right]\geq 1/p\left(\mathcal{L}\right)$. Then computing a DTA $\mathfrak{A}\in\mathcal{A}$ covering $\mathfrak{A}_{\mathcal{L}^{\leq n}}$ takes on average at most $\mathcal{O}\left(p\left(\mathcal{L}\right)\lvert \mathcal{A}\rvert\lvert\mathfrak{A}_{\mathcal{L}^{\leq n}}\rvert^2\right)$ i.e. $\mathcal{O}\left(p\left(\mathcal{L}\right)\lvert \mathcal{A}\rvert \left(\underset{w\in\mathcal{L}^{\leq n}}\sum\lvert w\rvert\right)^2\right)$ serial time or $\mathcal{O}\left(p\left(\mathcal{L}\right)\lvert\mathfrak{A}_{\mathcal{L}^{\leq n}}\rvert\ d\left(\mathfrak{A}_{\mathcal{L}^{\leq n}}\right)\right)$ i.e. $\mathcal{O}\left(p\left(\mathcal{L}\right)\underset{w\in\mathcal{L}^{\leq n}}\sum\lvert w\rvert\underset{w\in\mathcal{L}^{\leq n}}\max\lvert w\rvert\right)$ massively parallel time. 
\end{theorem}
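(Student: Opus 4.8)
The plan is to turn $R$ into a Las Vegas procedure by rejection sampling: repeatedly invoke $R$, verify each returned candidate $\mathfrak{A}$, and stop and output the first candidate that passes. Verification has three steps. First, discard $\mathfrak{A}$ immediately if $\lvert\mathfrak{A}\rvert>\lvert\mathfrak{A}_{\mathcal{L}^{\leq n}}\rvert$, since then, exactly as argued in Lemma~\ref{lem:trivialminimization}, not even one $\phi_i$ can be defined and $\mathfrak{A}$ cannot cover $\mathfrak{A}_{\mathcal{L}^{\leq n}}$. Second, evaluate $\mu(\mathfrak{A})$ and keep $\mathfrak{A}$ only if this is defined and $\mu(\mathfrak{A})\le\mu_0$. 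Third, run the decision procedure behind Algorithm~\ref{alg:recognition} (respectively its massively parallel form) to test whether $\mathfrak{A}$ covers $\mathfrak{A}_{\mathcal{L}^{\leq n}}$. A candidate that passes all three tests is precisely a witness of the event appearing in the hypothesis, hence an admissible answer.

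For the cost of a single trial, the candidates surviving the first screen have at most $\lvert\mathfrak{A}_{\mathcal{L}^{\leq n}}\rvert$ states, so by the recognition theorem one coverage test costs $\mathcal{O}\!\left(\lvert\mathfrak{A}_{\mathcal{L}^{\leq n}}\rvert^2\right)$ serially and $\mathcal{O}\!\left(\lvert\mathfrak{A}_{\mathcal{L}^{\leq n}}\rvert\,d\left(\mathfrak{A}_{\mathcal{L}^{\leq n}}\right)\right)$ in the massively parallel model; thus each trial is no more expensive than a single run of the procedure behind Lemma~\ref{lem:trivialminimization}. Charging every trial the (deterministic, data-dependent) bound $C=\mathcal{O}\!\left(\lvert\mathcal{A}\rvert\lvert\mathfrak{A}_{\mathcal{L}^{\leq n}}\rvert^2\right)$, resp.\ $C=\mathcal{O}\!\left(\lvert\mathfrak{A}_{\mathcal{L}^{\leq n}}\rvert\,d\left(\mathfrak{A}_{\mathcal{L}^{\leq n}}\right)\right)$, of that lemma is then safe; the extra $\lvert\mathcal{A}\rvert$ factor is slack we simply carry along, which also covers, e.g., confirming that $R$'s output genuinely lies in $\mathcal{A}$. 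In the parallel setting only the one-time network set-up of size $\mathcal{O}\!\left(\lvert\mathfrak{A}_{\mathcal{L}^{\leq n}}\rvert\right)$ is amortised, with one message-passing round of length $\mathcal{O}\!\left(\lvert\mathfrak{A}_{\mathcal{L}^{\leq n}}\rvert\,d\left(\mathfrak{A}_{\mathcal{L}^{\leq n}}\right)\right)$ per trial.

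It then remains to bound the expected number of trials. Treating a call where $\mu(R())$ is undefined as a failure, the success probability of one trial is exactly $P\!\left[\mu\left(R()\right)\le\mu_0,\ R()\textmd{ covers }\mathfrak{A}_{\mathcal{L}^{\leq n}}\right]\ge 1/p\left(\mathcal{L}\right)$ by hypothesis. Reading ``oracle'' in the natural way, successive invocations of $R$ are independent, so the number $N$ of trials up to and including the first success is stochastically dominated by a geometric random variable with parameter $1/p(\mathcal{L})$, whence $\mathbb{E}[N]\le p(\mathcal{L})$. Since each trial costs at most the deterministic bound $C$ of the previous paragraph, the total expected time is at most $\mathbb{E}[N]\cdot C\le p(\mathcal{L})\cdot C$, which is exactly the claimed bound in both computational models; a Wald-type argument gives the same conclusion under the weaker round-by-round assumption that the conditional success probability given past failures never drops below $1/p(\mathcal{L})$.

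The only genuinely delicate point is this probabilistic accounting: one must be explicit about what ``consistently significant probability'' is assumed to buy — independence (or at least non-degradation) of the per-round success probability — and about the fact that the per-trial cost is a fixed, outcome-independent bound, which is precisely what lets the expectation of the product of (number of trials) and (cost per trial) factor into the product of expectations. Everything else — the state-count pruning, the single coverage check, and the straight reduction to Lemma~\ref{lem:trivialminimization} and the recognition theorem — is routine.
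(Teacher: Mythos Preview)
Your proof is correct and follows essentially the same approach as the paper: repeatedly sample from $R$, verify each candidate using the recognition bound, and observe that the expected number of trials is at most $p(\mathcal{L})$, so the total expected cost is $p(\mathcal{L})$ times the per-trial bound from Lemma~\ref{lem:trivialminimization}. Your write-up is considerably more careful than the paper's one-line argument, in particular in making explicit the independence/non-degradation assumption on $R$ and the fact that the per-trial cost is outcome-independent so that the expectation factors.
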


\begin{proof}
While no suitable $\mathfrak{A}$ has been found get a candidate from $R$ and check it. Using the recognition bounds and considering the number of attempts required on average for hitting the target value is on average $1/P\left[\mu\left(R\left(\right)\right)\leq \mu_0,\,\mathfrak{A} \textmd{ covers } \mathfrak{A}_{L^{\leq n}}\right]\leq p\left(\mathcal{L}\right)$ the result follows trivially.
\end{proof}

\begin{theorem}If we know a priori that $\mu$ only takes values over paths, computing $\mathfrak{A}^*$ takes at most $\mathcal{O}\left(\lvert\mathcal{L}\rvert \underset{w\in\mathcal{L}^{\leq n}}\min\lvert w\rvert\lvert\mathfrak{A}_{\mathcal{L}^{\leq n}}\rvert\right)$ i.e. $\mathcal{O}\left(\lvert \mathcal{L}\rvert \underset{w\in\mathcal{L}^{\leq n}}\min\lvert w\rvert\underset{w\in\mathcal{L}^{\leq n}}\sum\lvert w\rvert\right)$ serial time or $\mathcal{O}\left(\underset{w\in\mathcal{L}^{\leq n}}\min\lvert w\rvert\ d\left(\mathfrak{A}_{\mathcal{L}^{\leq n}}\right)\right)$ i.e. $\mathcal{O}\left(\underset{w\in\mathcal{L}^{\leq n}}\min\lvert w\rvert\underset{w\in\mathcal{L}^{\leq n}}\max\lvert w\rvert\right)$ massively parallel time. 
\end{theorem}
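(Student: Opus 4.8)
The plan is to exploit the hypothesis that $\mu$ is only defined on path automata so as to collapse the optimization over $\mathcal{A}$ to an optimization over strings. If $\mu(\mathfrak{A})$ is defined then $\mathfrak{A}$ has a path topology, so by Lemma~\ref{lem:base_unique_DTA} it is the automaton $\mathfrak{A}_s$ recognizing some string $s$ and only $s$; hence $\mathfrak{A}^{*}=\mathfrak{A}_{s^{*}}$ for some string $s^{*}$, and it suffices to maximize $\mu(\mathfrak{A}_{s})$ over the strings $s$ for which $\mathfrak{A}_{s}$ covers $\mathfrak{A}_{\mathcal{L}^{\leq n}}$.

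The first step is to characterize those strings: I would show that $\mathfrak{A}_{s}$ covers $\mathfrak{A}_{\mathcal{L}^{\leq n}}$ if and only if $s$ is a cover of every word of $\mathcal{L}^{\leq n}$. The ``if'' direction is immediate from the earlier theorem stating that the automaton recognizing $s$ covers every automaton recognizing a non-empty subset of $\mathcal{L}_{s}^{\leq n}$: if $s$ covers each $w\in\mathcal{L}^{\leq n}$ then $\mathcal{L}^{\leq n}\subseteq\mathcal{L}_{s}^{\leq n}$, so $\mathfrak{A}_{\mathcal{L}^{\leq n}}$ recognizes a non-empty subset of $\mathcal{L}_{s}^{\leq n}$ and is covered by $\mathfrak{A}_{s}$. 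For the ``only if'' direction I would unfold Definition~\ref{def:coverDTA}. Because $\mathfrak{A}_{\mathcal{L}^{\leq n}}$ is a tree, each $\phi_{i}$ in the covering family sends the path $q_{0}\to\cdots\to q_{\lvert s\rvert}$ of $\mathfrak{A}_{s}$ onto a directed path of length $\lvert s\rvert$ spelling $s$, i.e. a genuine occurrence of $s$ inside some word of $\mathcal{L}^{\leq n}$. The condition $\mathbf{\phi}_{I}(Q)=Q^{\prime}$ then says that every state of $\mathfrak{A}_{\mathcal{L}^{\leq n}}$ lies on one of these occurrences; restricting attention to the maximal path associated with a word $w$, this forces every character of $w$ to belong to an occurrence of $s$, that is, $s$ covers $w$. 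The condition $\mathbf{\phi}_{I}(F)\supseteq F^{\prime}$, with $F=\lbrace q_{\lvert s\rvert}\rbrace$, additionally pins the leaf of $w$'s path to equal $\phi_{i}(q_{\lvert s\rvert})$ for some $i$, so the occurrence ending at that leaf ends exactly at position $\lvert w\rvert$ ($s$ is a suffix of $w$), and, since $w^{1}$ too must be covered, $s$ is a prefix of $w$. Consequently every admissible $s$ is a common prefix of $\mathcal{L}^{\leq n}$, hence a prefix of the shortest word, of length at most $\underset{w\in\mathcal{L}^{\leq n}}\min\lvert w\rvert$.

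Given the characterization the algorithm is straightforward. Enumerate the candidate strings, taking for each $w\in\mathcal{L}^{\leq n}$ its prefixes of length at most $\underset{w\in\mathcal{L}^{\leq n}}\min\lvert w\rvert$ -- a superset of the admissible strings, of total size at most $\lvert\mathcal{L}\rvert\underset{w\in\mathcal{L}^{\leq n}}\min\lvert w\rvert$ (one could instead first compute the longest common prefix of $\mathcal{L}^{\leq n}$ in $\mathcal{O}(\lvert\mathfrak{A}_{\mathcal{L}^{\leq n}}\rvert)$ time and keep only its prefixes, leaving at most $\underset{w\in\mathcal{L}^{\leq n}}\min\lvert w\rvert$ candidates). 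For each candidate $s$, test whether $s$ covers every $w\in\mathcal{L}^{\leq n}$; testing a single pair $(s,w)$ is an elementary linear-time computation (locate all occurrences of $s$ in $w$ by string matching and check that they leave no gap and reach both ends), costing $\mathcal{O}(\lvert w\rvert)$ since $\lvert s\rvert\leq\lvert w\rvert$, hence $\mathcal{O}(\sum_{w\in\mathcal{L}^{\leq n}}\lvert w\rvert)=\mathcal{O}(\lvert\mathfrak{A}_{\mathcal{L}^{\leq n}}\rvert)$ per candidate; among the admissible candidates output one maximizing $\mu$, or report that no string covers all of $\mathcal{L}^{\leq n}$. Summing over the at most $\lvert\mathcal{L}\rvert\underset{w\in\mathcal{L}^{\leq n}}\min\lvert w\rvert$ candidates, and absorbing the evaluations of $\mu$, gives the serial bound $\mathcal{O}(\lvert\mathcal{L}\rvert\underset{w\in\mathcal{L}^{\leq n}}\min\lvert w\rvert\,\lvert\mathfrak{A}_{\mathcal{L}^{\leq n}}\rvert)$. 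For the massively parallel bound, run all candidate tests simultaneously: the test for a single $s$ is an instance of the Cover DTA Recognition procedure (Algorithm~\ref{alg:recognition}) on $\mathfrak{A}_{s}$ and $\mathfrak{A}_{\mathcal{L}^{\leq n}}$, which by the recognition bound of the preceding section runs in $\mathcal{O}(\lvert s\rvert\,d(\mathfrak{A}_{\mathcal{L}^{\leq n}}))\leq\mathcal{O}(\underset{w\in\mathcal{L}^{\leq n}}\min\lvert w\rvert\;d(\mathfrak{A}_{\mathcal{L}^{\leq n}}))$ parallel time; a final reduction across the candidates to select the $\mu$-maximizer is of dominated depth, yielding $\mathcal{O}(\underset{w\in\mathcal{L}^{\leq n}}\min\lvert w\rvert\;d(\mathfrak{A}_{\mathcal{L}^{\leq n}}))$.

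The step I expect to be the main obstacle is the ``only if'' half of the characterization: one must check carefully that the \emph{inequality} $\mathbf{\phi}_{I}(F)\supseteq F^{\prime}$ of Definition~\ref{def:coverDTA} (rather than equality) still forces every leaf of $\mathfrak{A}_{\mathcal{L}^{\leq n}}$ to be the image of the unique final state $q_{\lvert s\rvert}$ of $\mathfrak{A}_{s}$, and that the image-union condition $\mathbf{\phi}_{I}(Q)=Q^{\prime}$ genuinely leaves no character of any $w$ outside a full occurrence of $s$ -- particularly the boundary characters near positions $1$ and $\lvert w\rvert$, where the prefix/suffix bookkeeping has to be carried out. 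Once that is settled, the candidate count, the elementary cover test, and the parallel scheduling are routine.
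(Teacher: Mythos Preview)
Your argument is correct and follows the same overall scheme as the paper: identify the admissible path automata with a small set of candidate strings of length at most $\min_{w\in\mathcal{L}^{\leq n}}\lvert w\rvert$, then test each candidate. The accounting differs slightly. The paper enumerates one candidate per final state of $\mathfrak{A}_{\mathcal{L}^{\leq n}}$ (hence $\lvert\mathcal{L}\rvert$ candidates) and charges each test the full Cover DTA Recognition cost $\mathcal{O}\bigl(\min_{w}\lvert w\rvert\cdot\lvert\mathfrak{A}_{\mathcal{L}^{\leq n}}\rvert\bigr)$; you instead enumerate $\lvert\mathcal{L}\rvert\cdot\min_{w}\lvert w\rvert$ prefix candidates and charge each an elementary $\mathcal{O}\bigl(\lvert\mathfrak{A}_{\mathcal{L}^{\leq n}}\rvert\bigr)$ string-cover test. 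Both products land on the same serial bound, and for the parallel bound you and the paper invoke the recognition theorem identically. Your extra work on the ``only if'' direction (that every admissible $s$ is a common prefix, via the root having no predecessor and $\mathbf{\phi}_I(F)\supseteq F'$ forcing $s$ to be a suffix of each word) is a genuine addition: the paper asserts without justification that the candidate path ``must end in a final state,'' which is the weaker and not obviously correct version of what you prove.
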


\begin{proof}
Any admissible $\mathfrak{A}$ must be a partial path in $\mathfrak{A}_{\mathcal{L}^{\leq n}}$, starting with $q_0$ and which are at most the number of states. Moreover, any such path must end in a final state, which are exactly as many as words in $\mathcal{L}$. Thus we can create a set $\mathcal{A}$ of DTA that is exactly $\lvert \mathcal{L}\rvert$ big. Note that these path automata have at most $\underset{w\in\mathcal{L}^{\leq n}}\min\lvert w\rvert$ states or else $\mathbf{\phi}_I\left(F\right)\not\supseteq F$. Applying the cover recognition bound the result follows.
\end{proof}

\begin{corollary} Computing the shortest cover of all words in $\mathcal{L}$ takes at most $\mathcal{O}\left(\lvert\mathcal{L}\rvert \underset{w\in\mathcal{L}^{\leq n}}\min\lvert w\rvert\lvert\mathfrak{A}_{\mathcal{L}^{\leq n}}\rvert\right)$ i.e. $\mathcal{O}\left(\lvert \mathcal{L}\rvert \underset{w\in\mathcal{L}^{\leq n}}\min\lvert w\rvert\underset{w\in\mathcal{L}^{\leq n}}\sum\lvert w\rvert\right)$ serial time or $\mathcal{O}\left(\underset{w\in\mathcal{L}^{\leq n}}\min\lvert w\rvert\ d\left(\mathfrak{A}_{\mathcal{L}^{\leq n}}\right)\right)$ i.e. $\mathcal{O}\left(\underset{w\in\mathcal{L}^{\leq n}}\min\lvert w\rvert\underset{w\in\mathcal{L}^{\leq n}}\max\lvert w\rvert\right)$ massively parallel time. 
\end{corollary}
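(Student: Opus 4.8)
The plan is to obtain this statement as an immediate specialization of the preceding theorem, taking $\mu$ to be the negated length of a path automaton. Recall that by the theorem ``String Covers as Cover DTA'' the path automaton $\mathfrak{A}_s$ recognizing exactly $s$ (with $\lvert s\rvert\le n$) covers $\mathfrak{A}_{\mathcal{L}^{\leq n}}$ whenever every word of $\mathcal{L}$ is covered by $s$, i.e.\ whenever $s$ is a common cover of $\mathcal{L}$. First I would establish the converse: if $\mathfrak{A}_s\trianglelefteq\mathfrak{A}_{\mathcal{L}^{\leq n}}$, then $s$ is a common cover of $\mathcal{L}$. Given a covering family $\mathbf{\phi}_I$, fix a leaf $f\in F'$ and restrict attention to the unique root-to-$f$ path of $\mathfrak{A}_{\mathcal{L}^{\leq n}}$, which spells a word $w\in\mathcal{L}$; since $\mathfrak{A}_s$ is itself a path of length $\lvert s\rvert$ and transitions are preserved, each $\phi_i$ whose image meets this path is a shift $q_j\mapsto q'_{i+j-1}$ placing an occurrence of $s$ inside $w$, and the conditions $\mathbf{\phi}_I(Q)=Q'$ and $\mathbf{\phi}_I(F)\supseteq F'$ force these occurrences to cover every position of $w$ and to place one flush with the end of $w$. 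Hence $s$ covers $w$; as $f$ was arbitrary, $s$ is a common cover of $\mathcal{L}$. Consequently the shortest common cover of $\mathcal{L}$ is precisely
$$\underset{\mathfrak{A}\in\mathcal{A}}{\arg\max}\,\bigl\{\mu(\mathfrak{A})\ \big|\ \mathfrak{A}_{\mathcal{L}^{\leq n}}\textmd{ is covered by }\mathfrak{A}\bigr\},\qquad \mu(\mathfrak{A})=-\,(\textmd{length of }\mathfrak{A}),$$
where $\mathcal{A}$ ranges over path DTA and $\mu$ is a partial function taking values only over paths.

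Next I would observe that this $\mathcal{A}$ may be taken to be exactly the collection built in the proof of the preceding theorem. Every common cover $s$ of $\mathcal{L}$ is a prefix of every $w\in\mathcal{L}$, because the leftmost occurrence of $s$ in $w$ must begin at position $1$; hence $s$ is read off a partial path of $\mathfrak{A}_{\mathcal{L}^{\leq n}}$ starting at $q_0$. Moreover $\lvert s\rvert\le\underset{w\in\mathcal{L}^{\leq n}}\min\lvert w\rvert$, since $s$ must cover the shortest word, so this path has at most $\underset{w\in\mathcal{L}^{\leq n}}\min\lvert w\rvert$ states; and it ends at a final state of $\mathfrak{A}_{\mathcal{L}^{\leq n}}$, of which there are exactly $\lvert\mathcal{L}\rvert$. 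Thus $\lvert\mathcal{A}\rvert\le\lvert\mathcal{L}\rvert$ and every member has at most $\underset{w\in\mathcal{L}^{\leq n}}\min\lvert w\rvert$ states, which are precisely the parameters entering the theorem's bound.

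Applying the preceding theorem verbatim then yields the two claimed running times, $\mathcal{O}\!\left(\lvert\mathcal{L}\rvert\,\underset{w\in\mathcal{L}^{\leq n}}\min\lvert w\rvert\,\lvert\mathfrak{A}_{\mathcal{L}^{\leq n}}\rvert\right)$ serial and $\mathcal{O}\!\left(\underset{w\in\mathcal{L}^{\leq n}}\min\lvert w\rvert\ d(\mathfrak{A}_{\mathcal{L}^{\leq n}})\right)$ massively parallel, together with their rewritings via $\underset{w\in\mathcal{L}^{\leq n}}\sum\lvert w\rvert$ and $\underset{w\in\mathcal{L}^{\leq n}}\max\lvert w\rvert$. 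The only genuinely non-routine point is the converse of the covering characterization in the first paragraph — checking that a covering path automaton really certifies a string cover — and the accompanying prefix observation that guarantees the restricted candidate set $\mathcal{A}$ loses no shortest common cover; everything else is bookkeeping inherited from the earlier results.
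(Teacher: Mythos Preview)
Your proposal is correct and follows essentially the same route as the paper: reduce the shortest common cover to the path-restricted optimization of the preceding theorem via the equivalence ``$s$ is a common cover of $\mathcal{L}$ $\Leftrightarrow$ $\mathfrak{A}_s\trianglelefteq\mathfrak{A}_{\mathcal{L}^{\leq n}}$'', then read off the bounds. The paper's own proof is a four-sentence sketch that simply asserts this equivalence and invokes the theorem; you supply considerably more detail, in particular an explicit argument for the converse direction (covering path automaton $\Rightarrow$ string cover) and the prefix observation, both of which the paper leaves implicit or pushes into the preceding theorem's proof.
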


\begin{proof}
A set of strings admits a common cover iff they can be covered by a common string iff their corresponding DTA can be covered by its DTA iff the big DTA can be covered by its DTA. The tree topology does not come into play because the cover should be a path. Thus we simply need to optimize the depth over paths. Applying the theorem above the result follows immediately.
\end{proof}

\begin{lemma} Computing the shortest cover of all words in $\mathcal{L}$ takes at least $\Omega\left(\underset{w\in\mathcal{L}^{\leq n}}\min\lvert w\rvert\underset{w\in\mathcal{L}^{\leq n}}\sum\lvert w\rvert\right)$ serial time or $\Omega\left(\underset{w\in\mathcal{L}^{\leq n}}\min\lvert w\rvert^2\right)$ massively parallel time.  
\end{lemma}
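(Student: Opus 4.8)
The statement is a lower bound matching the Corollary's upper bound in the massively parallel regime (and, with the ``roughly equal length'' hypothesis of the following theorem, pinning the parallel complexity at $\Theta\left(\min_{w}\lvert w\rvert^{2}\right)$). The plan is to prove it by an adversary argument in the message-passing / Cover-DTA model used throughout the paper, after a normalisation: it suffices to exhibit, for infinitely many $m,k$, a finite language $\mathcal{L}$ all of whose words have length exactly $m$ (so $\min_{w}\lvert w\rvert=\max_{w}\lvert w\rvert=m$, $\sum_{w}\lvert w\rvert=km$, and $\mathfrak{A}_{\mathcal{L}^{\leq n}}$ is a trie of $\Theta(km)$ nodes and depth $m$) on which every correct algorithm must spend $\Omega(km^{2})$ serial steps, respectively $\Omega(m^{2})$ parallel rounds. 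Unequal lengths only strengthen the adversary, so the equal-length family already yields $\Omega\left(\min_{w}\lvert w\rvert\cdot\sum_{w}\lvert w\rvert\right)$ and $\Omega\left(\min_{w}\lvert w\rvert^{2}\right)$.

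Recall from the proof of the Corollary that a common cover must be a prefix of every word, hence one of the $m$ prefixes $p_{1},\dots,p_{m}$ of a fixed shortest word $w_{0}$; a correct algorithm must both locate the smallest $\ell^{*}$ for which $p_{\ell^{*}}$ covers every word and certify that no shorter $p_{\ell}$ does. I would design $w_{0}$ and the other $k-1$ words over a binary alphabet so that, for each candidate length $\ell$, the outcome of the test ``$p_{\ell}$ covers all words'' depends on a block $B_{\ell}$ of $\Theta(m)$ positions (a disjoint ``gadget slot'' in each word, encoding one equality constraint between two symbols $\Theta(m)$ apart), and so that the blocks $B_{1},\dots,B_{m}$ are pairwise independent: revealing every character outside $B_{\ell}$ leaves the length-$\ell$ test completely undetermined, and a probe inside $B_{\ell}$ conveys nothing about any other test. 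The adversary answers each probe by the value keeping the most candidate answers consistent, committing only when forced; a standard potential argument then keeps at least two values of $\ell^{*}$ alive until $\Omega(m)$ probes have landed in each of the $\Theta(km)$ gadget slots, i.e.\ until $\Omega(km^{2})$ probes in total.

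In the parallel model the same family forces, for the recognition instance on which $\ell^{*}$ must be certified, $\Omega(m)$ rounds each moving $\Omega(m)$-sized messages along the depth-$m$ critical path of $\mathfrak{A}_{\mathcal{L}^{\leq n}}$ (this is the worst case of the recognition analysis with $\lvert Q\rvert=\Theta(m)$ and $d(\mathfrak{A}_{\mathcal{L}^{\leq n}})=m$), hence $\Omega(m^{2})$ rounds; undoing the normalisation gives $\Omega\left(\min_{w}\lvert w\rvert^{2}\right)$. Collecting the two cases proves the lemma.

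The main obstacle is the construction of the gadget slots so that the $m$ candidate-length tests are genuinely non-interfering — proving that probes or messages spent refuting one length cannot be recycled for another. This is exactly what rules out a KMP/suffix-structure-style speed-up and is what produces the extra $\min_{w}\lvert w\rvert$ factor over merely reading the $\sum_{w}\lvert w\rvert$ input characters; the independence of the $B_{\ell}$ is also what makes the reduction to a single recognition instance lossless, i.e.\ prevents amortisation across the $m$ implicit recognitions.
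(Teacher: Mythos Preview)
Your proposal is far more ambitious than what the paper actually provides. The paper's argument is three sentences: there are up to $\min_{w}\lvert w\rvert$ candidate cover lengths, verifying each against a word $w$ is at least linear in $\lvert w\rvert$, and ``even if we knew where to check and put a grid to work we couldn't do better.'' That is a cost tally which simply \emph{assumes} no amortisation across candidate lengths is possible; there is no adversary, no gadget, no hard instance exhibited.

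You correctly identify that this non-amortisation is precisely what would have to be proved for a rigorous lower bound, and you propose an adversary with independent blocks $B_{\ell}$ to enforce it. But there is a genuine gap: the construction of the $B_{\ell}$ is never given, and you yourself flag their mutual independence as ``the main obstacle.'' There is reason to doubt it can be overcome as stated. All candidate covers $p_{\ell}$ are prefixes of the \emph{same} shortest word, so the $m$ tests are structurally entangled; indeed, for a single word the shortest cover is computable in \emph{linear} time (Apostolico et al., cited in the introduction), so for $\lvert\mathcal{L}\rvert=1$ the tests \emph{can} be fully amortised and the serial bound $\Omega(m^{2})$ your scheme would yield there is false. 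Any independence must therefore come from the interaction across several words, yet your parameter accounting (blocks of size $\Theta(m)$, a ``gadget slot in each word'' per length, and ``$\Theta(km)$ gadget slots'' each requiring $\Omega(m)$ probes) does not fit inside $k$ words of length $m$, and no concrete family is exhibited. As it stands this is a plan whose decisive step is missing and may not go through; the paper, for its part, simply does not attempt that step.
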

\begin{proof}
For all words we must, in the worst case, compute the covers of length up to $\underset{w\in\mathcal{L}^{\leq n}}\min\lvert w\rvert$ or equivalently check those resulting from a prior step. Checking is at least linear in $\lvert w\rvert$. Even if we knew where to check and put a grid to work we couldn't do better. 
\end{proof}
\begin{theorem} Let all words have roughly the same size i.e. $\exists c\in\mathbb{N}$ a fixed constant such that $c\geq \underset{w\in\mathcal{L}^{\leq n}}\max \lvert w\rvert /\underset{w\in\mathcal{L}^{\leq n}}\min \lvert w\rvert$. Computing the shortest cover of all words in $\mathcal{L}$ takes roughly $\Theta\left(\underset{w\in\mathcal{L}^{\leq n}}\min\lvert w\rvert^2\right)$ massively parallel time.  
\end{theorem}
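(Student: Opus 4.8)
The plan is to obtain the claimed $\Theta$ bound by sandwiching: the upper bound comes from the preceding Corollary together with the size hypothesis, and the lower bound is exactly the one established in the preceding Lemma. No new algorithmic idea is needed; the work is purely in collapsing the two-factor bounds to a single factor using $c$.

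First I would invoke the Corollary: computing the shortest common cover of all words in $\mathcal{L}$ takes at most $\mathcal{O}\left(\underset{w\in\mathcal{L}^{\leq n}}\min\lvert w\rvert\ d\left(\mathfrak{A}_{\mathcal{L}^{\leq n}}\right)\right)$ massively parallel time. Recalling the earlier Corollary that $d\left(\mathcal{L}^{\leq n}\right)=d\left(\mathfrak{A}_{\mathcal{L}^{\leq n}}\right)=\underset{w\in\mathcal{L}^{\leq n}}\max\lvert w\rvert$, this rewrites as $\mathcal{O}\left(\underset{w\in\mathcal{L}^{\leq n}}\min\lvert w\rvert\underset{w\in\mathcal{L}^{\leq n}}\max\lvert w\rvert\right)$. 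Now I apply the hypothesis $\underset{w\in\mathcal{L}^{\leq n}}\max\lvert w\rvert \leq c\cdot\underset{w\in\mathcal{L}^{\leq n}}\min\lvert w\rvert$ with $c$ a fixed constant, so the product is bounded by $c\cdot\underset{w\in\mathcal{L}^{\leq n}}\min\lvert w\rvert^2$, which is $\mathcal{O}\left(\underset{w\in\mathcal{L}^{\leq n}}\min\lvert w\rvert^2\right)$ since $c$ is absorbed into the asymptotic notation.

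Next I would invoke the preceding Lemma, which gives the matching lower bound $\Omega\left(\underset{w\in\mathcal{L}^{\leq n}}\min\lvert w\rvert^2\right)$ on the massively parallel time (this holds regardless of the size hypothesis, so in particular it holds here). Combining the $\mathcal{O}$ and $\Omega$ bounds yields the desired $\Theta\left(\underset{w\in\mathcal{L}^{\leq n}}\min\lvert w\rvert^2\right)$, completing the proof.

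I do not anticipate a genuine obstacle here; the only point requiring any care is making sure the constant $c$ from the hypothesis is treated as fixed (not part of the input), so that it is legitimate to swallow it into the big-$\mathcal{O}$ — this is exactly why the statement says ``roughly'' and fixes $c$ in advance. Everything else is a direct substitution into results already proved.
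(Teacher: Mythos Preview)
Your proposal is correct and follows exactly the paper's approach: the paper's proof is the single chain $\mathcal{O}\!\left(\min\lvert w\rvert\cdot\max\lvert w\rvert\right)\subseteq\mathcal{O}\!\left(c\cdot\min\lvert w\rvert^2\right)=\mathcal{O}\!\left(\min\lvert w\rvert^2\right)$ followed by ``applying the lemma above the result follows immediately,'' which is precisely your sandwich of the Corollary (upper bound, collapsed via the constant $c$) with the preceding Lemma (lower bound). If anything, you are more explicit than the paper in spelling out both halves of the $\Theta$.
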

\begin{proof}
$\mathcal{O}\left(\underset{w\in\mathcal{L}^{\leq n}}\min\lvert w\rvert\underset{w\in\mathcal{L}^{\leq n}}\max\lvert w\rvert\right)\subseteq\mathcal{O}\left(\underset{w\in\mathcal{L}^{\leq n}}\min\lvert w\rvert c \underset{w\in\mathcal{L}^{\leq n}}\min\lvert w\rvert\right)=\mathcal{O}\left(\underset{w\in\mathcal{L}^{\leq n}}\min\lvert w\rvert^2\right)$
Applying the lemma above the result follows immediately.
\end{proof}

\bibliography{bibliography}

\newpage

\appendix

\begin{figure}
\begin{center}
\begin{tikzpicture}
\draw[->, thick] (-0.7,0) -- (-0.3,0);
\draw (0,0) circle (0.3);
\node at (0,0) {$q_0$};
\node[anchor=south,red] at (0.5,0) {$a$};
\draw[->, thick,red] (0.3,0) -- (0.7,0);
\draw (1,0) circle (0.3);
\node at (1,0) {$q_1$};
\node[anchor=south,red] at (1.5,0) {$b$};
\draw[->, thick,red] (1.3,0) -- (1.7,0);
\draw (2,0) circle (0.3);
\node at (2,0) {$q_2$};
\node[anchor=south,red] at (2.5,0) {$a$};
\draw[->, thick,red] (2.3,0) -- (2.7,0);
\draw (3,0) circle (0.3);
\draw (3,0) circle (0.25);
\node at (3,0) {$q_3$};
\node[anchor=south, rotate=36.83,red] at (3.5,0.3) {$a$};
\draw[->, thick,red] (3.3,0) -- (3.7,0.6);
\draw (4,0.6) circle (0.3);
\node at (4,0.6) {$q_4^1$};
\node[anchor=south, rotate=-36.83] at (3.5,-0.3) {$b$};
\draw[->, thick] (3.3,0) -- (3.7,-0.6);
\draw (4,-0.6) circle (0.3);
\node at (4,-0.6) {$q_4^2$};
\node[anchor=south,red] at (4.5,0.6) {$b$};
\draw[->, thick,red] (4.3,0.6) -- (4.7,0.6);
\draw (5,0.6) circle (0.3);
\node at (5,0.6) {$q_5^1$};
\node[anchor=south] at (4.5,-0.6) {$a$};
\draw[->, thick] (4.3,-0.6) -- (4.7,-0.6);
\draw (5,-0.6) circle (0.25);
\draw (5,-0.6) circle (0.3);
\node at (5,-0.6) {$q_5^2$};
\node[anchor=south,red] at (5.5,0.6) {$a$};
\draw[->, thick,red] (5.3,0.6) -- (5.7,0.6);
\draw (6,0.6) circle (0.3);
\draw (6,0.6) circle (0.25);
\node at (6,0.6) {$q_6^1$};
\node[anchor=south] at (5.5,-0.6) {$a$};
\draw[->, thick] (5.3,-0.6) -- (5.7,-0.6);
\draw (6,-0.6) circle (0.3);
\node at (6,-0.6) {$q_6^2$};
\node[anchor=south,rotate=-71.56] at (5.5,-1.2) {$b$};
\draw[->, thick] (5.3,-0.6) -- (5.7,-1.8);
\draw (6,-1.8) circle (0.3);
\node at (6,-1.8) {$q_6^3$};
\node[anchor=south,rotate=71.56,red] at (6.5,1.2) {$a$};
\draw[->, thick,red] (6.3,0.6) -- (6.7,1.8);
\draw (7,1.8) circle (0.3);
\node at (7,1.8) {$q_7^1$};
\node[anchor=south] at (6.5,0.6) {$b$};
\draw[->, thick] (6.3,0.6) -- (6.7,0.6);
\draw (7,0.6) circle (0.3);
\node at (7,0.6) {$q_7^2$};
\node[anchor=south] at (6.5,-0.6) {$b$};
\draw[->, thick] (6.3,-0.6) -- (6.7,-0.6);
\draw (7,-0.6) circle (0.3);
\node at (7,-0.6) {$q_7^3$};
\node[anchor=south] at (6.5,-1.8) {$a$};
\draw[->, thick] (6.3,-1.8) -- (6.7,-1.8);
\draw (7,-1.8) circle (0.3);
\draw (7,-1.8) circle (0.25);
\node at (7,-1.8) {$q_7^4$};
\node[anchor=south,red] at (7.5,1.8) {$b$};
\draw[->, thick,red] (7.3,1.8) -- (7.7,1.8);
\draw (8,1.8) circle (0.3);
\node at (8,1.8) {$q_8^1$};
\node[anchor=south] at (7.5,0.6) {$a$};
\draw[->, thick] (7.3,0.6) -- (7.7,0.6);
\draw (8,0.6) circle (0.3);
\draw (8,0.6) circle (0.25);
\node at (8,0.6) {$q_8^2$};
\node[anchor=south] at (7.5,-0.6) {$a$};
\draw[->, thick] (7.3,-0.6) -- (7.7,-0.6);
\draw (8,-0.6) circle (0.3);
\draw (8,-0.6) circle (0.25);
\node at (8,-0.6) {$q_8^3$};
\node[anchor=south] at (7.5,-1.8) {$a$};
\draw[->, thick] (7.3,-1.8) -- (7.7,-1.8);
\draw (8,-1.8) circle (0.3);
\node at (8,-1.8) {$q_8^4$};
\node[anchor=south,rotate=-71.56] at (7.5,-2.4) {$b$};
\draw[->, thick] (7.3,-1.8) -- (7.7,-3);
\draw (8,-3) circle (0.3);
\node at (8,-3) {$q_8^5$};
\node[anchor=south,rotate=71.56,red] at (8.5,2.4) {$a$};
\draw[->, thick,red] (8.3,1.8) -- (8.7,3);
\draw (9,3) circle (0.3);
\draw (9,3) circle (0.25);
\node at (9,3) {$q_9^1$};
\node[anchor=south,rotate=71.56] at (8.5,1.2) {$a$};
\draw[->, thick] (8.3,0.6) -- (8.7,1.8);
\draw (9,1.8) circle (0.3);
\node at (9,1.8) {$q_9^2$};
\node[anchor=south] at (8.5,0.6) {$b$};
\draw[->, thick] (8.3,0.6) -- (8.7,0.6);
\draw (9,0.6) circle (0.3);
\node at (9,0.6) {$q_9^3$};
\node[anchor=south] at (8.5,-0.6) {$a$};
\draw[->, thick] (8.3,-0.6) -- (8.7,-0.6);
\draw (9,-0.6) circle (0.3);
\node at (9,-0.6) {$q_9^4$};
\node[anchor=south,rotate=-71.56] at (8.5,-1.2) {$b$};
\draw[->, thick] (8.3,-0.6) -- (8.7,-1.8);
\draw (9,-1.8) circle (0.3);
\node at (9,-1.8) {$q_9^5$};
\node[anchor=south,rotate=-71.56] at (8.5,-2.4) {$b$};
\draw[->, thick] (8.3,-1.8) -- (8.7,-3);
\draw (9,-3) circle (0.3);
\node at (9,-3) {$q_9^6$};
\node[anchor=south,rotate=-71.56] at (8.5,-3.6) {$a$};
\draw[->, thick] (8.3,-3) -- (8.7,-4.2);
\draw (9,-4.2) circle (0.3);
\draw (9,-4.2) circle (0.25);
\node at (9,-4.2) {$q_9^7$};
\node[anchor=south,rotate=71.56] at (9.5,3.6) {$a$};
\draw[->, thick, gray, dashed] (9.3,3) -- (9.7,4.2);
\draw[thick, gray, dashed] (10,4.2) circle (0.3);
\node at (10,4.2) {$q_{10}^1$};
\node[anchor=south,red] at (9.5,3) {$b$};
\draw[->, thick,red] (9.3,3) -- (9.7,3);
\draw (10,3) circle (0.3);
\node at (10,3) {$q_{10}^2$};
\node[anchor=south] at (9.5,1.8) {$b$};
\draw[->, thick] (9.3,1.8) -- (9.7,1.8);
\draw (10,1.8) circle (0.3);
\node at (10,1.8) {$q_{10}^3$};
\node[anchor=south] at (9.5,0.6) {$a$};
\draw[->, thick] (9.3,0.6) -- (9.7,0.6);
\draw (10,0.6) circle (0.3);
\draw (10,0.6) circle (0.25);
\node at (10,0.6) {$q_{10}^4$};
\node[anchor=south] at (9.5,-0.6) {$b$};
\draw[->, thick] (9.3,-0.6) -- (9.7,-0.6);
\draw (10,-0.6) circle (0.3);
\node at (10,-0.6) {$q_{10}^5$};
\node[anchor=south] at (9.5,-1.8) {$a$};
\draw[->, thick] (9.3,-1.8) -- (9.7,-1.8);
\draw (10,-1.8) circle (0.3);
\draw (10,-1.8) circle (0.25);
\node at (10,-1.8) {$q_{10}^6$};
\node[anchor=south] at (9.5,-3) {$a$};
\draw[->, thick] (9.3,-3) -- (9.7,-3);
\draw (10,-3) circle (0.3);
\draw (10,-3) circle (0.25);
\node at (10,-3) {$q_{10}^7$};
\node[anchor=south] at (9.5,-4.2) {$a$};
\draw[->, thick, gray, dashed] (9.3,-4.2) -- (9.7,-4.2);
\draw[thick, gray, dashed] (10,-4.2) circle (0.3);
\node at (10,-4.2) {$q_{10}^8$};
\node[anchor=south,rotate=-71.56] at (9.5,-4.8) {$b$};
\draw[->, thick] (9.3,-4.2) -- (9.7,-5.4);
\draw (10,-5.4) circle (0.3);
\node at (10,-5.4) {$q_{10}^9$};
\node[anchor=south,rotate=71.56] at (10.5,4.8) {$b$};
\draw[->, thick, gray, dashed] (10.3,4.2) -- (10.7,5.4);
\draw[thick, gray, dashed] (11,5.4) circle (0.3);
\node at (11,5.4) {$q_{11}^1$};
\node[anchor=south,rotate=71.56,red] at (10.5,3.6) {$a$};
\draw[->, thick] (10.3,3) -- (10.7,4.2);
\draw (11,4.2) circle (0.3);
\draw (11,4.2) circle (0.25);
\node at (11,4.2) {$q_{11}^2$};
\node[anchor=south,rotate=71.56] at (10.5,2.4) {$a$};
\draw[->, thick] (10.3,1.8) -- (10.7,3);
\draw (11,3) circle (0.3);
\draw (11,3) circle (0.25);
\node at (11,3) {$q_{11}^3$};
\node[anchor=south,rotate=71.56] at (10.5,1.2) {$a$};
\draw[->, thick, gray, dashed] (10.3,0.6) -- (10.7,1.8);
\draw[thick, gray, dashed] (11,1.8) circle (0.3);
\node at (11,1.8) {$q_{11}^4$};
\node[anchor=south] at (10.5,0.6) {$b$};
\draw[->, thick, gray, dashed] (10.3,0.6) -- (10.7,0.6);
\draw[thick, gray, dashed] (11,0.6) circle (0.3);
\node at (11,0.6) {$q_{11}^5$};
\node[anchor=south] at (10.5,-0.6) {$a$};
\draw[->, thick] (10.3,-0.6) -- (10.7,-0.6);
\draw (11,-0.6) circle (0.3);
\draw (11,-0.6) circle (0.25);
\node at (11,-0.6) {$q_{11}^6$};
\node[anchor=south] at (10.5,-1.8) {$a$};
\draw[->, thick, gray, dashed] (10.3,-1.8) -- (10.7,-1.8);
\draw[thick, gray, dashed] (11,-1.8) circle (0.3);
\node at (11,-1.8) {$q_{11}^{7}$};
\node[anchor=south,rotate=-71.56] at (10.5,-2.4) {$b$};
\draw[->, thick,gray, dashed] (10.3,-1.8) -- (10.7,-3);
\draw[thick, gray, dashed] (11,-3) circle (0.3);
\node at (11,-3) {$q_{11}^{8}$};
\node[anchor=south,rotate=-71.56] at (10.5,-3.6) {$a$};
\draw[->, thick, gray, dashed] (10.3,-3) -- (10.7,-4.2);
\draw[thick, gray, dashed] (11,-4.2) circle (0.3);
\node at (11,-4.2) {$q_{11}^{9}$};
\node[anchor=south,rotate=-71.56] at (10.5,-4.8) {$b$};
\draw[->, thick, gray, dashed] (10.3,-3) -- (10.7,-5.4);
\draw[thick, gray, dashed] (11,-5.4) circle (0.3);
\node at (11,-5.4) {$q_{11}^{10}$};
\node[anchor=south,rotate=-71.56] at (10.5,-6) {$b$};
\draw[->, thick, gray, dashed] (10.3,-4.2) -- (10.7,-6.6);
\draw[thick, gray, dashed] (11,-6.6) circle (0.3);
\node at (11,-6.6) {$q_{11}^{11}$};
\node[anchor=south,rotate=-71.56] at (10.5,-7.2) {$a$};
\draw[->, thick] (10.3,-5.4) -- (10.7,-7.8);
\draw (11,-7.8) circle (0.3);
\draw (11,-7.8) circle (0.25);
\node at (11,-7.8) {$q_{11}^{12}$};
\end{tikzpicture}
\end{center}
\end{figure}

\newpage

\end{document}